\numberwithin{equation}{section}
\DeclarePairedDelimiter\ket{\lvert}{\rangle}
\DeclarePairedDelimiterX\braket[2]{\langle}{\rangle}{#1 \delimsize\vert #2}
\numberwithin{equation}{section}
\numberwithin{equation}{section}
\numberwithin{figure}{section}
\def\G{\Gamma}
\def\R{\mathbb{R}}
\def\Rp{\R_+}
\def\one{\mathbbm{1}}
\def\Sc{\mathbb{S}}
\def\Sn{\Sc^n}
\def\Snp{\Sc_+^n}
\def\Snpp{\Sc_{++}^n}
\def\H{\mathbb{H}}
\def\Hp{\mathbb{H}_+}
\def\Hpp{\mathbb{H}_{++}}
\def\Rn{\mathbb{R}^n}
\def\Rnt{\mathbb{R}^{n^2}}
\def\C{\mathbb{C}}
\def\Cnn{\mathbb{C}^{n\times n}}
\def\GN{{\bf GN\,}}
\def\GNp{{\bf GN}}
\def\eqref#1{{\normalfont(\ref{#1})}}
\newtheorem{theorem}{Theorem}[section]
\newtheorem{definition}[theorem]{Definition}
\newtheorem{assump}[theorem]{Assumption}
\newtheorem{prop}[theorem]{Proposition}
\newtheorem{cor}[theorem]{Corollary}
\newtheorem{corollary}[theorem]{Corollary}
\newtheorem{remark}[theorem]{Remark}
\newtheorem{lemma}[theorem]{Lemma}
\crefname{thm}{Theorem}{Theorems}
\Crefname{thm}{Theorem}{Theorems}
\crefname{assump}{Assumption}{Theorems}
\Crefname{assump}{Assumption}{Theorems}
\crefname{problem}{Problem}{Theorems}
\Crefname{problem}{Problem}{Theorems}
\crefname{conjecture}{Conjecture}{Theorems}
\Crefname{conjecture}{Conjecture}{Theorems}
\crefname{proposition}{Proposition}{Propositions}
\Crefname{proposition}{Proposition}{Propositions}
\crefname{prop}{Proposition}{Propositions}
\Crefname{prop}{Proposition}{Propositions}
\crefname{cor}{Corollary}{Corollaries}
\Crefname{cor}{Corollary}{Corollaries}
\crefname{lem}{Lemma}{Lemmas}
\Crefname{lem}{Lemma}{Lemmas}
\theoremstyle{definition}
\crefname{definition}{definition}{definitions}
\Crefname{definition}{Definition}{Definitions}
\crefname{defn}{definition}{definitions}
\Crefname{defn}{Definition}{Definitions}
\crefname{remark}{Remark}{Remarks}
\Crefname{remark}{Remark}{Remarks}
\crefname{rmk}{Remark}{Remarks}
\Crefname{rmk}{Remark}{Remarks}
\crefname{example}{Example}{Examples}
\Crefname{example}{Example}{Examples}
\crefname{align}{}{}
\Crefname{align}{}{}
\crefname{equation}{}{}
\Crefname{equation}{}{}
\newcommand*\bigcdot{\mathpalette\bigcdot@{.5}}
\newcommand*\bigcdot@[2]{\mathbin{\vcenter{\hbox{\scalebox{#2}{$\m@th#1\bullet$}}}}}
\newcommand{\crr}{\color{red}}
\newcommand{\crb}{\color{blue}}
\newcommand{\crg}{\color{green}}
\newcommand{\textdef}[1]{\textit{#1}\index{#1}}
\newcommand{\<}{\langle}
\renewcommand{\>}{{\rangle}}
\newcommand{\cZ}{{\mathcal Z} }
\newcommand{\cR}{{\mathcal R} }
\newcommand{\cL}{{\mathcal L} }
\newcommand{\cA}{{\mathcal A} }
\newcommand{\cV}{{\mathcal V} }
\newcommand{\cG}{{\mathcal G} }
\newcommand{\mV}{{m_{\scriptscriptstyle V}}}
\newcommand{\dGN}{{d_{\scriptscriptstyle GN}}}
\newcommand{\MGV}{{\cM_{\scriptscriptstyle\!{\GV}}}}
\newcommand{\cM}{{\mathcal M} }
\newcommand{\cGUV}{{\cG_{\scriptscriptstyle UV}}}
\newcommand{\cZUV}{{\cZ_{\scriptscriptstyle UV}}}
\newcommand{\cMZ}{{\cM_{\scriptscriptstyle Z}}}
\newcommand{\WcZ}{\widehat \cZ}
\newcommand{\WcG}{\widehat \cG}
\newcommand{\cMrho}{{\cM_{\scriptscriptstyle \rho}}}
\newcommand{\mv}{{m_{\scriptscriptstyle V}}}
\newcommand{\gammaV}{{\gamma_{\scriptscriptstyle V}}}
\newcommand{\gV}{{\gamma_{\scriptscriptstyle V}}}
\newcommand{\GV}{{\Gamma_{\scriptscriptstyle V}}}
\newcommand{\cF}{{\mathcal F} }
\newcommand{\cH}{{\mathcal H} }
\newcommand{\CC}{{\mathcal C} }
\newcommand{\GG}{{\mathcal G} }
\newcommand{\cN}{{\mathcal N} }
\newcommand{\cS}{{\mathcal S} }
\newcommand{\cW}{{\mathcal W} }
\newcommand{\pdipp}{\textbf{p-d i-p}}
\newcommand{\QKDp}{\textbf{QKD}}
\newcommand{\QKD}{\textbf{QKD}\,}
\newcommand{\SDP}{\textbf{SDP}\,}
\newcommand{\SDPp}{\textbf{SDP}}
\newcommand{\FR}{\textbf{FR}\,}
\newcommand{\FRp}{\textbf{FR}}
\newcommand{\Rm}{{\R^m\,}}
\newcommand{\A}{{\mathcal A}}
\newcommand{\bbm}{\begin{bmatrix}}
\newcommand{\ebm}{\end{bmatrix}}
\newcommand{\bem}{\begin{pmatrix}}
\newcommand{\eem}{\end{pmatrix}}
\newcommand{\beq}{\begin{equation}}
\newcommand{\beqs}{\begin{equation*}}
\newcommand{\bet}{\begin{table}}
\newcommand{\eeq}{\end{equation}}
\newcommand{\eeqs}{\end{equation*}}
\newcommand{\beqr}{\begin{eqnarray}}
\newcommand{\Vd}{V_\delta}
\newcommand{\Vs}{V_\sigma}
\DeclareMathOperator{\face}{face}
\DeclareMathOperator{\nul}{null}
\DeclareMathOperator{\range}{range}
\DeclareMathOperator{\sym}{{\cS}}
\DeclareMathOperator{\sksym}{{\mathcal{SK}}}
\DeclareMathOperator{\trace}{{Tr}}
\DeclareMathOperator{\BlkDiag}{{BlkDiag}}
\DeclareMathOperator{\tr}{{Tr}}
\DeclareMathOperator{\diag}{{diag}}
\DeclareMathOperator{\Diag}{{Diag}}
\DeclareMathOperator{\svec}{{svec}}
\DeclareMathOperator{\Hvec}{{Hvec}}
\DeclareMathOperator{\sMat}{{sMat}}
\DeclareMathOperator{\Cvec}{{Cvec}}
\DeclareMathOperator{\HMat}{{HMat}}
\DeclareMathOperator{\relint}{{relint}}
\DeclareMathOperator{\rank}{{rank}}
\DeclareMathOperator{\real}{{\Re}}
\DeclareMathOperator{\imag}{{\Im}}
\DeclareMathOperator{\Tr}{{Tr}}
\newcommand{\nc}{\newcommand}
\nc{\arrow}{{\rm arrow\,}}
\nc{\Arrow}{{\rm Arrow\,}}
\nc{\BoDiag}{{\rm B^0Diag\,}}
\nc{\bodiag}{{\rm b^0diag\,}}
\nc{\Mm}{{\mathcal M}^{m} }
\nc{\Mmn}{{\mathcal M}^{mn} }
\nc{\Mnr}{{\mathcal M}_{nr} }
\nc{\Mnmr}{{\mathcal M}_{(n-1)r} }
\nc{\kwqqp}{Q{$^2$}P\,}
\nc{\kwqqps}{Q{$^2$}Ps}
\def\argmin{\mathop{\rm argmin}}
\nc{\notinaho}{(X,S)\in \overline{AHO}(\A)}
\nc{\inaho}{(X,S)\in AHO(\A)}
\newcommand{\bea}{\begin{eqnarray}}%
\newcommand{\eea}{\end{eqnarray}}%
\newcommand{\beas}{\begin{eqnarray*}}%
\newcommand{\eeas}{\end{eqnarray*}}%
\newcommand{\Int}{{\rm int\,}}
\newcommand{\im}{{\rm Im}\,}%
\newcommand{\Hnp}[1][]{\,\mathbb{H}_+^{\ifthenelse{\equal{#1}{}}{n}{#1}}}
\newcommand{\Hkp}[1][]{\,\mathbb{H}_+^{\ifthenelse{\equal{#1}{}}{k}{#1}}}
\newcommand{\Hnpp}[1][]{\,\mathbb{H}_{++}^{\ifthenelse{\equal{#1}{}}{n}{#1}}}
\newcommand{\Hn}[1][]{\,\mathbb{H}^{\ifthenelse{\equal{#1}{}}{n}{#1}}}
\newcommand{\Hk}[1][]{\,\mathbb{H}^{\ifthenelse{\equal{#1}{}}{k}{#1}}}
\newcommand{\Dn}[1][]{\,\mathbb{D}^{\ifthenelse{\equal{#1}{}}{n}{#1}}}
\begin{document}

\title{Robust Interior Point Method for Quantum Key Distribution Rate Computation}

\author{\href{https://huhao.org}{\color{black}{Hao Hu}}}
\affiliation{Department of Combinatorics and Optimization,   Faculty of Mathematics, University of Waterloo, Waterloo, Ontario, Canada N2L 3G1}
\affiliation{Department of Mathematical Sciences, Clemson University, Clemson, SC, United States 29634}
\author{\href{https://uwaterloo.ca/combinatorics-and-optimization/about/people/j5im}{\color{black}{Jiyoung Im}}}
\affiliation{Department of Combinatorics and Optimization,   Faculty of Mathematics, University of Waterloo, Waterloo, Ontario, Canada N2L 3G1}
\author{Jie Lin}
\affiliation{Institute for Quantum Computing and Department of Physics and Astronomy, University of Waterloo, Waterloo, Ontario, Canada N2L 3G1}
\affiliation{Department of Electrical \& Computer Engineering, University of Toronto, Toronto, Ontario, Canada  M5S 3G4}
\orcid{0000-0003-1750-2659}
\author{Norbert L\"utkenhaus}
\affiliation{Institute for Quantum Computing and Department of Physics and Astronomy, University of Waterloo, Waterloo, Ontario, Canada N2L 3G1}
\orcid{0000-0002-4897-3376}
\author{\href{http://www.math.uwaterloo.ca/~hwolkowi/}{\color{black}{Henry Wolkowicz}}}
\affiliation{Department of Combinatorics and Optimization,   Faculty of Mathematics, University of Waterloo, Waterloo, Ontario, Canada N2L 3G1}

\date{\today}

  \maketitle


\begin{abstract} 
Security proof methods for quantum key distribution, \QKDp,
that are based on the numerical key rate calculation problem, are
powerful in principle. However, 
the practicality of the methods are limited by computational resources and the
efficiency and accuracy of the underlying algorithms for convex optimization. 
We derive a stable reformulation of the convex nonlinear semidefinite
programming, \SDPp, model for the
key rate calculation problems. We use this to develop an efficient,
accurate algorithm. The stable reformulation is based on novel forms of facial reduction, \FRp, for both
the linear constraints and nonlinear quantum relative entropy objective function.
This allows for a Gauss-Newton type interior-point approach that avoids
the need for perturbations to obtain strict feasibility, a technique
currently used in the literature. The result is high
accuracy solutions with theoretically proven lower bounds for the
original \QKD from the \FR stable reformulation. This provides novel
contributions for \FR for general \SDPp.

We report on empirical results that dramatically improve on speed and
accuracy, as well as solving previously intractable problems.
\end{abstract}

{\bf Keywords:}
Key rate optimization, \QKDp, quantum key distribution, 
Semidefinite programming, \SDPp, Gauss-Newton, \GNp, search direction.

{\bf AMS subject classifications:} 
81P17, 81P45, 81P94, 90C22, 90C25, 90C30, 90C59, 94A60.

\tableofcontents
\listoftables
\listoffigures

\newpage

\section{Introduction}\label{sec:intro}

\index{\QKDp, quantum key distribution}

\textdef{Quantum key distribution, \QKDp} (see e.g.,~\cite{Scarani2009,
	Xu2020} for reviews), is 
a secure communication method that distributes a secret key between 
two honest parties (traditionally known as Alice and Bob),
even in the presence of an eavesdropper (traditionally called Eve). Those keys can be used for secure communication, authentication, secure multi-party computation and other cryptographic applications.
Moreover, \QKD is a quantum-resistant (quantum-safe) key establishment
protocol. The need for quantum-resistant cryptography is widely 
recognized given that the threat of quantum computers to current
cryptosystems can become a reality in the future. In contrast to
the area of post-quantum cryptography that is also 
believed to be quantum-resistant,
one can actually prove the information-theoretic security of \QKD protocols, an 
attractive and important feature. 
Moreover, in these security proofs, one need only assume 
that Eve follows the laws of quantum mechanics. Such an all-powerful 
Eve can be assumed to have access to unlimited computational powers,
including not-yet-available quantum computers. The core of a 
proof of security
of any \QKD protocol is to calculate the secret key rate, which is the
number of secret key bits obtained per exchange of a quantum signal.  As
one needs to take into account all possible eavesdropping attacks from
an all-powerful Eve, an analytical calculation of the key rate is
extremely challenging. Such analytical calculations are 
generally limited to protocols 
with high symmetry like the BB84 protocol \cite{Bennett1984}. Moreover, one often has to invoke inequalities to obtain an analytical lower bound of the true key rate. Those inequalities can significantly loose the key rate in many practical parameter regimes of a \QKD protocol.

Fortunately, the key rate problem  
can be formulated as the optimal value of a convex minimization problem.  In general, the size of the problem makes it intractable to find analytical solutions. Therefore, we resort to numerical approaches.
The numerical calculation of the key rate can be done by
finding a \emph{provable tight lower bound} of this convex 
optimization problem.
This is true
for both asymptotic~\cite{coles2016numerical,winick2017reliable} and
finite-size regimes~\cite{george2020finite}.
In principle, any (device-dependent) \QKD protocol
can be analyzed in this numerical framework, including measurement-device-independent, and both
discrete-variable and continuous-variable protocols.
 If we are able to solve the problem numerically without loosing tightness, we can potentially provide tighter key rates, even in situations where some valid analytical bounds are known. This is highly relevant for practical implementations of \QKD protocols as it may enable us to gain key rates without modifying hardwares. This paper continues with the convex optimization approach,
and contributes novel robust strategies for numerical calculation of
key rates that exhibit strong practical performance.

\subsection{Convex Optimization}\label{sec:intro_convex}
The secret key rate convex optimization calculations can be typically performed
after introducing suitable tools to reduce the dimension of the problem, 
e.g.,~the squashing
models~\cite{zhang2021security}, or the dimension reduction 
method~\cite{upadhyaya2021dimension}. These tools are necessary as we are interested in \QKD protocols that have quantum optical implementations, and thus work with infinite-dimensional Hilbert spaces corresponding to optical modes.  In reality, the success of
this security proof method is often limited by computational resources,
as well as the efficiency and accuracy of underlying algorithms.  For a specific protocol, it is also possible to  further reduce the dimension of the key rate calculation problem by using properties of the protocol. For example, as done in \cite{zhang2021security,Li2020}, one can without loss of generality consider only block-diagonal density matrices in the feasible set due to the block-diagonal structure of measurement operators. Then by utilizing properties of the objective function, we can split the key rate calculation problem into several convex optimization problems for individual blocks. Our goal here is to develop a tool that works for general protocols. In applying our method, one can always choose the problem formulation after applying all applicable aforementioned methods to reduce the dimension as the starting point. Furthermore, it should be noted that the finite-size
key rate problem involves a variation of the asymptotic key rate
formulation. For the simplicity of our discussion, we focus on the asymptotic formulation in this paper.

The work in~\cite{winick2017reliable} provides a reliable framework to compute
the key rate using a two-step routine. In the first step, one tries to 
efficiently find a near optimal, feasible point, of the optimization 
problem; see~\Cref{eq:keyrateopt} for the explicit problem formulation. 
In the second step, one then obtains a reliable lower bound from this
feasible point by a linearization and duality argument. In terms of numerical
computation, the bottleneck of this approach
for large-size \QKD problems comes from the
first step, as it involves semidefinite optimization with a nonlinear
objective function. In particular, the work in~\cite{winick2017reliable}
proposes an algorithm  based on the Frank-Wolfe method to solve the
first step. However, this method can converge slowly in practice.
\label{pg:cancnv}We note that in Faybusovich and
Zhou~\cite{Faybusovich2020longstep}, they
also work on providing a more efficient algorithm based on a
long-step path-following interior-point method for the \QKD key rate
calculation problem. However, their discussions are restricted to real
symmetric matrices, while for \QKD key rate calculations, it is important
to handle Hermitian matrices. Although it might be possible to extend
the algorithm in~\cite{Faybusovich2020longstep} to deal with Hermitian
matrices, currently the extension is not done and thus, we cannot
directly compare our algorithms with theirs. 
In addition, the problem formulations used
in~\cite{winick2017reliable,Faybusovich2020longstep}
do not guarantee positive definiteness of the matrices involved in
the objective function. Therefore, they perturb current feasible
solutions by adding a
small multiple of the identity matrix. This perturbation is not required in our
new method in this paper due to our regularization using facial reduction, \FRp.
\footnote{\label{pg:footFZ}
	Following the original submission of this paper, we have
	obtained access to the code in~\cite{Faybusovich2020longstep}. 
	We have observed that
	the code requires a Slater point, as it uses the analytic center as the
	starting point of the algorithm. Therefore, we could not use it for many
	of our instances where strict feasibility fails.
	Moreover, the random problems solved in
	the paper have positive definite (strictly feasible) optimal solutions
	that are relatively close to the analytic center.
	We hope to fully test the code in \cite{Faybusovich2020longstep} 
	on our facially reduced
	problems in a follow-up paper. Hopefully, we can combine the information
	on efficient evaluations of the Hessian in \cite{Faybusovich2020longstep}  
	with our \FR techniques and improve both algorithmic approaches.}

In this paper, we derive a stable reformulation of the convex semidefinite programming, \SDPp, model for the key rate calculation problem of \QKDp. 
We use this to derive efficient,
accurate, algorithms for the problem, in particular, for finding
provable lower bounds for the problem. The stable reformulation is based on
a novel facial reduction, \FRp, approach. We exploit the Kronecker structure
and do \FR first for the linear constraints to guarantee a
positive definite, strictly feasible solution. Second we exploit the
properties of the completely positive maps and do \FR
on the nonlinear, quantum relative entropy objective function,
to guarantee positive definiteness of its arguments.
This allows for a Gauss-Newton type interior-point approach that avoids
the need for perturbations to obtain positive definiteness, 
a technique currently used in the literature. The result is high
accuracy solutions with provable lower (and upper) bounds for the convex
minimization problem. We note that the convex minimization problem
is designed to provide a \emph{lower bound} for the key rate.

\subsection{Outline and Main Results}

\label{sec:mainres}
In~\Cref{sect:prel} we present the preliminary notations and convex
analysis tools that we need.
In particular, we include details about the linear maps
and adjoints  and basic \emph{facial reduction, \FRp}, needed for our 
algorithms; see~\Cref{sect:lintrsadj,sect:confacered}.

The details and need for facial reduction, \FRp, is discussed in~\Cref{sect:FR}. We perform \FR due to the loss of strict feasibility for the linear constraints for some classes of instances, as well as the rank deficiency of the images of the linear maps that consist of the nonlinear objective function. The \FR guarantees that the reformulated model satisfies the strict feasibility of the linear constraints, and the objective function is evaluated at positive definite density matrices.  A partial \FR that stems from singularity encountered from the \emph{reduced density operator constraint} is discussed in \Cref{sect:partialFR}. A second type of \FR performed on the completely positive mappings of the objective function is discussed in \Cref{sec:FRontheobj}. These \FR steps result in a much simplified reformulated model \cref{eq:finalredinrho} for which strict feasibility holds and the objective function arguments preserve positive definiteness. This allows for efficient accurate evaluation of the objective function that uses a spectral decomposition and avoids the less accurate matrix log function evaluation. In addition, we discuss the differentiability of the objective function, both first and second order, in \Cref{sect:derivsquant}.

In~\Cref{sect:intpt} we begin with the optimality conditions and
a projected Gauss-Newton, \GNp, interior point method. 
This uses the modified objective function
that is well defined for positive definite density matrices, $\rho\succ 0$.
We use the \emph{stable} \GN search direction for the primal-dual
interior-point, \pdipp, method. This avoids unstable backsolve steps for
the search direction. We also use a sparsity preserving
nullspace representation for the primal feasibility in ~\Cref{sect:sparsenullrep}.
This provides for exact primal feasibility steps during the algorithm.
Optimal diagonal precondition for the linear system is presented 
in~\Cref{sect:precond}.

\label{pg:lowerbnd}
Our upper and lower bounding techniques are given 
in~\Cref{sect:dualbnds}.
In particular, we provide novel theoretical based lower bounding
techniques for the \FR and original problem
in~\Cref{cor:lowerbound,cor:lowerboundpert}, respectively.\footnote{This
appears to be a novel contribution for general nonlinear convex \SDP
optimization.}

Applications to the security analysis of some selected \QKD protocols
are given in~\Cref{sect:tests}. This includes comparisons
with other codes as well as solutions of problems that could not be
solved previously. We include the lower bounds and
illustrate its strength by including the relative
gaps between lower and upper bounds;
and we compare with the analytical optimal values when it is possible to do so.

We provide concluding remarks in~\Cref{sect:concl}.
Technical proofs, further references and results, appear in
\Cref{app:pfstech,app:compasp}. The details for six protocol examples
used in our tests are given in~\Cref{sect:testexamples}.

A reader, whose main interest is to use the code released from this work to perform \QKD security analysis, can safely skip~\Cref{sect:innerprod,sect:lintrsadj,sect:confacered} after reading \Cref{sect:QKDbackground}. One may refer to \Cref{sect:notations} for notations if necessary. The main results of~\Cref{sect:FR,sect:intpt} are summarized at the beginning of those sections. The final model of the key rate problem after regularization via \FR is given in~\cref{eq:finalredinrho}. The projected Gauss-Newton interior-point algorithm for this model is presented in \Cref{algo:GNalgo} with a less technical explanation given in~\Cref{sect:projGNipalgor}. In terms of reliable lower bound for the original key rate problem, one may be interested in~\Cref{remark:lowerbound}  and \Cref{cor:lowerboundpert}. One can then safely skip details presented in the rest of~\Cref{sect:FR,sect:intpt} and proceed with~\Cref{sect:tests} to compare numerical performance of our method with other existing approaches.

\section{Preliminaries}
\label{sect:prel}
We now continue with the terminology and preliminary background for the
paper as well as presenting the notations. \Cref{sect:innerprod,sect:lintrsadj,sect:confacered} contain 
the convex optimization background material to understand details of our problem reformulation and our projected Gauss-Newton interior-point algorithm. We also include pointers to additional convex
optimization background that appears in~\Cref{app:pfstech}.

\subsection{QKD key rate calculation background}\label{sect:QKDbackground}
The asymptotic key rate $R^{\infty}$ is given by the Devetak-Winter
formula~\cite{Devetak2005} that can be written in the following form \cite{winick2017reliable}:
\begin{equation}\label{eq:dwkeyrate}
	R^{\infty} =  \min_\rho D(\cG(\rho)\|\cZ(\cG(\rho))) - p_{\text{pass}}\delta_{\text{EC}},
\end{equation}
where $D(\delta\|\sigma)=:$\textdef{$f(\delta,\sigma)= 
	\tr \left( \delta (\log \delta - \log \sigma)\right)$}
is the quantum relative entropy,
\textdef{$p_{\text{pass}}$} is the probability that a given signal is
used for the key generation rounds, and \textdef{$\delta_{\text{EC}}$} is the cost of error correction per round. The last two parameters are directly determined by observed data and thus are not a part of the optimization. Thus, the essential task of the quantum key distribution rate computation is to solve the following nonlinear convex semidefinite program:
\begin{equation}\label{eq:keyrateopt}
	\begin{array}{rcllll}
		&& \min_\rho &
		D(\cG(\rho)\|\cZ(\cG(\rho))) \\
		&&\text{s.t.} & \Gamma(\rho) = \gamma,
		\\  &&&  \rho \succeq 0,
	\end{array}
\end{equation}where
 \index{$m$, number of linear constraints}
$\Gamma: \H^{n} \rightarrow \R^{m}$ is a linear map 
defined by $\Gamma(\rho) = \left(\tr (\G_i \rho)\right)$;
$\Hn$ is the linear space of Hermitian matrices over the reals;
and $\gamma \in \Rm$. In this problem, $\{\Gamma_i\}$ is a set of
Hermitian matrices corresponding to physical observables. The
data pairs  $\Gamma_i, \gamma_i$ are known observation statistics that
include the $\trace (\rho) =1$ constraint.  The maps
$\cG$ and $\cZ$ are linear, completely positive maps that are specified
according to the description of a \QKD protocol. In general, $\cG$ is
\label{pgno:trace} 
trace-non-increasing, while $\cZ$ is trace-preserving and its Kraus
operators are a resolution of identity. The maps are usually represented via
the so-called operator-sum (or Kraus operator) representation.
(More details on these representation are given below as
needed; see also~\Cref{def:operGandZ}.)

In other words, the optimization problem is of the form in~\cref{eq:absQKD}:
\begin{equation}
\label{eq:absQKD}
 \qquad \min \{f(\rho) : \Gamma(\rho)  =\gamma, \ \rho\succeq 0 \},
\end{equation}
where the objective function $f$ is the quantum relative entropy 
function as shown in \cref{eq:dwkeyrate}, and the constraint set is a spectrahedron, i.e.,~the 
intersection of an affine manifold and the positive semidefinite cone. 
The affine manifold is defined using the linear map
for the linear equality constraints in~\cref{eq:absQKD}:
\[
\Gamma(\rho) = \left(\tr (\G_i \rho)\right), i=1,\ldots, m, \quad
\Gamma : \Hn \to \Rm.
\]
These are divided into two sets: the observational and reduced density operator
constraint sets, i.e.,~$S_O\cap S_R$.

\label{pg:states}
The set of states $\rho$ satisfying the \textdef{observational constraints} is given by
\index{$S_O$, observational constraints}
\index{constraint sets}
\index{constraint sets!observational, $S_O$}
\index{constraint sets!reduced density, $S_R$}
\index{$\rho$, state} 
\begin{equation}
	\label{eq:constrobser}
	S_O = \left\{\rho \succeq 0 \,:\, \langle P^A_s\otimes P^B_t,\rho\rangle =
	p_{st},\, \forall st\right\},
\end{equation}
where we let \textdef{$n_A,n_B$} be the sizes of
$P^A_s\in \H^{n_A},P^B_t\in \H^{n_B}$, respectively; and
we denote the \textdef{Kronecker product, $\otimes$}.
We set \textdef{$n=n_An_B$ which is the size of $\rho$}. 
\index{size of $\rho$, $n=n_An_B$}
\index{$\otimes$, Kronecker product}

The set of states $\rho$ satisfying the constraints with respect to the
\emph{reduced density operator, $\rho_A$}, is
\index{reduced density operator constraint, $S_R$}
\index{$S_R$, reduced density operator constraint} 
\begin{equation}	\label{eq:constrreduced}
\begin{array}{rrl}
S_R &=& \left\{\rho\succeq 0 \,:\, \tr_{B}(\rho) = \rho_{A} \right\} \\
	&=&\left\{\rho\succeq 0 \,:\, 		\langle \Theta_j\otimes \one_B,\rho\rangle = \theta_j,\, \forall j = 1,\ldots,m_R\right\},
\end{array}
\end{equation}
where  $\theta_j = \langle \Theta_j,\rho_A\rangle$ and $\{\Theta_j\}$ 
forms an orthonormal basis for the real vector space of Hermitian matrices 
on system A. This implicitly defines the linear map and
constraint in \textdef{$\tr_B(\rho) = \rho_A$}. Here we denote the identity matrix \textdef{$\one_B \in \H^{n_B}$}.

Here, we may assume that $\Gamma_1 = I$ and $\gamma_1 =1$ to guarantee that 
we restrict our variables to \textdef{density matrices}, i.e.,~semidefinite 
and unit trace. (See~\cite[Theorem 2.5]{MR1796805}.)

\subsection{Notations}\label{sect:notations}

\index{$\Sn$, set of real symmetric $n$-by-$n$ matrices}
\index{set of real symmetric $n$-by-$n$ matrices, $\Sn$}
\index{$\real(X)$, real part of $X$}
\index{real part of $X$, $\real(X)$}
\index{$\imag(X)$, imaginary part of $X$}
\index{imaginary part of $X$, $\imag(X)$}
\index{$\Snp$, positive semidefinite cone of $n$-by-$n$  real symmetric matrices}
\index{positive semidefinite cone of $n$-by-$n$  real symmetric matrices, $\Snp$}
\index{$\Snpp$, positive definite cone of $n$-by-$n$  real symmetric matrices}
\index{positive definite cone of $n$-by-$n$  real symmetric matrices, $\Snpp$}
\index{$\Hnp$, positive semidefinite cone of $n$-by-$n$  Hermitian matrices}
\index{positive semidefinite cone of $n$-by-$n$  Hermitian matrices, $\Hnp$}
\index{$\Hnpp$, positive definite cone of $n$-by-$n$ Hermitian matrices}
\index{positive definite cone of $n$-by-$n$ Hermitian matrices, $\Hnpp$}
\index{$X\succeq 0$}
\index{$X\succ 0$}

We use $\Cnn$ to denote the space of $n$-by-$n$ complex matrices, and
$\Hn$ to denote the \emph{subset} of $n$-by-$n$ Hermitian matrices; we use
$\H$ when the dimension is clear. We use $\Sn,\Sc$ for the subspaces of $\Hn$
of real symmetric matrices. Given a matrix $X\in \Cnn$, we use $\real(X)$ and $\imag(X)$ to denote the real and the imaginary parts of $X$, respectively. 
We use $\Hnp,\Snp$ ($\Hnpp, \Snpp$, resp) to denote the positive
semidefinite cone (the positive definite cone, resp); and again
we leave out the dimension when it is clear.
We use the partial order notations $X\succeq 0,
X\succ 0$ for semidefinite and definite, respectively.
We let $\Rn$ denote the usual vector space of real $n$-coordinates; 
$\mathcal{P}_C(X)$ denotes the projection of $X$ onto the 
closed convex set $C$.
For a matrix $X$, we use $\range(X)$ and $\nul(X)$ to denote the
\textdef{range} and the \textdef{nullspace} of $X$, respectively. 
We let \textdef{$\BlkDiag$}$(A_1,A_2,\ldots,A_k)$ denote the block  diagonal matrix with diagonal blocks $A_i$.
\index{$\Rn$, vector space of real $n$-coordinates} 
\index{vector space of real $n$-coordinates, $\Rn$} 
\index{$\mathcal{P}_C(X)$, projection of $X$ onto $C$}
\index{$\range(X)$, range of $X$}
\index{$\nul(X)$, nullspace of $X$}
\index{$\BlkDiag(A_1,A_2)$, block diagonal matrix with diagonal blocks $A_1,A_2$}

\subsection{Real Inner Product Space $\Cnn$ }
\label{sect:innerprod}

In general, $\Hn$ is not a subspace of $\Cnn$ unless we treat both as
vector spaces over $\R$. To do this we define a 
\textdef{real inner product in $\Cnn$} that takes the standard inner
products of the real and imaginary parts:
\begin{equation}
\label{def:complex_innerprod}
\begin{array}{rcl}
\langle Y,X\rangle 
&=&
\langle\Re(Y), \Re(X)\rangle+\langle\imag(Y), \imag(X)\rangle 
\\&=&
\Tr \left(\Re(Y)^\dagger \Re(X)\right)+\Tr \left(\imag(Y)^\dagger \imag(X)\right)
\\&=&
\Re\left(\Tr (Y^\dagger  X)\right).
\end{array}
\end{equation}
We note that
\[
\Re(\<Y,X\> ) = \<\Re(Y), \Re(X) \> + \< \imag(Y),   \imag(X) \>,\quad
\imag(\<Y,X\> ) = -  \<\Re(Y),  \imag(X) \> +  \< \imag(Y), \Re(X)  \> .
\]
Over the reals, $\dim (\Hn) = n^2, \dim (\Cnn) = 2n^2$. The induced 
norm is the Frobenius norm $\|X\|^2_F = \<X,X\>=\trace \left(X^\dagger X\right)$, where
we denote the \textdef{conjugate transpose, $\cdot^\dagger $}.
\index{$\cdot^\dagger $, conjugate transpose}
\index{$\|X\|_F$, Frobenius norm}
\index{Frobenius norm, $\|X\|_F$}

\subsection{Linear Transformations and Adjoints}
\label{sect:lintrsadj}

Given a linear map $\cL: \mathcal{D} \to \mathcal{R}$, we call the
unique linear map $\cL^\dagger : \mathcal{R} \to \mathcal{D}$  the \textdef{adjoint} of $\cL$, if it satisfies
\[
\left\langle \cL (X),Y\right\rangle = \left\langle X,\cL^\dagger  (Y)\right\rangle, \, \forall X \in \mathcal{D},Y\in \mathcal{R}.
\]
Often in our study, we use vectorized computations instead of using complex matrices directly. In order to relieve the computational burden, we use isomorphic and isometric realizations of matrices by ignoring the redundant entries.
We consider $\H^n$ as a vector space of dimension $n^2$ over the reals. We
define $\Hvec(H)\in \Rnt$ by stacking $\diag(H)$ followed by
$\sqrt 2$ times the strict upper triangular
parts of $\real(H)$ and $\imag(H)$, both columnwise:
\[
\Hvec(H) = \begin{pmatrix}
\diag(H)\cr
\sqrt 2 \real(upper(H))\cr
\sqrt 2  \imag(upper(H))
\end{pmatrix}\in \Rnt, \quad \HMat = \Hvec^{-1} = \Hvec^\dagger  .
\]
We note that for the real symmetric matrices $\Sn$, we can use the first
\textdef{triangular number, $t(n)=n(n+1)/2$} of elements in $\Hvec$,
and we denote this
by \textdef{$\svec$}$(S)\in \R^{t(n)}$, with adjoint \textdef{$\sMat$}.
\index{$t(n)=n(n+1)/2$, triangular number}

\index{$\cL^\dagger $, adjoint of $\cL$}
\index{adjoint of $\cL$, $\cL^\dagger $}

We use various linear maps in an \SDP framework. For given $\Gamma_i\in \Hn, i=1,\ldots,m$, define 
\[
\Gamma : \Hn \to \Rm \text{ by }
\Gamma (H) = (\langle \Gamma_i,H\rangle)_i \in \Rm.
\]
The adjoint satisfies
\[
\langle \Gamma (H),y\rangle = \sum_i y_i \Tr (\Gamma_iH) = 
\Tr \left( H \left(\sum_i y_i \Gamma_i\right) \right)= 
\left\langle  H, \Gamma^\dagger  (y) \right\rangle.
\]
The matrix representation $A$ of $\Gamma$ is found from
\[
(A \Hvec(H))_i = (\Gamma (H))_i = \<\Gamma_i,H\> = \left\langle\Hvec (\Gamma_i),\Hvec (H)\right\rangle,
\]
i.e., for $g_i=\Hvec (\Gamma_i),\ \forall i$ and $h=\Hvec (H)$, 
\[
\Gamma(H) \equiv A(h), \ \text{ where } A=\begin{bmatrix} g_1^\dagger\cr \vdots\cr g_m^\dagger \end{bmatrix}.
\]
Specialized adjoints for matrix multiplication are given 
in~\Cref{sect:adjMM}.

\subsection{Cones, Faces, and Facial Reduction, \FRp}
\label{sect:confacered}

The facial structure of the semidefinite cone is well understood. We
outline some of the concepts we need for facial reduction
and exposing vectors (see e.g.,~\cite{DrusWolk:16}).
We recall that a \textdef{convex cone} $K$ is defined by: $\lambda K\subseteq K,
\forall \lambda \geq 0, \, K+K\subset K$, i.e.,~it is a cone and so
contains all rays, and it is a convex set.  \index{dual cone, $S^\dagger $}
For a set $S\subseteq \H$ we denote the \textdef{dual cone},
$S^\dagger  = \{\phi \in \H : \langle \phi,s\rangle \geq 0, \ \forall s \in
S\}$.
\index{$S^\dagger $, dual cone}
\index{$\unlhd$}
\begin{definition}[{\textdef{face}}]
A convex cone $F$ is a face of a convex cone $K$, denoted $F\unlhd K$,
if 
\[
x,y\in K, x+y\in F \implies x,y \in F.
\]
Equivalently, for a general convex set $K$ and convex subset $F\subseteq
K$, we have $F\unlhd K$, if 
\[
[x,y] \subset K, z\in \relint [x,y], z\in F \implies [x,y]\subset F,
\]
where $[x,y]$ denotes the line segment joining $x,y$.
\end{definition}
\index{line segment, $[x,y]$}
\index{$[x,y]$, line segment}

Faces of the positive semidefinite cone are characterized by the range
or nullspace of any element in the relative interior of the faces. 
In fact, the following characterizations in \Cref{lem:propfaces}
hold.

\index{$\unlhd$}
\begin{lemma}
\label{lem:propfaces}
Let $F$ be a convex subset of $\Hp^n$ with $X\in \relint F$. Let
\[
X = \begin{bmatrix} P & Q\end{bmatrix}
 \begin{bmatrix} D & 0 \cr 0 & 0\end{bmatrix}
 \begin{bmatrix} P & Q\end{bmatrix}^\dagger 
\]
be the orthogonal spectral decomposition with $D\in \Hpp^r$.
Then the following are equivalent:
\begin{enumerate}
\item
$F\unlhd \Hp^n$;
\item
\label{item:facesranges}
$F= \{ Y \in \Hp^n \,:\, \range(Y) \subset \range(X) \}
= \{ Y \in \Hp^n \,:\, \nul(Y) \supset \nul(X) \}$;
\item
\label{item:facialvec}
$F= P \Hp^r P^\dagger $;
\item
\label{item:expvect}
$F= \Hp^n \cap (Q Q^\dagger )^\perp$.
\end{enumerate}
\end{lemma}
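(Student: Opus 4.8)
The plan is to show that items (2), (3), and (4) all describe one and the same set, call it $\hat F$; then to verify directly that $\hat F$ is a face of $\Hp^n$; and finally to show that any face $F$ with $X\in\relint F$ must equal $\hat F$. The unifying device is the unitary matrix $\begin{bmatrix}P&Q\end{bmatrix}$, so that $PP^\dagger$ and $QQ^\dagger$ are the orthogonal projectors onto $\range(X)$ and $\nul(X)=\range(X)^\perp$, respectively. The single technical fact I would lean on repeatedly is that for $Y_1,Y_2\succeq 0$ one has $\range(Y_1+Y_2)=\range(Y_1)+\range(Y_2)$.

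For the equivalence of (2)--(4): since $X$ and every candidate $Y$ are Hermitian, $\range(Y)=\nul(Y)^\perp$, so $\range(Y)\subseteq\range(X)\iff\nul(Y)\supseteq\nul(X)$, which is the two-sided description in (2). If $Y=PSP^\dagger$ with $S\in\Hp^r$ then $\range(Y)\subseteq\range(P)=\range(X)$; conversely $\range(Y)\subseteq\range(X)$ forces $Q^\dagger Y=0$, hence $Y=P(P^\dagger Y P)P^\dagger$ with $P^\dagger Y P\succeq 0$, giving (3). Finally, by the real inner product \cref{def:complex_innerprod}, $\langle QQ^\dagger,Y\rangle=\Tr(Q^\dagger Y Q)$, and for $Y\succeq 0$ this vanishes iff $YQ=0$, i.e.\ iff $\nul(Y)\supseteq\range(Q)=\nul(X)$; this matches the nullspace form of (2) and yields (4).

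Next I would check $\hat F\unlhd\Hp^n$. If $Y_1,Y_2\succeq 0$ satisfy $Y_1+Y_2\in\hat F$, then the range-of-sum identity gives $\range(Y_i)\subseteq\range(Y_1+Y_2)\subseteq\range(X)$, so $Y_1,Y_2\in\hat F$; this is exactly the defining property of a face. This already establishes (2,3,4)$\Rightarrow$(1).

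For (1)$\Rightarrow F=\hat F$ I argue two inclusions. To get $F\subseteq\hat F$, take $Y\in F$; since $X\in\relint F$ there is $\epsilon>0$ with $Z:=(1+\epsilon)X-\epsilon Y\in F\subseteq\Hp^n$, so $(1+\epsilon)X=Z+\epsilon Y$ is a sum of two positive semidefinite matrices and the range-of-sum identity forces $\range(Y)\subseteq\range(X)$, i.e.\ $Y\in\hat F$. For $\hat F\subseteq F$, take $Y\in\hat F$; because $X=PDP^\dagger$ with $D\succ 0$ we have $X\in\relint\hat F$, so for small $\epsilon>0$ the point $W:=(1+\epsilon)X-\epsilon Y$ still lies in $\hat F\subseteq\Hp^n$. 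Then $X=\tfrac{1}{1+\epsilon}W+\tfrac{\epsilon}{1+\epsilon}Y$ sits in $\relint[W,Y]$ with $[W,Y]\subseteq\Hp^n$ and $X\in\relint F\subseteq F$, so the segment form of the face definition forces $[W,Y]\subseteq F$ and in particular $Y\in F$. The step I expect to be the crux is this reverse inclusion: one must produce the extension point $W$ from $X\in\relint\hat F$ and then invoke the segment characterization of a face at $X$; the auxiliary identity $\range(Y_1+Y_2)=\range(Y_1)+\range(Y_2)$ for positive semidefinite matrices is the recurring workhorse underlying both inclusions and the face property of $\hat F$.
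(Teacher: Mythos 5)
Your proof is correct and complete. Note that the paper itself does not prove \Cref{lem:propfaces}: it states the characterizations as known facts and points the reader to the facial-reduction literature (\cite{DrusWolk:16}), so there is no in-paper argument to compare against. Your route is the standard one and fills that gap cleanly: the equivalence of the range, compressed, and exposing-vector descriptions via the unitary $\begin{bmatrix}P&Q\end{bmatrix}$; the face property of $\hat F$ from the identity $\range(Y_1+Y_2)=\range(Y_1)+\range(Y_2)$ for $Y_1,Y_2\succeq 0$ (which the paper itself records and uses later as \cref{eq:ABpsdrange}); and the two inclusions for $F=\hat F$ obtained by pushing slightly past the relative-interior point $X$ and invoking, respectively, the cone form and the segment form of the face definition. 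The one step worth stating explicitly rather than asserting is that $X\in\relint\hat F$; this follows because $S\mapsto PSP^\dagger$ is a linear isomorphism of $\H^r$ onto its image, so it carries $\relint\Hp^r=\Hpp^r\ni D$ onto $\relint(P\Hp^rP^\dagger)$.
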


\label{pg:compactspectral}
The matrix $P$, in \Cref{item:facialvec} of \Cref{lem:propfaces},  allows us to represent any matrix $Y\in F\unlhd \Hnp$ 
as a \textdef{compact spectral decomposition}, i.e., $Y = P D P^\dagger$ 
with diagonal $D \in \H_{+}^r$.
This compact representation leads to a reduction in
the variable dimension. The matrix $QQ^\dagger $, in \Cref{item:expvect} of \Cref{lem:propfaces}, is called an
\textdef{exposing vector} for the face $F$.
Exposing vectors come into play throughout~\Cref{sect:FR}.

\begin{definition}[minimal face]
Let $K$ be a closed convex cone and let $X \subseteq K$. Then $\face(X)\unlhd
K$ is the \emph{minimal face}, the intersection of all faces of $K$ that
contain $X$.
\end{definition}
\index{$\face(X)$, minimal face}
\index{minimal face, $\face(X)$}

Facial reduction is a process of identifying the minimal face of $\Hp^n$
containing the spectrahedron $\{\rho:\Gamma(\rho) = \gamma\}\cap \Hnp$.
\Cref{lem:FRfarkas} plays an important role in the heart of the facial
reduction process. Essentially, either there exists a strictly
feasible $\rho$,  or the alternative holds that there exists
a linear combination of the $\Gamma_i$ that is positive semidefinite but
has a zero expectation.

\begin{lemma}[theorem of the alternative, {\cite[Theorem 3.1.3]{DrusWolk:16}}]  
\label{lem:FRfarkas}
For the feasible constraint system in \eqref{eq:absQKD}, exactly one of the following statements holds:
\begin{enumerate}
\item there exists $\rho\succ 0$ such that $\Gamma(\rho) = \gamma$;
\item  there exists $y$ such that 
\begin{equation}
\label{eq:little_auxsystem}
0 \ne \Gamma^\dagger (y) \succeq 0 \ , \ \<\gamma,y\> = 0.
\end{equation}
\end{enumerate}

\end{lemma}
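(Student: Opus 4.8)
The plan is to prove this as a conic theorem of the alternative via a separating-hyperplane argument, working throughout in the real Euclidean space $\Hn$ equipped with the real inner product of \cref{def:complex_innerprod}, in which the cone $\Hnp$ is closed, convex, and self-dual.

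First I would dispose of the easy direction: the two statements cannot hold simultaneously. If $\rho\succ 0$ satisfies $\Gamma(\rho)=\gamma$ and $y$ satisfies $0\neq\Gamma^\dagger(y)\succeq 0$ with $\langle\gamma,y\rangle=0$, then
\[
0=\langle\gamma,y\rangle=\langle\Gamma(\rho),y\rangle=\langle\rho,\Gamma^\dagger(y)\rangle>0,
\]
since the inner product of a positive definite matrix with a nonzero positive semidefinite matrix is strictly positive, a contradiction. Hence at most one alternative holds.

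For the substantive direction, assume the first alternative fails, so that the affine set $L=\{\rho\in\Hn:\Gamma(\rho)=\gamma\}$ is disjoint from the open convex cone $\Hnpp$. Since $\Hnpp$ is open and nonempty, hence full-dimensional, a separating-hyperplane theorem yields a nonzero $\phi\in\Hn$ and a scalar $c$ with $\langle\phi,X\rangle\geq c$ for all $X\in\Hnpp$ and $\langle\phi,\rho\rangle\leq c$ for all $\rho\in L$. I would then extract three facts. First, positivity: because $\Hnpp$ is a cone, rescaling $X\mapsto tX$ and letting $t\to\infty$ forces $\langle\phi,X\rangle\geq 0$ on $\Hnpp$, hence on its closure $\Hnp$, and self-duality gives $\phi\succeq 0$, with $\phi\neq 0$ guaranteed by the separation. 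Second, range membership: since $\langle\phi,\cdot\rangle$ is bounded above on the affine set $L$, it must be constant on $L$, so $\phi$ is orthogonal to the direction space $\nul(\Gamma)$; thus $\phi\in\range(\Gamma^\dagger)$ and $\phi=\Gamma^\dagger(y)$ for some $y$. Third, letting $t\to 0^+$ in the cone argument gives $c\leq 0$, so the constant value $\langle\phi,\rho\rangle=\langle\Gamma^\dagger(y),\rho\rangle=\langle y,\gamma\rangle$ on $L$ satisfies $\langle\gamma,y\rangle\leq 0$.

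Finally I would invoke feasibility of the system: there is some $\rho_0\succeq 0$ with $\Gamma(\rho_0)=\gamma$. Then $\langle\gamma,y\rangle=\langle\Gamma^\dagger(y),\rho_0\rangle=\langle\phi,\rho_0\rangle\geq 0$, because $\phi$ and $\rho_0$ are both positive semidefinite. Combined with $\langle\gamma,y\rangle\leq 0$, this pins down $\langle\gamma,y\rangle=0$, delivering the second alternative. The main obstacle I anticipate is the bookkeeping around the separation: ensuring the functional is genuinely \emph{constant} on $L$ (so that it lies in $\range(\Gamma^\dagger)$ rather than merely bounded above on it), and recognizing that feasibility is precisely what upgrades $\langle\gamma,y\rangle\leq 0$ to the equality. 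Care is also needed to confirm that $\Hnp$ is self-dual under the real inner product at hand, since we are treating the Hermitian matrices as a real vector space rather than a complex one.
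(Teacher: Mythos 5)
Your argument is correct and complete. Note, however, that the paper does not prove this lemma at all: it is quoted directly from \cite[Theorem 3.1.3]{DrusWolk:16}, so there is no in-paper proof to compare against. Your separating-hyperplane derivation is essentially the standard proof of that cited Farkas-type result: mutual exclusivity via $0=\langle\gamma,y\rangle=\langle\rho,\Gamma^\dagger(y)\rangle>0$; then, when no Slater point exists, separation of the nonempty affine set $L$ from the open cone $\Hnpp$, the cone-scaling limits $t\to\infty$ and $t\to 0^+$ to get $\phi\succeq 0$ and $c\le 0$, boundedness of a linear functional on an affine set forcing $\phi\in\nul(\Gamma)^\perp=\range(\Gamma^\dagger)$, and finally the assumed feasibility ($\rho_0\succeq 0$ with $\Gamma(\rho_0)=\gamma$) upgrading $\langle\gamma,y\rangle\le 0$ to equality. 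You correctly identified the two points where care is needed (constancy on $L$ rather than mere boundedness, and self-duality of $\Hnp$ under the real trace inner product of \cref{def:complex_innerprod}); both check out, and the value added by your write-up is that it makes the lemma self-contained rather than an external citation.
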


In \Cref{lem:FRfarkas}, the matrix $\Gamma^\dagger (y)$ is an exposing vector 
for the face containing the constraint set in \eqref{eq:absQKD}.

\section{Problem Formulations and Facial Reduction}
\label{sect:FR}

\index{$\Hn$, set of $n$-by-$n$ Hermitian matrices}
\index{set of $n$-by-$n$ Hermitian matrices, $\Hn$}

The original problem formulation is given in ~\cref{eq:keyrateopt}.
Without loss of generality we can assume that the
feasible set, a \textdef{spectrahedron}, is nonempty. This is because
our problem is related to a physical scenario, and 
we can trivially set the
key rate to be zero when the feasible set is empty. 
Note that the Hermitian (positive semidefinite, density) matrix
$\rho$ is the only variable in the above~\cref{eq:keyrateopt}
optimization problem. Motivated by the fact that the mappings $\cG,
\cZ\circ \cG$ are positive semidefinite preserving but possibly
not positive definite preserving, we rewrite
\cref{eq:keyrateopt} as follows:\footnote{This allows us to regularize 
	below using facial reduction, \FRp.}
\begin{equation}
	\label{eq:keyrateoptequivsigdel} \quad
	\begin{array}{rcllll}
		&& \min_{\rho,\sigma,\delta} &
		\trace(\delta(\log \delta -\log \sigma)) \\
		&&\text{s.t.} & \Gamma(\rho) =  \gamma
		\\ &&&  \sigma = \cZ(\delta) 
		\\ &&&  \delta = \cG(\rho) 
		\\  &&& \rho, \sigma, \delta \succeq 0.
	\end{array}
\end{equation}
 Due to the structure of the linear mapping $\cG$, the matrix $\delta$
is often singular in~\cref{eq:keyrateoptequivsigdel}. Therefore,
strict feasibility fails in~\cref{eq:keyrateoptequivsigdel}.
This indicates that the objective function, the
\textdef{quantum relative entropy function} is evaluated on
singular matrices in both~\cref{eq:keyrateopt,eq:keyrateoptequivsigdel},
creating theoretical
and numerical difficulties. In fact, the domain of the problem
that guarantees finiteness for the objective function, requires
restrictions on the ranges of the linear mappings.
By moving back and forth between equivalent formulations of the types
in~\cref{eq:keyrateopt,eq:keyrateoptequivsigdel}, we derive a
regularized model that simplifies type~\cref{eq:keyrateopt}, and where positive
definiteness is preserved. In particular, the regularization allows for an
efficient interior point method even though the objective
function is not differentiable on the boundary of the
semidefinite cone. This allows for efficient algorithmic developments.
In addition, this enables us to accurately solve previously
intractable problems.

We now present the details on various formulations of \QKD
from~\cref{eq:keyrateopt,eq:keyrateoptequivsigdel}. We show
that facial reduction allows for regularization of both the constraints 
and the objective function, i.e.,~this means that we have positive
\emph{definite} feasible points, and a proper domain for the objective
function with positive definite matrices. This obviates the need for
adding perturbations of the identity.
We include results about
\FR for positive transformations and show that highly accurate \FR can
be done in these cases.

 In particular, we provide a regularized
reformulation of our problem, see~\cref{eq:finalredinrho}, and the
regularization statement that guarantees positive definiteness, 
see~\cref{eq:rho_welldef}.

\subsection{Properties of Objective Function and Mappings $\cG,\cZ$}
\label{sect:objfn}

The \textdef{quantum relative entropy function} 
$D : \Hnp \times \Hnp \to \Rp\cup \{+\infty\}$ is denoted by
$ D(\delta || \sigma)$, and is defined as
\begin{equation}
\label{eq:objective}
D(\delta || \sigma ) = 
\left\{
\begin{array}{ll}
\trace(\delta \log \delta)-\trace (\delta \log \sigma) & \text{if } \range(\delta)\cap \nul(\sigma) = \emptyset \\
\infty & \text{otherwise.}
\\
\end{array}
\right.  
\end{equation}
That the quantum relative entropy $D$ is finite if $\range(\delta)\subseteq \range(\sigma)$ is shown
by extending the matrix log function to be $0$ on the nullspaces of
$\delta, \sigma$. (See~\cite[Definition 5.18]{watrous_2018}.)
It is known that $D$ is nonnegative, equal to $0$ if, and only if,
$\delta =\sigma$, and is jointly convex in both $\delta$ and
$\sigma$, see ~\cite[Section 11.3]{MR1796805}.
\label{pg:delta}

\begin{definition}
\label{def:operGandZ}
The linear map \textdef{$\GG: \Hn \to \Hk$} is defined as a
sum of matrix products (\textdef{Kraus representation})
\begin{equation}
\label{eq:Gmap}
\displaystyle\cG(\rho) := \sum_{j=1}^{\ell} K_j \rho K_j^\dagger ,
\end{equation}
where $K_j \in \mathbb{C}^{k \times n}$ and $\sum_{j=1}^{\ell} K_j^\dagger  K_j \preceq I$. The adjoint is $\cG^\dagger (\delta) := \sum_{j=1}^{\ell} K_j^\dagger  \delta K_j$.
\end{definition}
Typically we have $k>n$ with $k$ being a multiple of $n$; and thus we can have $\cG(\rho)$ rank deficient for
all $\rho\succ 0$. 

\begin{definition}
	\label{def:operGandG}
	The self-adjoint (projection) linear map \textdef{$\mathcal{Z}:
\Hk \to \Hk$} is defined as the sum
	\index{$\cZ : \Hk \to \Hk$}
	\begin{equation}
	\label{eq:Zmap}
	\displaystyle\cZ(\delta) := \sum_{j=1}^{N} Z_j \delta Z_j,
	\end{equation}
	where $Z_j = Z_j^2 = Z_j^\dagger  \in \Hkp$ and $\sum_{j=1}^N Z_j = I_k$.
\end{definition}

Since $\sum_{j=1}^N Z_j = I_k$, the set $\{Z_i\}_{j=1}^N$ is a 
\textdef{spectral resolution of $I$}. \Cref{prop:Zopprops} below states
some interesting properties of the operator $\cZ$; see
also~\cite[Appendix C, (C1)]{Coles2012}.

\begin{prop}
\label{prop:Zopprops}
The linear map $\cZ$  in \Cref{def:operGandG}
is an orthogonal projection on $\Hk$. Moreover, for $\delta \succeq 0$,
\begin{equation}
\label{eq:deltaZdeltalog}
\trace \left(\delta \log \cZ (\delta)\right) = \trace \left(\cZ (\delta) \log \cZ (\delta)\right) .
\end{equation}

\end{prop}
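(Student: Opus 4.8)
The plan is to treat the two assertions separately, deriving both from one structural fact: the $Z_j$ are mutually orthogonal projections. First I would establish this orthogonality. Since each $Z_j$ is an orthogonal projection ($Z_j = Z_j^2 = Z_j^\dagger$) and $\sum_j Z_j = I_k$, multiplying the resolution of identity on the left by $Z_i$ and using $Z_i^2 = Z_i$ gives $\sum_{j \neq i} Z_i Z_j = 0$. Sandwiching this with $Z_i$ on the right yields $\sum_{j\neq i} Z_i Z_j Z_i = 0$, a sum of positive semidefinite terms $Z_i Z_j Z_i = (Z_j Z_i)^\dagger (Z_j Z_i) \succeq 0$ (here I use $Z_j^2 = Z_j$); hence each term vanishes and $Z_j Z_i = 0$ for all $i \neq j$. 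This step, converting ``projections summing to the identity'' into ``mutually orthogonal'', is the one genuinely nontrivial move, and the positive-semidefinite-sum-equals-zero trick is what makes it clean.

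With orthogonality in hand, showing $\cZ$ is an orthogonal projection is routine. For idempotency I would expand $\cZ^2(\delta) = \sum_{i,j} Z_j Z_i \delta Z_i Z_j$ and note that orthogonality collapses the double sum to the diagonal $i=j$, giving $\sum_i Z_i \delta Z_i = \cZ(\delta)$. For self-adjointness on $\Hk$ I would use that the Hilbert--Schmidt adjoint of $X \mapsto Z_i X Z_i$ is $Y \mapsto Z_i^\dagger Y Z_i^\dagger = Z_i Y Z_i$ (using $Z_i = Z_i^\dagger$), so that $\langle \cZ(\delta), \eta\rangle = \sum_i \langle \delta, Z_i \eta Z_i\rangle = \langle \delta, \cZ(\eta)\rangle$. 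Idempotent plus self-adjoint is exactly an orthogonal projection on $\Hk$.

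For the trace identity, the key observation I would record first is that $\cZ(\delta)$ commutes with every $Z_i$: using orthogonality, $Z_i \cZ(\delta) = Z_i \delta Z_i = \cZ(\delta) Z_i$. Note also $\cZ(\delta)\succeq 0$ for $\delta\succeq 0$, since each $Z_i\delta Z_i\succeq 0$, so $\log\cZ(\delta)$ is well defined (with the usual convention, matching~\cref{eq:objective}, that the log is set to zero on the kernel of $\cZ(\delta)$). Consequently $\log \cZ(\delta)$, being a spectral function of $\cZ(\delta)$, also commutes with each $Z_i$. Writing $L = \log \cZ(\delta)$ and using $I_k = \sum_i Z_i$ together with commutativity and $Z_i^2 = Z_i$ gives $L = \sum_i Z_i L Z_i$. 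Cyclicity of the trace then finishes it:
\[
\trace(\delta L) = \sum_i \trace(\delta Z_i L Z_i) = \sum_i \trace(Z_i \delta Z_i L) = \trace\big(\cZ(\delta) L\big),
\]
which is precisely~\cref{eq:deltaZdeltalog}.

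I expect the only real obstacle to be the orthogonality derivation above; after that everything reduces to bookkeeping. The secondary point demanding care is the $\log$ convention when $\cZ(\delta)$ is singular, but since $\log \cZ(\delta)$ remains a function of $\cZ(\delta)$, the commutation argument that drives the trace identity is unaffected.
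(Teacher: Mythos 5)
Your proposal is correct and follows essentially the same route as the paper: the identical positive-semidefinite-sum trick to get $Z_jZ_i=0$ for $i\neq j$, from which the orthogonal-projection claim follows. For the trace identity the paper only remarks that $\cZ$ ``removes the off-diagonal blocks,'' and your commutation argument ($\log\cZ(\delta)=\sum_i Z_i\log\cZ(\delta)Z_i$ plus cyclicity of the trace) is a correct, fully worked-out version of that same observation.
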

\begin{proof}
First we show that the matrices of $\cZ$ satisfy 
\begin{equation}
\label{eq:ZiZjzero}
Z_iZ_j=0,~\forall~i\neq j.
\end{equation}
For $i,j \in \{1,\ldots, N\}$, we have by \Cref{def:operGandG} that
	\begin{equation}
		\label{eq:orthogZj}
		\begin{array}{rcl}
			Z_i \left( \sum_{s=1}^N Z_s \right) Z_i = Z_i I_kZ_i = Z_i
			& \implies &
			0=\sum_{s\ne i} Z_i Z_s Z_i = \sum_{s\ne i} (Z_sZ_i)^\dagger (Z_sZ_i)
			\\& \implies &
			Z_jZ_i=0,\ \forall j\neq i.
		\end{array}
	\end{equation}
We now have $\cZ=\cZ^2=\cZ^{1/2}=\cZ^\dagger $.  Thus, $\cZ$ is an orthogonal projection. 
The equality \cref{eq:deltaZdeltalog} holds by the properties of the map $\cZ$ that it removes the off-diagonal blocks in its image.
 \end{proof}

Using \eqref{eq:objective}, \Cref{lemma:rangeZrelation} below shows that 
the objective value of the model \eqref{eq:keyrateopt} is finite on the
feasible set. This also provides insight on the usefulness of
\FR on the variable $\sigma$ done below.
\begin{lemma}
\label{lemma:rangeZrelation}
Let $X\succeq 0$. Then $\range(X) \subseteq \range(\cZ(X))$.
\end{lemma}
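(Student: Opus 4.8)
The plan is to reduce the desired range inclusion to a nullspace inclusion and then exploit positive semidefiniteness. Both $X$ and $\cZ(X)$ are Hermitian and positive semidefinite: the latter because $\cZ(X)=\sum_{j=1}^N Z_jXZ_j$ is a sum of congruences $Z_jXZ_j^\dagger$ of the \PSDp matrix $X$ (using $Z_j=Z_j^\dagger$). Hence the range of each is the orthogonal complement of its nullspace, so that $\range(X)\subseteq \range(\cZ(X))$ is equivalent to the reverse inclusion $\nul(\cZ(X))\subseteq \nul(X)$. I would prove this latter statement instead.

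To that end, I would take $v\in \nul(\cZ(X))$ and, using $\cZ(X)\succeq 0$, pass from $\cZ(X)v=0$ to the scalar identity $\langle v,\cZ(X)v\rangle = 0$. Expanding $\cZ(X)=\sum_{j=1}^N Z_jXZ_j$ and again using $Z_j=Z_j^\dagger$ rewrites the left-hand side as $\sum_{j=1}^N \langle Z_jv, X(Z_jv)\rangle$, a sum of nonnegative terms since $X\succeq 0$. Each term must therefore vanish, and $\langle Z_jv,X(Z_jv)\rangle=0$ with $X\succeq 0$ forces $X^{1/2}Z_jv=0$, hence $XZ_jv=0$, for every $j$.

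The last step assembles these block equations using the resolution of the identity $\sum_{j=1}^N Z_j=I_k$ from \Cref{def:operGandG}: writing $v=\sum_{j=1}^N Z_jv$ gives $Xv=\sum_{j=1}^N XZ_jv=0$, so $v\in\nul(X)$. This completes the nullspace inclusion and hence the lemma.

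I do not anticipate a serious obstacle; the only points needing care are two standard facts for a Hermitian \PSDp matrix $A$ — that $\range(A)=\nul(A)^\perp$, and that $\langle v,Av\rangle=0$ implies $Av=0$ — together with the routine observation that $\cZ$ preserves positive semidefiniteness. It is worth noting that this argument uses only $Z_j=Z_j^\dagger$ and $\sum_j Z_j=I_k$, and not the mutual orthogonality $Z_iZ_j=0$ from \Cref{prop:Zopprops}; thus the conclusion holds for any such resolution of the identity, which is exactly the feature needed to justify the subsequent \FR on the variable $\sigma$.
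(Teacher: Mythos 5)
Your proof is correct, and it takes a genuinely different route from the paper's. The paper argues directly on ranges: it spectrally decomposes $X=\sum_i \lambda_i u_iu_i^\dagger$, uses the identity $\range(A+B)=\range(A)+\range(B)$ for $A,B\succeq 0$ to write $\range(\cZ(\lambda_i u_iu_i^\dagger))=\sum_j\range(Z_ju_i)$, observes $u_i=\sum_j Z_ju_i$ lies in that sum, and then assembles the rank-one pieces by an induction on the number of terms. You instead dualize the claim to the nullspace inclusion $\nul(\cZ(X))\subseteq\nul(X)$ (legitimate since both matrices are Hermitian \PSDp, so range is the orthogonal complement of nullspace) and run the standard quadratic-form argument: $\langle v,\cZ(X)v\rangle=\sum_j\langle Z_jv,XZ_jv\rangle=0$ forces $XZ_jv=0$ for each $j$, and the resolution of the identity then gives $Xv=0$. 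Your version is shorter and avoids both the spectral decomposition and the induction; one minor over-qualification is that you do not need $\cZ(X)\succeq 0$ to pass from $\cZ(X)v=0$ to $\langle v,\cZ(X)v\rangle=0$ (that direction is trivial), though you do need it for the range--nullspace duality, so nothing is lost. Your closing observation that only $Z_j=Z_j^\dagger$ and $\sum_jZ_j=I_k$ are used, and not the mutual orthogonality $Z_iZ_j=0$, is accurate --- and in fact applies equally to the paper's own proof, so both arguments establish the lemma for an arbitrary Hermitian resolution of the identity.
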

\begin{proof}
See \Cref{sec:proof_rangelem}. 
\end{proof}

\begin{remark}
\label{rem:notposdef}
In general, the mapping $\cG$ in~\cref{eq:Gmap}
\underline{does not preserve positive definiteness}. Therefore the 
objective function $f(\rho)$, see~\cref{eq:f_in_trace} below, may  need
to evaluate $\trace (\delta \log \delta)$ and $\trace (\delta \log \sigma)$
with \emph{both} $\delta = \cG(\rho)$ and $\sigma = \cZ\circ \cG(\rho)$ 
always singular.
Although the objective function $f$ is well-defined at singular points 
$\delta, \sigma$, the gradient of $f$ at singular points $\delta, 
\sigma$ is not well-defined. Our approach using \FR within an 
interior point method avoids these numerical 
difficulties.
\end{remark}

\subsection{Reformulation via Facial Reduction (\FRp)}

Using \Cref{prop:Zopprops}, we can now reformulate the objective function 
in the key rate optimization problem \cref{eq:keyrateoptequivsigdel} to 
obtain the following equivalent model:
\begin{equation}
\label{eq:optprob}
\begin{array}{rcllll}
&& \min_{\rho,\sigma,\delta} &
\trace(\delta\log \delta) - \trace( \sigma \log \sigma) \\
&&\text{s.t.} & \Gamma(\rho) = \gamma
\\ &&&  \sigma - \cZ(\delta)  = 0
\\ &&&  \delta - \cG(\rho) = 0
\\  &&&   \textdef{$\rho\in \Hnp$}, \,\textdef{$\sigma \in \Hkp$},\,\textdef{$\delta \in \Hkp$}.
\end{array}
\end{equation}
The new objective function is the key in our analysis, as it simplifies the expressions for gradient and Hessian. Next, we derive facial reduction based on the constraints in \cref{eq:optprob}.

\subsubsection{Partial \FR on the Reduced Density Operator Constraint}
\label{sect:partialFR}

\index{facially reduced reduced density operator constraint}

\index{$S_R$, reduced density operator constraint} 
\index{reduced density operator constraint, $S_R$} 

Consider the spectrahedron $S_R$ defined by the reduced density operator 
constraint in \cref{eq:constrreduced}.
We now simplify the problem via \FR by using only~\cref{eq:constrreduced}
in the case that $\rho_A \in \H^{n_{A}}$ is singular.
We now see in~\Cref{thm:FRonObserv} that we
can do this explicitly using the spectral decomposition of $\rho_A$;
see also~\cite[Sec. II]{Ferenczi2012}.
Therefore, this step is extremely accurate. 
Using the structure arising from the reduced density operator constraint, 
we obtain partial \FR on the constraint set in~\Cref{thm:FRonObserv}.
\begin{theorem}
\label{thm:FRonObserv}
Let $\range (P) = \range (\rho_A)\subsetneq \H^{n_A}, P^\dagger P=\one_r$ for $r < n_A$, and let $V=
P\otimes \one_B$. Then the spectrahedron $S_R$ in~\cref{eq:constrreduced} has the property that
\begin{equation}\label{eq:SO_face}
\rho \in S_R \implies \rho = VRV^\dagger , \text{  for some  }
R\in \H^{r\cdot n_B}_{+}.
\end{equation}
\end{theorem}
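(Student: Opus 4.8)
The plan is to exploit the Kronecker structure of the reduced density operator constraint together with the exposing-vector viewpoint on faces from \Cref{lem:propfaces}. The crucial input is the adjoint characterization of the partial trace already encoded in~\cref{eq:constrreduced}: since $\{\Theta_j\}$ is an orthonormal basis of the Hermitian matrices on system $A$, the constraint $\tr_B(\rho)=\rho_A$ is equivalent to
\[
\langle A\otimes\one_B,\ \rho\rangle = \langle A,\rho_A\rangle, \quad \forall A\in\H^{n_A},
\]
which is precisely the defining relation for $\tr_B$ as the adjoint of the map $A\mapsto A\otimes\one_B$. The strategy is to feed a carefully chosen Hermitian test matrix $A$ into this relation to manufacture an exposing vector that pins down $\range(\rho)$.

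First I would complete $P$ to a unitary. Since $\range(P)=\range(\rho_A)$ with $P^\dagger P=\one_r$, I choose $Q\in\C^{n_A\times(n_A-r)}$ so that $[\,P\ Q\,]$ is unitary; then $\range(Q)=\range(P)^\perp=\nul(\rho_A)$, so $\rho_A Q=0$. Taking the Hermitian test matrix $A=QQ^\dagger$ in the adjoint relation yields the central computation
\[
\langle (QQ^\dagger)\otimes\one_B,\ \rho\rangle = \langle QQ^\dagger,\rho_A\rangle = \tr\!\left(Q^\dagger \rho_A Q\right)=0 .
\]
Now observe $(QQ^\dagger)\otimes\one_B=(Q\otimes\one_B)(Q\otimes\one_B)^\dagger\succeq 0$. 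Since $\rho\succeq 0$ as well, a vanishing inner product between two positive semidefinite matrices forces their product to vanish, giving $\rho\,(Q\otimes\one_B)=0$, i.e.\ $\range(Q\otimes\one_B)\subseteq\nul(\rho)$. In the language of \Cref{item:expvect} of \Cref{lem:propfaces}, the matrix $(Q\otimes\one_B)(Q\otimes\one_B)^\dagger$ is an exposing vector for the minimal face of $\Hnp$ containing $S_R$.

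Finally I would convert this range containment into the claimed factorization. Taking orthogonal complements and using $\range(Q\otimes\one_B)^\perp=\range(P\otimes\one_B)=\range(V)$ gives $\range(\rho)\subseteq\range(V)$. Because $V^\dagger V=(P^\dagger P)\otimes\one_B=\one_{r n_B}$, the matrix $VV^\dagger$ is the orthogonal projection onto $\range(V)$, so $VV^\dagger\rho=\rho=\rho\,VV^\dagger$. Setting $R:=V^\dagger\rho V$ then yields $VRV^\dagger = VV^\dagger\rho\,VV^\dagger=\rho$, with $R\succeq 0$ since $\rho\succeq 0$; this is exactly the facial representation in \Cref{item:facialvec} of \Cref{lem:propfaces} specialized to the isometry $V$.

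The only delicate point I anticipate is the step that turns the single scalar identity $\langle (QQ^\dagger)\otimes\one_B,\rho\rangle=0$ into the range containment $\range(\rho)\subseteq\range(V)$: this is where positive semidefiniteness of \emph{both} $\rho$ and the Kronecker exposing vector is indispensable, and it is the precise mechanism by which the singularity of $\rho_A$ propagates to a rank restriction on the full state $\rho$. Everything else reduces to orthonormality bookkeeping for the Kronecker factors $P\otimes\one_B$ and $Q\otimes\one_B$, which is routine.
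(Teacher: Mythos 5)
Your proposal is correct and follows essentially the same route as the paper's main proof: complete $P$ to a unitary $[\,P\ Q\,]$, use the adjoint relation $\Tr_B^\dagger(W)=W\otimes\one_B$ with $W=QQ^\dagger$ to get $\langle W\otimes\one_B,\rho\rangle=\langle W,\rho_A\rangle=0$, and recognize $W\otimes\one_B$ as an exposing vector. The only difference is that you spell out explicitly the steps (zero inner product of PSD matrices forces $\rho(Q\otimes\one_B)=0$, hence $\range(\rho)\subseteq\range(V)$ and $\rho=V(V^\dagger\rho V)V^\dagger$) that the paper delegates to \Cref{lem:propfaces}.
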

\begin{proof}
Let $\begin{bmatrix} P&Q \end{bmatrix}$ be a unitary matrix such that 
$\range (P) = \range (\rho_{A})$ and $\range (Q) = \nul (\rho_{A})$. Let $W = QQ^{\dagger} \succeq 0$. 
Recall that the adjoint $\Tr_B^\dagger(W) = W\otimes \one_B$.
Then $\rho \in S_R$ implies that
\begin{equation} \label{eq:TRBW}
\langle W \otimes \one_B, \rho \rangle = \langle W, \tr_{B}(\rho) \rangle = \langle W, \rho_{A} \rangle = 0,
\end{equation}
where \textdef{$\one_B \in \H^{n_B}$} is the identity matrix of size
$n_{B}$, and we use~\cref{eq:constrreduced} to guarantee that
$\Tr_B(\rho)= \rho_A$.
Therefore, $W \otimes \one_B \succeq 0$ is an exposing vector
for the spectrahedron $S_R$ in \cref{eq:constrreduced}. And we can write 
$\rho = VRV^\dagger $ with $V=P\otimes \one_B$ for any $\rho \in S_R$.
This yields an equivalent representation \cref{eq:SO_face} with a
smaller positive semidefinite constraint.\footnote{We provide a
self-contained alternate proof in~\Cref{sec:proofFRonObserv}.}
\end{proof}

We emphasize that facial reduction is not only powerful in reducing 
the variable dimension, but also in reducing the number of constraints.
Indeed, if $\rho_{A}$ is not full-rank, then at least one of the
constraints in \cref{eq:constrreduced} becomes redundant and can be
discarded; see~\cite{bw3,SWW:17}. 
In this case, it is equivalent to the matrix $\rho_A$ becoming smaller
in dimension.
(Our empirical observations show that many of the other observational
constraints $\Gamma_i(\rho)=\gamma_i$ 
also become redundant and can be discarded.)

\subsubsection{\FR on the Constraints Originating from $\cG,\cZ$}
\label{sec:FRontheobj}

Our motivation is that the domain of the objective function may be
restricted to the boundary of the semidefinite cone, i.e.,~the
matrices $\cG(\rho), \cZ(\cG(\rho))$ are singular by the definition
of $\cG$.  We would like to guarantee 
that we have a well-formulated problem with
strictly feasible points in the domain of the objective function so that the derivatives are well-defined. This guarantees basic numerical stability.
This is done by considering the constraints in the equivalent
formulation in~\eqref{eq:keyrateoptequivsigdel}.

\index{compact spectral decomposition}
We first note the useful equivalent form for the entropy function. 
\begin{lemma}
\label{lem:VdeltaV}
Let $Y = VRV^\dagger  \in \Hp,\, R\succ 0$ be the compact spectral decomposition of $Y$ with $V^\dagger V=I$.
Then
\[
\trace  (Y \log Y)
= \trace  (R\log R) .
\]
\end{lemma}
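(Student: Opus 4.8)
The plan is to reduce both sides to the same quantity attached to the eigenvalues of $R$, exploiting the fact that the compact spectral decomposition $Y=VRV^\dagger$ exhibits the nonzero eigenvalues of $Y$ as exactly those of $R$, together with the convention (recorded just after~\eqref{eq:objective}) that the matrix logarithm appearing in the entropy is extended by $0$ on the nullspace. The cleanest route is to first identify $\log Y$ explicitly in terms of $V$ and $\log R$, and then collapse the trace using $V^\dagger V=I$.

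First I would establish the key identity $\log Y = V(\log R)V^\dagger$. Since $R\succ 0$ and $V$ has full column rank with $V^\dagger V = I$, the range of $Y=VRV^\dagger$ equals $\range(V)$, and $\range(V)^\perp=\nul(Y)$. On the invariant subspace $\range(V)$ the operator $Y$ acts as $R$ in the coordinates supplied by $V$ (indeed $Y(Vx)=VRx$ for all $x$), while on $\nul(Y)$ it is zero. The extended logarithm applies the ordinary $\log$ on the range and assigns the value $0$ on the nullspace, so $\log Y = V(\log R)V^\dagger$, where $\log R$ is the ordinary, well-defined matrix logarithm of the positive definite matrix $R$.

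With this identity in hand, the conclusion follows by a short computation using $V^\dagger V=I$ and cyclicity of the trace:
\[
\trace(Y\log Y)
= \trace\!\big(VRV^\dagger\, V(\log R)V^\dagger\big)
= \trace\!\big(VR(\log R)V^\dagger\big)
= \trace\!\big(R(\log R)\,V^\dagger V\big)
= \trace(R\log R).
\]

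The only real obstacle is the careful justification of $\log Y = V(\log R)V^\dagger$, i.e.~that the extended-log convention interacts correctly with the compact decomposition on the singular matrix $Y$. To sidestep any delicacy about matrix functions of singular matrices, I would, if needed, diagonalize $R=U\Lambda U^\dagger$ with $U$ unitary and $\Lambda\succ 0$ diagonal; then $VU$ has orthonormal columns and $Y=(VU)\Lambda(VU)^\dagger$ is an orthogonal spectral decomposition, so the nonzero eigenvalues of $Y$ are precisely the diagonal entries of $\Lambda$. Both traces then reduce to $\sum_i \Lambda_{ii}\log\Lambda_{ii}$, with the zero eigenvalues of $Y$ contributing $0\log 0 = 0$ under the convention, which completes the proof.
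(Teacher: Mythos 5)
Your proof is correct and follows essentially the same route as the paper: the paper completes $V$ to a unitary $U=\begin{bmatrix}V & P\end{bmatrix}$, writes $Y=U\,\BlkDiag(R,0)\,U^\dagger$, and invokes $0\log 0=0$, which is exactly your fallback argument. Your primary route, phrasing the same fact as the identity $\log Y = V(\log R)V^\dagger$ followed by cyclicity of the trace, is a harmless repackaging of the same spectral observation.
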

\begin{proof}
We obtain a unitary matrix $U = \begin{bmatrix}
V & P
\end{bmatrix}$ by completing the basis.
Then  $Y = UDU^\dagger $, where $D = \BlkDiag(R,0)$. We conclude, with $0\cdot \log0  =0 $, that $\trace Y \log Y = \trace  D \log D = \trace R \log R$.
\end{proof}

We use the following simple result to obtain the exposing vectors of the minimal face in the problem analytically.

\begin{lemma}
\label{lem:tranform}
Let $\CC \subseteq \Hn_+$ be a given convex set
with nonempty interior. Let $Q_{i} \in \mathbb{H}^{k \times n},
i=1,\dots,t$, be given matrices. Define the linear map 
	$\A : \Hn \rightarrow \Hk$ and matrix $V\in \C^{k\times r}$ by 
$$
\A (X) = \sum_{i=1}^t Q_{i}XQ_{i}^\dagger , \quad
	\range(V) = \range\left( \sum_{i=1}^tQ_{i}Q_{i}^\dagger  \right).
$$ 
Then the minimal face,  
\[
\face(\cA(\CC)) = V \mathbb{H}_{+}^{r} V^\dagger.
\]
\end{lemma}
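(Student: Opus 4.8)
The plan is to establish the two inclusions $\face(\cA(\CC)) \subseteq V\mathbb{H}_+^r V^\dagger$ and $V\mathbb{H}_+^r V^\dagger \subseteq \face(\cA(\CC))$ separately, relying on the range/nullspace characterization of faces of $\Hkp$ in \Cref{lem:propfaces}. Throughout I would take $V$ to have orthonormal columns, $V^\dagger V = \one_r$, and set $M := \sum_{i=1}^t Q_i Q_i^\dagger \succeq 0$, so that $r = \rank(M)$, $\range(V) = \range(M)$, and $\nul(V^\dagger) = \nul(M)$.

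For the easy containment I would first show that $\cA$ maps all of $\Hnp$ into $V\mathbb{H}_+^r V^\dagger$. The crux is that $\nul(M) \subseteq \nul(\cA(X))$ for every $X \succeq 0$: if $Mw = 0$ then $0 = w^\dagger M w = \sum_i \norm{Q_i^\dagger w}^2$ forces $Q_i^\dagger w = 0$ for all $i$, and therefore $w^\dagger \cA(X) w = \sum_i (Q_i^\dagger w)^\dagger X (Q_i^\dagger w) = 0$; since $\cA(X) \succeq 0$ this yields $\cA(X)w = 0$. Passing to orthogonal complements gives $\range(\cA(X)) \subseteq \range(M) = \range(V)$, hence $\cA(X) = V(V^\dagger \cA(X) V)V^\dagger \in V\mathbb{H}_+^r V^\dagger$. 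Thus $\cA(\CC) \subseteq V\mathbb{H}_+^r V^\dagger$, and since the right-hand side is a face of $\Hkp$ by \Cref{lem:propfaces}, minimality forces $\face(\cA(\CC)) \subseteq V\mathbb{H}_+^r V^\dagger$.

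The reverse inclusion is where the interior hypothesis enters, and the idea is to produce a single image point in the relative interior of the candidate face. Since $\CC \subseteq \Hnp$ has nonempty interior in $\Hn$, it contains a positive definite $X_0$, say $X_0 \succeq \epsilon \one$ with $\epsilon > 0$. Then $\cA(X_0) = \sum_i Q_i X_0 Q_i^\dagger \succeq \epsilon \sum_i Q_i Q_i^\dagger = \epsilon M$, and the monotonicity of ranges under the semidefinite order (from $A \succeq B \succeq 0$ one obtains $\nul(A) \subseteq \nul(B)$, hence $\range(A) \supseteq \range(B)$) gives $\range(\cA(X_0)) \supseteq \range(M) = \range(V)$. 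Combined with the first inclusion this shows $\range(\cA(X_0)) = \range(V)$, so $\cA(X_0)$ has rank exactly $r$ and lies in $\relint(V\mathbb{H}_+^r V^\dagger)$.

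To finish I would invoke the standard fact that a face of $\Hkp$ containing a relative-interior point of a convex subset it also contains must contain that entire subset; this is immediate from the line-segment formulation of face in the definition preceding \Cref{lem:propfaces}, by extending through $\cA(X_0)$ a segment to any target point of the face and staying inside $\Hkp$. As $\face(\cA(\CC))$ is a face containing $\cA(X_0) \in \relint(V\mathbb{H}_+^r V^\dagger)$ and contained in $V\mathbb{H}_+^r V^\dagger$, this yields $V\mathbb{H}_+^r V^\dagger \subseteq \face(\cA(\CC))$, and hence equality. I expect the only genuinely delicate steps to be extracting a positive definite point from the nonempty-interior assumption and the closing relative-interior/face argument; the nullspace computation and the semidefinite-order range monotonicity are routine.
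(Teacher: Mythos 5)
Your proof is correct, but it takes a different route from the paper's. The paper argues dually: it characterizes \emph{all} exposing vectors of $\cA(\CC)$ by passing to the adjoint, showing that $0\neq W\succeq 0$ exposes $\cA(\CC)$ if and only if $W\in\nul(\A^\dagger)$, if and only if $\range(W)\subseteq\nul\bigl(\sum_i Q_iQ_i^\dagger\bigr)$ (the nonempty-interior hypothesis enters there, to force $\A^\dagger(W)=0$ from $\langle\A^\dagger(W),X\rangle=0$ on all of $\CC$), and then reads off the minimal face from a maximal-rank exposing vector via \Cref{lem:propfaces}. You instead argue primally by two inclusions: the containment $\cA(\Hnp)\subseteq V\mathbb{H}_+^{r}V^\dagger$ via the nullspace computation (essentially the same algebra as the paper's last two equivalences, packaged as ``this particular $QQ^\dagger$ is an exposing vector'' rather than ``these are all the exposing vectors''), and then minimality by exhibiting a single image point $\cA(X_0)$ of rank exactly $r$ lying in $\relint\bigl(V\mathbb{H}_+^{r}V^\dagger\bigr)$ --- which is where your use of the interior hypothesis is localized, through $X_0\succeq\epsilon\one$ and monotonicity of ranges under the semidefinite order. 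Your route is somewhat more elementary (no adjoint, no appeal to a theorem-of-the-alternative style characterization) and makes the role of the interior assumption very concrete; the paper's route buys, as a byproduct, the complete description of the exposing vectors of $\cA(\CC)$ as the positive semidefinite elements of $\nul(\A^\dagger)$, which it uses immediately afterward to observe that this facial reduction step can always be done in one shot.
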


\begin{proof}
First, note that properties of the mapping
implies that $\cA(\CC)\subset \Hkp$.
Nontrivial exposing vectors $0\neq W\in \H^n_+$ of $\cA(\CC)$ can be
characterized by the null space of the adjoint operator $\A^\dagger $:
$$
\begin{array}{rcl}
0\neq W\in \H^n_+, \langle W, \cA(\CC)\rangle = 0
&\iff &
0\neq W \succeq 0, \,
\langle W, Y \rangle = 0 ,  \, \forall \ Y \in \cA(\CC) \quad 
\\&\iff &
0\neq W \succeq 0, \,
\langle \A^\dagger  (W), X \rangle = 0 ,  \, \forall \ X \in \CC \quad 
\\&\iff &
0\neq W \succeq 0, \,
  W \in \nul(\A^\dagger )
\\&\iff &
0\neq W \succeq 0, \,
 Q_i^\dagger WQ_i = 0, \forall i,
\\&\iff &
0\neq \range(W) \subseteq \nul \left( \sum_i Q_iQ_i^\dagger  \right),
\end{array}
$$
where the third equivalence follows from $\Int (\CC) \neq \emptyset$, and the
fourth equivalence follows from the properties of the sum
of mappings of a semidefinite matrix.

 The choice of $V$ follows from choosing a maximal rank exposing
vector and constructing $V$ using \Cref{lem:propfaces}:
\[
\range(V) = \nul (W) =  \range \left( \sum_i Q_iQ_i^\dagger  \right).
\]
\end{proof}

We emphasize that the minimal face in \Cref{lem:tranform} means that 
$V$ has a minimum number of columns, as without loss of 
generality, we choose it to be full column
rank. In other words, this is the greatest reduction in the dimension of the image. 
In addition, the exposing vectors of $\cA(\CC)$ are characterized by the positive semidefinite matrices in the null space of $\A^\dagger $. 
This implies that \FR can be done in one step.

\index{exposing vector} 
\index{$V_{\rho}$}
\index{$V_{\delta}$}
\index{$V_{\sigma}$}

We describe how to apply \Cref{lem:tranform} to obtain $V_{\rho},V_{\delta},V_{\sigma}$ of the minimal face of $(\Hnp,\Hkp,\Hkp)$ containing the feasible region of \eqref{eq:optprob}. By \Cref{lem:propfaces}, we may write 
$$\begin{array}{rrll}
	\rho &=& V_{\rho}R_{\rho}V_{\rho}^\dagger  \in \H_+^n , & R_\rho \in \H_+^{n_\rho} \\
	\delta &=& V_{\delta}R_{\delta}V_{\delta}^\dagger  \in \H_+^k , & R_\delta \in \H_+^{k_\delta} \\
	\sigma &=& V_{\sigma}R_{\sigma}V_{\sigma}^\dagger  \in \H_+^k,  & R_\sigma \in \H_+^{k_\sigma}. \\
\end{array}$$

Define the linear maps
$$\begin{array}{rrrrrl}
	\Gamma_V:& \mathbb{H}_{+}^{n_{\rho}} \rightarrow \R^{m}&\text{ by }& \Gamma_V(R_{\rho}) &=& \Gamma( V_{\rho}R_{\rho}V_{\rho}^\dagger ), \\
	\cG_V:& \mathbb{H}_{+}^{n_{\rho}} \rightarrow \Hkp &\text{ by }& \cG_V (R_{\rho}) &=& \mathcal{G}( V_{\rho}R_{\rho}V_{\rho}^\dagger ), \\
	\cZ_V:& \mathbb{H}_{+}^{k_{\delta}} \rightarrow \Hkp &\text{ by }& \cZ_V(R_{\delta}) &=& \mathcal{Z}( V_{\delta}R_{\delta}V_{\delta}^\dagger ).
\end{array}$$
The matrices $V_{\rho},V_{\delta},V_{\sigma}$ are obtained as follows.
\begin{enumerate}
	\item 
\label{item:minfaceVrho}
We apply \FR to $\cF_\rho :=  \{ \rho \in \Hnp : \Gamma(\rho) = \gamma\}$ to find $V_{\rho}$ for the minimal face, $\face(\cF_\rho) \unlhd \Hnp $.
	\item Define \index{$\cR_{\rho}$}
	\[
	\mathcal{R}_{\rho}:= \{ R_\rho \in \H_+^{n_\rho} : \Gamma_V(R_\rho) = \gamma\}.
	\] 
	
	Note that $\text{int}( \mathcal{R}_{\rho}) \neq \emptyset$.
Applying \Cref{lem:tranform} to $\cF_\delta :=   \{ \cG_V (R_{\rho}) \in \Hkp : R_{\rho} \in \mathcal{R}_{\rho}  \}$, the matrix $V_{\delta}$ yields the minimal face, $\face(\cF_\delta) \unlhd \Hkp$ if we choose
\begin{equation}
	\label{eq:analytic_Vd}
	\range ( \Vd ) = \range \left( \cG_V(I) \right) .
\end{equation}

\item 	 \index{$\cR_{\delta}$}
Define $$\mathcal{R}_{\delta}:=\{ R_{\delta} \in
\mathbb{H}_+^{k_{\delta}} : V_{\delta}R_{\delta}V_{\delta}^\dagger  =
\cG_V (R_{\rho}), \ R_{\rho} \in \mathcal{R}_{\rho} \}.$$
We again note that $\text{int}( \mathcal{R}_{\delta}) \neq \emptyset$. Applying
\Cref{lem:tranform} to $\cF_\sigma :=  \{ \cZ_V (R_{\delta}) \in \Hkp :
R_{\delta} \in \mathcal{R}_{\delta} \}$, we find the matrix
$V_{\sigma}$ representing the minimal face, $\face(\cF_\sigma) \unlhd \Hkp$.
Thus, we choose $\Vs$ satisfying
\begin{equation}
	\label{eq:analytic_Vs}
	\range( \Vs ) = \range \left( \cZ_V(I) \right).
\end{equation}

\end{enumerate}
As above, this also can be seen by looking at the image of $I$ and the relative interior of the range of $\cZ_V$.
We note, by \Cref{lemma:rangeZrelation}, that $\range(\Vs) \supseteq \range(\Vd)$. Note that we have assumed the exposing vector of maximal rank for the original constraint set on $\rho$ in the first step is obtained. Without loss of generality, we can assume that the columns in  $V_{\rho},V_{\delta},V_{\sigma}$ are orthonormal. This makes the subsequent computation easier.

\begin{assump}
Without loss of generality, we assume $V_M^\dagger V_M=I$ for $M=\rho,\delta,\sigma$.
\end{assump}
 Define $\cV_\delta (R_\delta) := V_\delta R_\delta V_\delta^\dagger  $ and $\cV_\sigma (R_\sigma) := V_\sigma R_\sigma V_\sigma^\dagger $. Applying \Cref{lem:VdeltaV} and substituting for $\rho,\delta,\sigma$ to \eqref{eq:optprob}, we obtain the equivalent formulation \cref{eq:optprobobjconstrthree}.
\begin{equation}
\label{eq:optprobobjconstrthree}
	\begin{array}{rclll}
	&\min  &   \Tr (R_\delta \log (R_\delta) ) - \Tr\big( 
                        R_\sigma \log(R_\sigma)\big) \\
		&\text{s.t.} &   \GV ( R_\rho   ) = \gamma 
\\ && \cV_\sigma (R_\sigma)  - \cZ_V(R_\delta) = 0
\\ &&  
 \cV_\delta (R_\delta) - \cG_V(R_\rho) = 0
		   \\&&   R_\rho,  R_\sigma,  R_\delta  \succeq 0.
\end{array}
\end{equation}
After facial reduction, many of the linear equality constraints in \cref{eq:optprobobjconstrthree} end up being
redundant. We may delete redundant constraints and keep a well-conditioned equality constraints. 
In the next section, we show that the removal of the redundant constraints can be performed by \emph{rotating} the constraints.

\subsubsection{Reduction on the Constraints}

Recall that our primal problem after \FR is given in
\cref{eq:optprobobjconstrthree}.
Moreover, by the work above we can assume that  $\GV$ is surjective. In \Cref{thm:rotate_const} and \Cref{thm:rotate_const2} below, we show that we can simplify the last two equality constraints in \cref{eq:optprobobjconstrthree} by an appropriate rotation.

\begin{theorem}	\label{thm:rotate_const}
	Let $R_\rho \in  \H_+^{n_\rho}$ and $R_\delta \in   \H_+^{k_\delta}$. It holds that
	\begin{equation}\label{eq:rotate_const}
		\cV_\delta (R_\delta) = \cG_V(R_\rho)  \quad \Longleftrightarrow \quad R_\delta = \cGUV (R_\rho),
	\end{equation}
	where  $\cGUV ( \cdot) :=  V_\delta^\dagger  \cG_{V}(\cdot) V_{\delta}$.
\end{theorem}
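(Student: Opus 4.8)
The plan is to prove the two implications separately. The forward direction is a one-line algebraic compression, while the reverse direction rests on a range-inclusion fact for the positive map $\cG_V$, which is where the real content lies.

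For the forward implication I would start from $\cV_\delta(R_\delta) = V_\delta R_\delta V_\delta^\dagger = \cG_V(R_\rho)$ and multiply on the left by $V_\delta^\dagger$ and on the right by $V_\delta$. Using the standing orthonormality assumption $V_\delta^\dagger V_\delta = I$, the two outer factors collapse to the identity, giving $R_\delta = V_\delta^\dagger \cG_V(R_\rho) V_\delta = \cGUV(R_\rho)$ at once. This direction needs nothing beyond $V_\delta^\dagger V_\delta = I$.

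For the reverse implication I would substitute $R_\delta = \cGUV(R_\rho) = V_\delta^\dagger \cG_V(R_\rho) V_\delta$ into $\cV_\delta(R_\delta)$, obtaining $V_\delta V_\delta^\dagger \cG_V(R_\rho) V_\delta V_\delta^\dagger$, where $V_\delta V_\delta^\dagger$ is the orthogonal projection onto $\range(V_\delta)$. The goal is then to show that this projection fixes $\cG_V(R_\rho)$ on both sides, i.e.\ $V_\delta V_\delta^\dagger \cG_V(R_\rho) = \cG_V(R_\rho) = \cG_V(R_\rho) V_\delta V_\delta^\dagger$, which immediately yields $V_\delta R_\delta V_\delta^\dagger = \cG_V(R_\rho)$.

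The key step, and the main obstacle, is the range inclusion $\range(\cG_V(R_\rho)) \subseteq \range(V_\delta)$ for every $R_\rho \in \H_+^{n_\rho}$. I would establish this from positivity of the completely positive map $\cG_V$: for $R_\rho \succeq 0$ pick $\lambda > 0$ with $R_\rho \preceq \lambda I$, so that applying the positive map gives $0 \preceq \cG_V(R_\rho) \preceq \lambda \, \cG_V(I)$; the standard fact that $0 \preceq A \preceq B$ forces $\range(A) \subseteq \range(B)$ then yields $\range(\cG_V(R_\rho)) \subseteq \range(\cG_V(I)) = \range(V_\delta)$, where the last equality is the defining choice \cref{eq:analytic_Vd}. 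Since $\cG_V(R_\rho)$ is Hermitian, its column space equals its row space, so the projection $V_\delta V_\delta^\dagger$ acts as the identity from both the left and the right, completing the reverse direction. I expect the only subtlety to be recording the operator-monotonicity of the range under $0 \preceq A \preceq B$ and noting that the Hermitian symmetry of $\cG_V(R_\rho)$ lets a single range bound control both the left and right projections.
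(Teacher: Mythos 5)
Your proof is correct and rests on the same key fact as the paper's: the range inclusion $\range(\cG_V(R_\rho)) \subseteq \range(V_\delta)$, which the paper invokes tersely (``from facial reduction, $\range(V_\delta) = \range(\cG_V)$, thus $P^\dagger \cG_V = 0$'') and then exploits by completing $V_\delta$ to a unitary $U = \begin{bmatrix} V_\delta & P\end{bmatrix}$ and comparing blocks of $U^\dagger(\cdot)U$, which delivers both implications at once. Your two-implication version with the projection $V_\delta V_\delta^\dagger$ is the same argument in different clothing; the one genuine addition is that you actually justify the range inclusion (via $R_\rho \preceq \lambda I$, positivity of $\cG_V$, and $0 \preceq A \preceq B \Rightarrow \range(A) \subseteq \range(B)$), a detail the paper leaves to the reader and to \Cref{lem:tranform}.
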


\begin{proof}
	Let $P$ be such that $U = \begin{bmatrix} V_\delta & P \end{bmatrix}$ is unitary. Rotating the first equality in \cref{eq:rotate_const} using the unitary matrix $U$ yields an equivalent equality $U^\dagger  \cV_\delta(R_\delta) U = U^\dagger  \cG_V(R_\rho) U$. Applying the orthogonality of $V_{\delta}$, the left-hand side above becomes
	\begin{equation}
		\label{eq:UdeltaU}
		U ^\dagger  \cV_\delta (R_\delta) U =
		\begin{bmatrix}
			R_\delta & 0\cr 0 & 0
		\end{bmatrix}.
	\end{equation}
From facial reduction, it holds that $\range(\Vd) = \range (\cG_V)$ and thus $P^\dagger \cG_V = 0$. Therefore, the right hand-side becomes
\begin{equation}
		\label{eq:UGrhoU}
	U^\dagger  \cG_V(R_\rho) U 
= \begin{bmatrix} \Vd^\dagger  \\ P^\dagger  \end{bmatrix}  \cG_V(R_\rho) \begin{bmatrix} V_\delta & P \end{bmatrix}  = \begin{bmatrix}
	V_{\delta}^\dagger \cG_V(R_\rho) V_{\delta} & 0\\
	0 & 0
\end{bmatrix}.
\end{equation}

\end{proof}

\begin{theorem}
	\label{thm:rotate_const2}
	Let $R_\sigma \in  \H_+^{k_\sigma}$ and $R_\delta \in   \H_+^{k_\delta}$. It holds that
	\begin{equation}
		\cV_\sigma (R_\sigma ) = \cZ_V(R_\delta) \quad \Longleftrightarrow \quad R_\sigma =  \cZUV (R_\delta),
	\end{equation}
	where  $\cZUV ( \cdot) :=  V_\sigma^\dagger  \cZ_{V}(\cdot) V_{\sigma}$.
\end{theorem}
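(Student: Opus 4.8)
The plan is to follow the same template as the proof of \Cref{thm:rotate_const}, with the triple $(\cZ_V, V_\sigma, \cZUV)$ now playing the roles that $(\cG_V, V_\delta, \cGUV)$ played there. First I would choose $P$ so that $U = \begin{bmatrix} V_\sigma & P \end{bmatrix}$ is unitary and rotate the equality $\cV_\sigma(R_\sigma) = \cZ_V(R_\delta)$ into the equivalent congruence $U^\dagger \cV_\sigma(R_\sigma) U = U^\dagger \cZ_V(R_\delta) U$. The assumed orthonormality $V_\sigma^\dagger V_\sigma = I$, hence $V_\sigma^\dagger P = 0$, collapses the left-hand side to $\BlkDiag(R_\sigma, 0)$ exactly as in \cref{eq:UdeltaU}. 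For the right-hand side I would invoke the facial reduction choice $\range(V_\sigma) = \range(\cZ_V)$ recorded in \cref{eq:analytic_Vs}, which gives $P^\dagger \cZ_V = 0$ and therefore $U^\dagger \cZ_V(R_\delta) U = \BlkDiag(V_\sigma^\dagger \cZ_V(R_\delta) V_\sigma, 0)$ in the manner of \cref{eq:UGrhoU}. Comparing the $(1,1)$ blocks then yields the forward implication $R_\sigma = V_\sigma^\dagger \cZ_V(R_\delta) V_\sigma = \cZUV(R_\delta)$.

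For the converse I would substitute $R_\sigma = \cZUV(R_\delta)$ directly into $\cV_\sigma$ and use that $V_\sigma V_\sigma^\dagger$ is the orthogonal projection onto $\range(V_\sigma)$. Since every image $\cZ_V(R_\delta)$ has its range inside $\range(V_\sigma)$, this projection fixes $\cZ_V(R_\delta)$ on both sides, so $\cV_\sigma(\cZUV(R_\delta)) = V_\sigma V_\sigma^\dagger \cZ_V(R_\delta) V_\sigma V_\sigma^\dagger = \cZ_V(R_\delta)$, which closes the equivalence.

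The step that deserves the most care, and that I expect to be the main obstacle, is justifying $P^\dagger \cZ_V = 0$, equivalently $\range(\cZ_V(R_\delta)) \subseteq \range(V_\sigma)$, for \emph{every} $R_\delta \in \H_+^{k_\delta}$ rather than only for the $R_\delta \in \mathcal{R}_\delta$ used to build the minimal face. Here I would reuse the null-space computation inside the proof of \Cref{lem:tranform}: writing $\cZ_V(\cdot) = \sum_j (Z_j V_\delta)(\cdot)(Z_j V_\delta)^\dagger$, any vector orthogonal to $\range(V_\sigma) = \range(\cZ_V(I)) = \range\big(\sum_j Z_j V_\delta V_\delta^\dagger Z_j\big)$ is annihilated by each $Z_j V_\delta$ and hence by $\cZ_V(R_\delta)$ for arbitrary $R_\delta$. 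Thus the range containment, and with it the rotation identity, holds uniformly in $R_\delta$ and independently of feasibility. The inclusion $\range(V_\sigma) \supseteq \range(V_\delta)$ noted via \Cref{lemma:rangeZrelation} is consistent with this but is not needed for the argument.
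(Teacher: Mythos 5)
Your proposal is correct and follows essentially the same route as the paper, whose proof of this theorem is simply the remark that one reuses the argument of \Cref{thm:rotate_const} with the unitary $U = \begin{bmatrix} V_\sigma & P\end{bmatrix}$. Your added justification that $P^\dagger \cZ_V = 0$ holds for \emph{all} $R_\delta \in \H_+^{k_\delta}$ (via the congruence structure of $\cZ_V$ and \Cref{lem:tranform}) is a welcome tightening of a step the paper leaves implicit, but it does not change the underlying approach.
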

\begin{proof}
Using the unitary matrix $U = \begin{bmatrix} \Vs & P \end{bmatrix} $ in the proof of \Cref{thm:rotate_const}, we obtain the statement. 
\end{proof}

With \Cref{thm:rotate_const,thm:rotate_const2}, we reduce the number of linear constraints in \eqref{eq:optprobobjconstrthree} as below.

\begin{equation}
\label{eq:optprobobjintptFRreduced}
\begin{array}{rclll}
&\min  &   \Tr (R_\delta \log (R_\delta) ) - \Tr\big( 
R_\sigma \log(R_\sigma)\big) \\
&\text{s.t.} &   \GV ( R_\rho   ) = \gamma 
\\ && R_\sigma  - \cZUV(R_\delta) = 0
\\ &&  
R_\delta - \cGUV(R_\rho) = 0
\\&&   R_\rho \in \Hp^{n_\rho},  R_\sigma \in
\Hp^{k_\sigma},  R_\delta \in \Hp^{k_\delta}.
\end{array}
\end{equation}

We emphasize that the images of $\cZ_V$ and $\cG_V$ in \eqref{eq:optprobobjconstrthree} are both in $\H^k$  but the images of $\cZUV$ and $\cGUV$ in \eqref{eq:optprobobjintptFRreduced} are in $\H^{k_\sigma}$ and $\H^{k_\delta}$, respectively, and $k_\delta\le k_\sigma \le k$.
The facial reduction performed on the variables $\delta, \sigma$ may yield $k_\delta<k_\sigma$. Hence, the two trace terms in the objective function in \eqref{eq:optprobobjintptFRreduced} cannot be consolidated into one trace term in general.

\index{$k_\delta$, $k_\sigma$}

\begin{remark}
\label{lem:GUV_ZUV_preserve}
The mapping $\cGUV$ satisfies
the properties for $\cG$ in~\eqref{eq:Gmap}.
However, the properties in~\cref{eq:Zmap} do not hold
for the mapping $\cZUV$.
\end{remark}

\subsection{Final Model for QKD key rate calculation}
\label{sect:finalmodel}

In this section we have a main result, i.e.,~the main model that we work
on and the derivatives. 
We eliminate some of variables in the model \cref{eq:optprobobjintptFRreduced} to obtain a simplified formulation.
Define $\textdef{$\WcZ$} := \cZUV \circ \cGUV$ and $\textdef{$\WcG$} : = \cGUV$. We substitute $R_\sigma = \WcZ(R_\rho)$ and $R_\delta = \WcG(R_\rho)$ back in the objective function in \cref{eq:optprobobjintptFRreduced}. For simplification, and by abuse of notation, we set
\[
\fbox{$\rho \leftarrow R_\rho,\, \sigma \leftarrow R_\sigma,\,
	\delta \leftarrow R_\delta$.} 
\]
We obtain the final model for \QKD key rate calculation problem:
\begin{equation}
\label{eq:finalredinrho}
\begin{array}{rclll}
p^* =&\min  &  f(\rho) = 
\Tr\big( \WcG (\rho) (\log \WcG (\rho) ) \big) - \Tr \Big( \WcZ(\rho)  \log \WcZ (\rho)  \Big) 
\\
&\text{s.t.} &   \GV ( \rho   ) = \gV
\\&&   \textdef{$\rho \in \Hp^{n_\rho}$},
\end{array}
\end{equation} 
where $\gV \in \R^{\mv}$ for some $\mv \le m$. The final model is essentially in the same form as the original model \cref{eq:keyrateopt}; see also \Cref{prop:Zopprops}.
\index{$\mv$, number of linear constraints after \FR}

Note that the final model now has smaller number of variables   compared to the original problem~\cref{eq:keyrateopt}.
Moreover, the objective function $f$, with the modified linear maps $\WcG,\WcZ$, is well-defined and analytic
on $\rho\in \Hpp^{n_\rho}$, i.e.,~we have
\begin{equation}
\label{eq:rho_welldef}
\rho \succ 0 \implies \WcG(\rho) \succ 0 \implies
 \WcZ(\rho) 
\succ 0.\footnote{This follows from~\cite[Theorem 6.6]{con:70}, i.e.,~from  
$\relint (A C) = A \relint (C)$, where $C$ is a convex set and 
$A : \mathbb{E}^n \to \mathbb{E}^m$ is a linear map.}
\end{equation}
\label{pg:conex}

Some derivative background is given in~\Cref{sect:derivsquant}.
We conclude this section by presenting the derivative formulae for gradient and Hessian. The simple formulae in \Cref{thm:explgradHess} are a direct application of 
\Cref{lem:gradfromchainrule}. Throughout \Cref{sect:intpt} we work with these derivatives. 
\begin{theorem}[derivatives of regularized objective]
\label{thm:explgradHess}
Let $\rho \succ 0$. The gradient of $f$ in~\eqref{eq:finalredinrho} is
\[
\nabla f(\rho) =   \fbox{$\WcG^\dagger  (\log[\WcG(\rho)]) +
\WcG^\dagger  (I)$} -
   \fbox{$\WcZ^\dagger  (\log[\WcZ(\rho)]) +
\WcZ^\dagger  (I)$}.
\]
The Hessian in the direction $\Delta \rho$ is then
\[
\begin{array}{rcl}
\nabla^2 f(\rho)(\Delta \rho) 
&=&
     \fbox{$\WcG^\dagger 
(\log^\prime[\WcG(\rho)](\WcG(\Delta \rho)) $}
-
\fbox{$\WcZ^\dagger 
(\log^\prime[\WcZ(\rho)](\WcZ(\Delta \rho)) $} .
\end{array}
\] 
\end{theorem}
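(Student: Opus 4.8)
The plan is to reduce everything to the single scalar-matrix function $g(X) := \Tr(X \log X)$ and then push derivatives through the linear maps $\WcG$ and $\WcZ$ via the chain rule. First I would invoke \eqref{eq:rho_welldef} to guarantee that $\WcG(\rho) \succ 0$ and $\WcZ(\rho) \succ 0$ whenever $\rho \succ 0$, so that $g$ is evaluated strictly inside the positive definite cone. There the matrix logarithm, and hence $g$, is analytic with well-defined Fréchet derivatives of all orders; this is precisely what places us in the setting of \Cref{lem:gradfromchainrule}.

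The core computation is the gradient of $g$. Writing $g(X) = \Tr \phi(X)$ with the scalar function $\phi(t) = t\log t$, the derivative-of-trace-of-a-primary-function rule gives $\nabla g(X) = \phi'(X) = \log X + I$, since $\phi'(t) = \log t + 1$. I expect this identity to be the main obstacle, because its clean statement conceals the fact that the cross term $\Tr\!\big(X \cdot \log'[X](\Delta)\big)$ must collapse to $\Tr(\Delta)$. I would justify this collapse either through the eigenvalue/divided-difference representation of $\log'[X]$, or simply by appealing to the primary-matrix-function derivative formula already packaged in \Cref{lem:gradfromchainrule}.

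With $\nabla g$ in hand, I would apply the chain rule to $f(\rho) = g(\WcG(\rho)) - g(\WcZ(\rho))$. Because $\WcG$ and $\WcZ$ are linear, the directional derivative is $\ip{\nabla g(\WcG(\rho))}{\WcG(\Delta\rho)} - \ip{\nabla g(\WcZ(\rho))}{\WcZ(\Delta\rho)}$, and transferring the maps onto their adjoints through $\ip{Y}{\WcG(\Delta\rho)} = \ip{\WcG^\dagger(Y)}{\Delta\rho}$ yields the stated gradient $\nabla f(\rho) = \WcG^\dagger(\log[\WcG(\rho)]) + \WcG^\dagger(I) - \WcZ^\dagger(\log[\WcZ(\rho)]) - \WcZ^\dagger(I)$.

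For the Hessian I would differentiate this gradient once more in the direction $\Delta\rho$. The constant terms $\WcG^\dagger(I)$ and $\WcZ^\dagger(I)$ drop out, and differentiating $\rho \mapsto \WcG^\dagger(\log[\WcG(\rho)])$ by the chain rule, using linearity of $\WcG^\dagger$ to pass the derivative inside the adjoint, produces $\WcG^\dagger(\log'[\WcG(\rho)](\WcG(\Delta\rho)))$, with the analogous term for $\WcZ$; this is exactly the claimed Hessian. The only remaining subtlety is confirming that $\log'[\,\cdot\,]$ is the genuine Fréchet derivative of the matrix log at the strictly positive definite point $\WcG(\rho)$, which is again supplied by \eqref{eq:rho_welldef} together with the analyticity of $\log$ on $\Hpp$.
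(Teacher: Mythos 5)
Your proposal is correct and follows essentially the same route as the paper: the paper's proof of \Cref{thm:explgradHess} is a one-line application of \Cref{lem:gradfromchainrule} to each of the two trace terms, with positive definiteness of $\WcG(\rho)$ and $\WcZ(\rho)$ supplied by \cref{eq:rho_welldef}, exactly as you describe. The collapse of the cross term $\Tr\bigl(X\,\log^\prime[X](\Delta)\bigr)=\Tr(\Delta)$ that you flag as the main obstacle is precisely the identity $\log^\prime(\delta)(\delta)=I$ used inside the paper's proof of that lemma, so no new ingredient is needed.
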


\index{subdifferential, $\partial f$}

\label{pg:subgradient}

Given a real-valued convex function $f$, a \textdef{subgradient} $\phi$
of $f$ at $x$ is a vector satisfying  $f(y) \ge f(x) + \<\phi,y-x\>$,
$\forall y$. The set of gradients of $f$ at $x$ is called the \emph{subdifferential}, denoted by $\partial f(x)$.
For a differentiable function $f$, the subdifferential is a singleton,
$\partial f(x) = \{ \nabla f(x)\}$; see e.g.,~\cite{con:70}.
\begin{theorem}
	\label{cor:fsubdiff}
Let $f$ be as defined in~\eqref{eq:finalredinrho} and 
let $\{\rho_i\}_i \subseteq \mathbb{H}_{++}^{n_\rho}$ 
with $\rho_i \to \bar \rho$.
If  we have the convergence $\lim_i \nabla f(\rho_i) = \phi$, then
\[
\phi \in \partial f(\bar \rho).
\]
\end{theorem}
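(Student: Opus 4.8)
The plan is to pass to the limit in the subgradient inequalities satisfied at the interior points $\rho_i$, using convexity together with the continuity of the reformulated objective $f$ on all of $\Hp^{n_\rho}$. First I would record that $f$ in~\eqref{eq:finalredinrho} is convex: through the equivalences of this section it coincides with a quantum relative entropy $D(\cdot\|\cdot)$ precomposed with linear maps in $\rho$, and $D$ is jointly convex~\cite[Section 11.3]{MR1796805}; hence $f$ is convex. Since each $\rho_i\in\Hpp^{n_\rho}$, \eqref{eq:rho_welldef} gives $\WcG(\rho_i),\WcZ(\rho_i)\succ 0$, so by \Cref{thm:explgradHess} $f$ is differentiable at $\rho_i$. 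Therefore $\nabla f(\rho_i)$ is the (unique) subgradient of $f$ at $\rho_i$, and for every $y\in\Hp^{n_\rho}$,
\[
f(y)\ \ge\ f(\rho_i)+\langle \nabla f(\rho_i),\, y-\rho_i\rangle .
\]

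Next I would fix an arbitrary $y$ and let $i\to\infty$. The key point is that the reformulated $f$ is \emph{continuous} on all of $\Hp^{n_\rho}$, not merely on the interior: each of its two terms has the form $\Tr(X\log X)$ with $X=\WcG(\rho)\succeq 0$ or $X=\WcZ(\rho)\succeq 0$ depending continuously (linearly) on $\rho$, and $\Tr(X\log X)=\sum_j \lambda_j\log\lambda_j$ is continuous in $X\succeq 0$ because $t\mapsto t\log t$ extends continuously to $t=0$ by $0\log 0=0$. Consequently $f(\rho_i)\to f(\bar\rho)$. The linear term also converges, since $\nabla f(\rho_i)\to\phi$ and $\rho_i\to\bar\rho$ give $\langle\nabla f(\rho_i),\, y-\rho_i\rangle\to\langle\phi,\, y-\bar\rho\rangle$. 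Taking the limit in the displayed inequality therefore yields
\[
f(y)\ \ge\ f(\bar\rho)+\langle\phi,\, y-\bar\rho\rangle \qquad \forall\, y ,
\]
which is exactly the subgradient inequality, so $\phi\in\partial f(\bar\rho)$.

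The work of the proof sits in the continuity claim for $f$ on the boundary, which is precisely what the facial reduction reformulation buys us: in the original entropy the term $\Tr(\delta\log\sigma)$ is only lower semicontinuous and can jump to $+\infty$ as $\sigma$ becomes singular, whereas \Cref{prop:Zopprops} rewrites it as $\Tr(\sigma\log\sigma)$, a continuous function of $\sigma\succeq 0$. I expect the only real subtlety to be the bookkeeping that both $\WcG(\rho)$ and $\WcZ(\rho)$ stay positive semidefinite for $\rho\succeq 0$, so that each trace term is defined and finite; this follows from $\cG,\cZ$ being positive-semidefinite preserving together with the conjugations defining $\WcG,\WcZ$. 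I would also stress that the hypothesis $\lim_i\nabla f(\rho_i)=\phi$ cannot be dropped: as $\rho$ approaches the boundary the gradient generally diverges through $\log\WcG(\rho)$ and $\log\WcZ(\rho)$, and it is only along sequences where the gradients happen to converge that the limit is guaranteed to lie in $\partial f(\bar\rho)$. Equivalently, the statement is the closedness of the subdifferential of the closed proper convex function $f$; the elementary limiting argument above makes it self-contained.
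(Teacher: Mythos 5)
Your proof is correct, but it takes a different route from the paper's. The paper disposes of this in one line by citing the classical characterization of the subdifferential of a closed proper convex function as (the convex hull of) all limits of gradients at nearby points of differentiability, \cite[Theorem 25.6]{con:70}; your argument instead re-derives the special case needed here from first principles, by writing the subgradient inequality at each interior point $\rho_i$ and passing to the limit. The two approaches trade off as follows. The citation is shorter and needs only that $f$ is closed (lower semicontinuous), which holds for any proper convex function after taking its closure; your argument is self-contained and, usefully, makes explicit \emph{where} the facial-reduction reformulation enters --- namely that after \Cref{prop:Zopprops} both terms of $f$ have the form $\Tr(X\log X)$ with $X\succeq 0$, a spectral function of a continuous scalar function on $[0,\infty)$, hence continuous on the closed cone, whereas the original $\Tr(\delta\log\sigma)$ is only lower semicontinuous. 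One small remark: full continuity is more than you need; taking $\liminf$ on the right-hand side of the subgradient inequality shows that lower semicontinuity of $f$ at $\bar\rho$ already suffices, which is exactly the hypothesis under which the cited theorem operates. Your closing observations --- that the hypothesis $\nabla f(\rho_i)\to\phi$ is essential since the gradients generally blow up at the boundary, and that the statement is a closedness property of $\partial f$ --- are accurate and consistent with how the result is used in \Cref{thm:optcondpd}.
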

\begin{proof}
	The result follows from the characterization of the subgradient as containing the convex hull of all limits of gradients, e.g.,~\cite[Theorem 25.6]{con:70}.
\end{proof}

\section{Optimality Conditions, Bounding, \GN Interior Point Method}
\label{sect:intpt}

Arguably, the most popular approach for solving \SDP problems is by
applying a path-following interior point approach to solving perturbed
optimality conditions using Newton's method.  However,
in general, numerical difficulties and instability arise in two ways.
First, the optimality conditions for \SDP problems are overdetermined, 
and a symmetrization is needed to apply a
standard Newton method. Second, block Gaussian elimination is applied to
the linearized Newton system to efficiently solve for the Newton direction.
This is done without regard to partial pivoting to avoid roundoff error
buildup. To avoid these instabilities,  
we apply a projected Gauss-Newton, \GNp, interior point approach,
\Cref{sect:projGNipalgor}, to solve the 
perturbed optimality conditions for our model~\cref{eq:finalredinrho}. 

In this section, we begin by presenting the optimality conditions for the model
\eqref{eq:finalredinrho}; then the \GN search direction is introduced 
in~\Cref{sect:GNsrchdir}; the projected versions are discussed 
in~\Cref{sect:projGN}; we present the algorithm itself
in~\Cref{sect:projGNipalgor}. 
We finish this section with bounding strategies in~\Cref{sect:dualbnds}. 
The important provable lower bound is
presented in~\Cref{sect:lborigprob}.

\subsection{Optimality Conditions and Duality}

We first obtain perturbed optimality conditions for~\cref{eq:finalredinrho}
with positive barrier parameters. This is most often done by 
using a barrier function
and adding terms such as $\mu_\rho \log \det ( \rho )$ to the Lagrangian.
After differentiation we obtain $\mu_\rho \rho^{-1}$ that we equate with
the dual  variable $Z_\rho$. After multiplying through by $\rho$ we
obtain the \textdef{perturbed complementarity equations},
e.g.,~$Z_\rho \rho -\mu_\rho I =0$.

\begin{theorem}
\label{thm:optcondfinalprob}
Let $L$ be the Lagrangian for \cref{eq:finalredinrho}, i.e.,
\[
L(\rho, y ) = f(\rho) +\langle y, \GV  (\rho)  - \gammaV \rangle,
\, y\in \R^{\mv}.
\]
The following holds for problem~\cref{eq:finalredinrho}.
\begin{enumerate}
\item
\[
\begin{array}{rcl}
p^*  
&=& \max\limits_{y}  
        \min\limits_{\rho\succeq 0} L(\rho,y).
\end{array}
\]
\item
\label{item:strduality}
The  \textdef{Lagrangian dual} of \cref{eq:finalredinrho} is
\[
\begin{array}{rcl}
d^*  
&=& \max\limits_{Z\succeq 0,y}  \left(
        \min\limits_{\rho} (L(\rho,y)-\langle Z,\rho\rangle)\right),
\end{array}
\]
and strong duality holds for~\cref{eq:finalredinrho}, i.e., $d^*
=p^* $
and $d^* $ is attained for some $(y,Z) \in \R^\mV\times \H_+^{n_\rho}$.
\item
\label{item:optcharac}
The primal-dual pair $(\rho,(y,Z))$, with $\partial f(\rho) \neq \emptyset$,
is optimal if, and only if,
\begin{equation}
\label{eq:pdoptcond}
\begin{array}{rcll}
0 &\in& \partial f(\rho) + \Gamma_V^\dagger (y) -Z & \text{(dual feasibility)} \\
0 &=& \GV (\rho) - \gammaV & \text{(linear primal feasibility)} \\
 0  &=& \langle \rho,Z\rangle & \text{(complementary slackness)} \\
0 &\preceq &\rho,Z& \text{(semidefiniteness primal feasibility)}.
\end{array}
\end{equation}

\end{enumerate}

\end{theorem}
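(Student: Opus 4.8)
The plan is to treat~\cref{eq:finalredinrho} as a convex conic program and to apply standard Lagrangian and conic duality, with the crucial ingredient being the strict feasibility (Slater point) that facial reduction has engineered. First I would record the structural facts I need. The objective $f$ is convex: the quantum relative entropy $D$ is jointly convex (as noted after~\cref{eq:objective}) and $\rho \mapsto (\WcG(\rho),\WcZ(\rho))$ is linear, so $f$ is a closed proper convex function on $\Hp^{n_\rho}$. Moreover, by the regularization statement~\cref{eq:rho_welldef} together with the facial reduction of~\Cref{sect:FR}, there is a strictly feasible $\rho\succ 0$ with $\GV(\rho)=\gammaV$ lying in the relative interior of $\dom f$. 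This Slater condition is exactly what the \FR steps were designed to produce, and it is the engine for everything that follows.

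For item~(1), weak duality is immediate: for any fixed $y$ and any feasible $\rho'$ (so $\GV(\rho')=\gammaV$) the linear term vanishes, giving $\min_{\rho\succeq 0}L(\rho,y)\le L(\rho',y)=f(\rho')$, hence $\max_y\min_{\rho\succeq 0}L(\rho,y)\le p^*$. For the reverse inequality I would invoke the strong duality theorem for convex programs whose dualized constraints are affine and which admit a feasible point in the relative interior of the effective domain of the objective (e.g.,~Rockafellar~\cite{con:70}); the strictly feasible $\rho\succ 0$ supplies precisely this relative-interior point, so there is no duality gap and $p^*=\max_y\min_{\rho\succeq 0}L(\rho,y)$.

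For item~(2), I would pass to the full conic Lagrangian in which $\rho\succeq 0$ is also dualized by $Z\succeq 0$. Because $\rho\succ 0$ is a strict (Slater) point of the conic program, the refined conic strong duality theorem yields both zero duality gap $d^*=p^*$ and attainment of the dual optimum by some $(y,Z)\in\R^{\mV}\times\Hp^{n_\rho}$. Equivalently, one may derive this from item~(1) by writing the inner problem $\min_{\rho\succeq 0}L(\rho,y)$ as a conic program in $\rho$ and dualizing its semidefinite constraint, again justified by the same Slater point.

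For item~(3), given $\partial f(\rho)\neq\emptyset$ I would characterize optimality through the saddle-point property of the full Lagrangian $L(\rho,y)-\langle Z,\rho\rangle$ guaranteed by the strong duality of~(2). Stationarity in $\rho$ is the subgradient inclusion $0\in\partial f(\rho)+\GV^\dagger(y)-Z$ (dual feasibility); minimizing the saddle function over $\rho\succeq 0$ together with $Z\succeq 0$ forces $\langle\rho,Z\rangle=0$ and $\rho,Z\succeq 0$ (complementary slackness and semidefiniteness); and stationarity in $y$ returns $\GV(\rho)=\gammaV$ (primal feasibility). Conversely, the four conditions reconstruct a saddle point, hence optimality. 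I expect the main obstacle to be the rigor around the nonsmooth, extended-valued objective: since $f$ is not differentiable on the boundary of the cone, I must work with subdifferentials throughout, make sure the sum rule $\partial\big(f+\langle y,\GV(\cdot)\rangle\big)=\partial f+\GV^\dagger(y)$ applies, and use the subgradient behaviour captured in~\Cref{cor:fsubdiff}. Verifying that the relative-interior/Slater hypothesis is genuinely \emph{met} by the \FR-reformulated model, rather than merely assumed, is precisely where the facial reduction developed earlier does the essential work.
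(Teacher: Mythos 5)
Your proposal is correct and follows essentially the same route as the paper's proof: standard Lagrangian/min--max duality, with strong duality and dual attainment supplied by the Slater point that the facial reduction guarantees, and item~(3) read off as the standard convex-programming optimality (KKT) conditions using the subdifferential of $f$ (cf.~\Cref{cor:fsubdiff}). The paper's argument is terser --- it simply cites the standard min--max argument and the existence of a Slater point from the regularization --- but the substance is identical to what you describe.
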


\begin{proof}

The proof is given in~\Cref{sec:proofoptcondfinalprob}.
\end{proof}

\subsubsection{Perturbed Optimality Conditions }
\label{sec:pertoptconds}

Many interior-point based algorithms try to solve the optimality
conditions  \cref{eq:finalredinrho} by solving a sequence of perturbed
problems while driving the perturbation parameter $\mu \downarrow 0$.
The parameter $\mu$ gives a measure of the duality gap. 
In this section, we present the perturbed optimality conditions for
\QKDp. 
Note that the optimality conditions in~\cref{eq:pdoptcond}
assumed the existence of the subdifferential. This assumption is not
required for the perturbed optimality conditions as we can use existing
gradients.

\begin{theorem}
\label{thm:optcondpd}
The barrier function for~\Cref{eq:finalredinrho} with barrier parameter 
$\mu >0$ is
\[
B_\mu( \rho, y ) = f(\rho)       +\langle y, \GV  (\rho)    - \gammaV \rangle
-\mu \log \det (\rho) .
\]
With $Z = \mu \rho^{-1}$ scaled to $Z\rho -\mu I =0$, we obtain the
perturbed optimality conditions for \cref{eq:finalredinrho} at $\rho,  Z   \succ 0$, $y$:
\begin{equation}
\label{eq:pertoptcond}
\begin{array}{lcccl}
\text{dual feasibility } \ \ \ \quad  ( \nabla B_{\rho} =0)& : &
F_\mu^d  & = &  \nabla_\rho f(\rho) + 
  \Gamma_V^\dagger  (y)   - Z =0  
 \\
\text{primal feasibility } \quad (\nabla B_{y} =0 ) &:&
  F_\mu^p  & = &  \GV  (\rho)    - \gammaV  =0    \\
\text{perturbed complementary slackness }&:&
  F_\mu^c & = & Z \rho - \mu I= 0 . \cr 
\end{array}
\end{equation}
In fact, for each $\mu>0$ there is a unique primal-dual solution
$\rho_\mu,y_\mu,Z_\mu$ 
satisfying~\cref{eq:pertoptcond}. This
defines the central path as $\mu \downarrow 0$.
Moreover, 
\[
(\rho_\mu,y_\mu,Z_\mu) \underset{\mu\downarrow 0}{\to} 
(\rho,y,Z) \text{  satisfying } \cref{eq:pdoptcond}.
\]
\end{theorem}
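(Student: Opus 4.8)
The plan is to handle the three assertions of \Cref{thm:optcondpd}—the explicit form of the perturbed system, existence and uniqueness of its solution for each fixed $\mu>0$, and convergence of the central path as $\mu\downarrow 0$—in turn. For the first, I would simply differentiate the barrier. Writing $\nabla_\rho B_\mu(\rho,y) = \nabla_\rho f(\rho) + \GV^\dagger(y) - \mu\rho^{-1}$ (using the explicit gradient of \Cref{thm:explgradHess}) and $\nabla_y B_\mu(\rho,y) = \GV(\rho) - \gammaV$, setting each to zero and substituting $Z:=\mu\rho^{-1}$ reproduces $F_\mu^d=0$ and $F_\mu^p=0$ verbatim, while the equivalence $Z=\mu\rho^{-1}\iff Z\rho-\mu I=0$ on $\rho\succ0$ is the symmetric rearrangement giving $F_\mu^c=0$ without an explicit inverse.

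For existence and uniqueness at fixed $\mu>0$, I would study the barrier subproblem $\min\{\,f(\rho)-\mu\log\det\rho : \GV(\rho)=\gammaV,\ \rho\succ0\,\}$. Its objective is strictly convex on $\Hpp^{n_\rho}$, since $-\log\det$ is strictly convex there and $f$ is convex (the quantum relative entropy is jointly convex and its arguments $\WcG(\rho),\WcZ(\rho)$ are linear in $\rho$). The normalization $\Gamma_1=I,\gamma_1=1$ carried through \FR forces $\tr\rho=1$, so the feasible set has compact closure inside the density matrices, whereas $-\mu\log\det\rho\to+\infty$ as any eigenvalue of $\rho$ tends to $0$; hence any minimizing sequence stays in the interior, a minimizer $\rho_\mu\succ0$ exists, and strict convexity makes it unique. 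Strict feasibility, guaranteed by the facial reduction of \Cref{sect:FR}, together with the surjectivity of $\GV$ established there, then yields a unique multiplier $y_\mu$ from stationarity, and $Z_\mu=\mu\rho_\mu^{-1}\succ0$ is determined.

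For $\mu\downarrow 0$ I would first bound the central path and then pass to limits. Here $\rho_\mu$ already lies in the compact density-matrix set, and complementarity gives the vanishing gap $\langle Z_\mu,\rho_\mu\rangle=\tr(\mu I)=\mu\,n_\rho\to0$. Testing dual feasibility against a Slater point $\hat\rho\succ0$ with $\GV(\hat\rho)=\gammaV$ makes the $y_\mu$-term cancel (both points satisfy $\GV(\cdot)=\gammaV$), and via convexity of $f$ with $f(\rho_\mu)\ge p^*$ this bounds $\langle Z_\mu,\hat\rho\rangle$; since $\hat\rho\succ0$ it bounds $\tr Z_\mu$, so $Z_\mu$ stays bounded. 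Extracting a convergent subsequence $(\rho_\mu,y_\mu,Z_\mu)\to(\bar\rho,\bar y,\bar Z)$, the equations $F_\mu^p=0$ and $F_\mu^c=0$ pass to the limit to give linear primal feasibility and $\langle\bar\rho,\bar Z\rangle=0$, while $\bar\rho,\bar Z\succeq0$ survive by closedness of the cone; by strong duality in \Cref{thm:optcondfinalprob} the limit is a primal--dual optimal pair.

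The delicate step, which I expect to be the main obstacle, is dual feasibility in the limit: because $\nabla_\rho f$ contains $\log\WcG(\rho)$ and $\log\WcZ(\rho)$, it can blow up as $\rho_\mu$ approaches the boundary, so $\nabla_\rho f(\rho_\mu)$ need not converge naively. I would resolve this using the structure $\WcZ=\cZUV\circ\WcG$ together with $\range(\WcG(\rho))\subseteq\range(\WcZ(\rho))$, which forces the divergent eigen-directions of the two log-terms to cancel in $\nabla_\rho f$, so that $\nabla_\rho f(\rho_\mu)$ stays bounded and, after refining the subsequence, converges to some $\phi$; boundedness of $y_\mu$ then follows from that of $Z_\mu$ and $\nabla_\rho f(\rho_\mu)$ via injectivity of $\GV^\dagger$. \Cref{cor:fsubdiff} places $\phi\in\partial f(\bar\rho)$, and letting $\mu\downarrow0$ in $0=\nabla_\rho f(\rho_\mu)+\GV^\dagger(y_\mu)-Z_\mu$ yields $0\in\partial f(\bar\rho)+\GV^\dagger(\bar y)-\bar Z$, precisely the dual-feasibility line of \eqref{eq:pdoptcond}.
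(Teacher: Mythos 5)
Your proposal is correct and follows essentially the same route as the paper: the perturbed system is read off from the KKT conditions of the barrier subproblem $\min\{f(\rho)-\mu\log\det\rho : \GV(\rho)=\gammaV\}$, uniqueness comes from strict convexity of the barrier term together with boundedness of the level sets, and the limit $\mu\downarrow 0$ is handled by the standard log-barrier compactness argument combined with \Cref{cor:fsubdiff}. The only difference is one of detail: the paper delegates the convergence step to cited references plus \Cref{cor:fsubdiff}, whereas you spell out the boundedness of $Z_\mu$, $y_\mu$, and $\nabla f(\rho_\mu)$ and the subsequence extraction explicitly, which is a faithful expansion of the same argument rather than a different one.
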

\begin{proof}	
The optimality condition \eqref{eq:pertoptcond} follows from the
\label{pg:theconvex}
necessary and sufficient optimality conditions of the convex problem
\[
\min_{\rho} \{ f(\rho) - \mu \log\det (\rho) \ : \ \Gamma_V (\rho) = \gammaV \} 
\]
and setting $Z = \mu \rho^{-1}$. Note that $B_\mu$ is the Lagrangian function of this convex problem.
For each $\mu>0$ there exists a unique solution to \eqref{eq:pertoptcond} due to the strict convexity of the barrier term $-\mu \log \det (\rho)$ and boundedness of the level set of the objective.
The standard log barrier argument~\cite{HillierFrederickS2008LaNP,MR2001b:90002}
and \Cref{cor:fsubdiff} together give the last claim.
\end{proof}

\Cref{thm:optcondpd} above provides an interior
point path following method, i.e.,~for each $\mu\downarrow 0$ we solve
the pertubed optimality conditions
\begin{equation}
\label{eq:optcondpert}
F_\mu (\rho,y,Z)=  
\begin{bmatrix} 
  \nabla_\rho f(\rho) + \Gamma_V^\dagger  (y)   - Z  \cr
    \GV  (\rho)    - \gammaV   \cr
   Z \rho - \mu I 
\end{bmatrix}  = 0, \quad \rho,Z \succ 0.
\end{equation}
The question is how to do this efficiently.
The nonlinear system is overdetermined as
\[
F_\mu : \H^{n_\rho} \times \R^{\mV} \times  \H^{n_\rho} \to
\H^{n_\rho} \times \R^{\mV}\times \C^{n_\rho\times n_\rho}.
\]
Therefore we cannot apply Newton's method directly to the
nonlinear system~\cref{eq:optcondpert},
since the linearization does not yield a square system. 
\label{pgapplyGN}

\subsection{Gauss-Newton Search Direction}
\label{sect:GNsrchdir}

To avoid the instability that is introduced by symmetrizations for
applying Newton's method to the
overdetermined optimality conditions, we use a Gauss-Newton approach.
That is, to solve the optimality conditions~\cref{eq:optcondpert}, we consider
the equivalent nonlinear least squares problem
\[
\min_{\rho,Z\succ 0,y} g(\rho,y,Z) : =\frac 12 \|F_\mu (\rho,y,Z)\|^2
 =\frac 12 \|F_\mu^d (\rho,y,Z)\|_F^2
 +\frac 12 \|F_\mu^p (\rho)\|^2
 +\frac 12 \|F_\mu^c (\rho,Z)\|_F^2.
\]
\index{$g(\rho,y,Z)$, nonlinear least square function}
\index{nonlinear least square function, $g(\rho,y,Z)$}
The \textdef{Gauss-Newton direction, $\dGN$}, is the least squares solution of the
linearization 
\[
F_\mu^\prime(\rho,y,Z) \dGN = -F_\mu(\rho,y,Z),
\]
where $F_\mu^\prime$ denotes the Jacobian of $F_\mu$.
\index{$F_\mu^\prime$, Jacobian of $F_\mu$}
\index{Jacobian of $F_\mu$, $F_\mu^\prime$}
\begin{remark}
\label{rem:GNpopular}
The Gauss-Newton method is a popular method for solving nonlinear least
squares problems. It is arguably the method of choice for overdetermined
problems such as the one we have here. It is called Newton's method for
overdetermined systems, e.g.,~\cite{MR1651750}, see also the 
classical book~\cite{DennSch:83}. In particular, it is very successful in cases
where the residual at optimality is small. And, in our case the residual is zero at optimality.
Other symmetrization schemes are discussed in e.g.,~\cite{MR1778232}.
\end{remark}
\begin{lemma}
\label{lem:GNdescent}
 
Under a full rank assumption of $F_\mu^\prime(\rho,y,Z)$, we get
\[
\dGN = -((F_\mu^\prime)(\rho,y,Z)^\dagger  F_\mu^\prime(\rho,y,Z) )^{-1}
          (F_\mu^\prime(\rho,y,Z))^\dagger F_\mu(\rho,y,Z).
\]
Moreover, if $\nabla g(\rho,y,Z)\neq 0$, then
$\dGN$ is a descent direction for $g$.
\end{lemma}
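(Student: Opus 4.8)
The plan is to recognize this as the standard characterization of the Gauss-Newton step together with its descent property, transcribed to the real inner product structure on the Hermitian/complex spaces in play. Throughout I would abbreviate $A := F_\mu^\prime(\rho,y,Z)$ for the Jacobian and $b := -F_\mu(\rho,y,Z)$ for the negated residual, so that by definition $\dGN$ is the minimizer of $\tfrac12\|A\,d-b\|^2$ over all directions $d$, the least squares solution of $A\,d = b$.

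First I would derive the closed form. Since $\dGN$ minimizes a convex quadratic, it is characterized by the stationarity (normal) equations $A^\dagger(A\,\dGN - b)=0$, i.e.~$A^\dagger A\,\dGN = A^\dagger b$. Under the full column rank assumption on $A=F_\mu^\prime(\rho,y,Z)$, the Gram operator $A^\dagger A$ is positive definite, hence invertible, so $\dGN = (A^\dagger A)^{-1}A^\dagger b = -((F_\mu^\prime)^\dagger F_\mu^\prime)^{-1}(F_\mu^\prime)^\dagger F_\mu$, which is the asserted formula.

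Next I would establish the descent property. The key identity is $\nabla g(\rho,y,Z) = (F_\mu^\prime)^\dagger F_\mu$, obtained by applying the chain rule to $g=\tfrac12\|F_\mu\|^2=\tfrac12\langle F_\mu, F_\mu\rangle$ and reading off the adjoint $(F_\mu^\prime)^\dagger$ with respect to the real inner product of \eqref{def:complex_innerprod}. Writing $w:=\nabla g = (F_\mu^\prime)^\dagger F_\mu$ and substituting the closed form yields
\[
\langle \nabla g, \dGN\rangle = -\big\langle w,\, ((F_\mu^\prime)^\dagger F_\mu^\prime)^{-1} w\big\rangle .
\]
Because $(F_\mu^\prime)^\dagger F_\mu^\prime$ is positive definite, so is its inverse, and therefore the right-hand side is strictly negative whenever $w=\nabla g\neq 0$. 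Thus $\langle \nabla g,\dGN\rangle < 0$, which is precisely the statement that $\dGN$ is a descent direction for $g$.

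The computations are entirely routine; the only point deserving care, and the one I would treat as the main (mild) obstacle, is the bookkeeping of the real inner product structure. Since $F_\mu$ maps into a product space whose last factor is $\C^{n_\rho\times n_\rho}$ rather than a Hermitian space, one must consistently use the real inner product from \Cref{sect:innerprod} when forming $A^\dagger$, the normal equations, and $\nabla g$. Once the correct adjoint is in place, positive definiteness of $A^\dagger A$ under the full-rank hypothesis delivers both claims without further subtlety.
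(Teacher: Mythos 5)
Your proposal is correct and follows essentially the same route as the paper: derive the closed form from the normal equations under the full-rank hypothesis, identify $\nabla g = (F_\mu^\prime)^\dagger F_\mu$, and conclude that $\langle \nabla g, \dGN\rangle = -\langle \nabla g, ((F_\mu^\prime)^\dagger F_\mu^\prime)^{-1}\nabla g\rangle < 0$ by positive definiteness of the Gram operator. You simply spell out the final inner-product computation and the real-inner-product bookkeeping more explicitly than the paper does.
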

\begin{proof}
The gradient of $g$ is, omitting the variables,
\[
\nabla g = (F_\mu^\prime)^\dagger  (F_\mu );
\]
and the Gauss-Newton direction is the least squares solution of the
linearization $F_\mu^\prime \dGN = -F_\mu$, i.e.,~under a full rank
assumption, we get the solution from the normal equations as
\[
\dGN = -((F_\mu^\prime)^\dagger  F_\mu^\prime )^{-1}
          (F_\mu^\prime)^\dagger F_\mu.
\]
We see that the inner product with the gradient is indeed negative, hence a descent direction.
\end{proof}

We now give an explicit representation of the linearized system for
\eqref{eq:optcondpert}. We define the (right/left matrix multiplication)
linear maps
\[
\cMZ,\ \cMrho  : \H^{n_\rho} \to \C^{n_\rho\times n_\rho}, \quad 
\textdef{$\cMZ(\Delta X) = Z\Delta X$}, \,  
\textdef{$\cMrho(\Delta X) = \Delta X\rho$}.
\]
Then the linearization of \eqref{eq:optcondpert} is 
\index{$\dGN$, \GN-direction}
\begin{equation}
\label{eq:GNorig}
\begin{array}{rcl}
F_\mu^\prime \dGN 
=
\begin{bmatrix}
\nabla^2 f(\rho) \Delta \rho + \Gamma_V^\dagger (\Delta y) -\Delta Z  \cr
\GV (\Delta \rho) \cr
Z \Delta \rho + \Delta Z \rho 
\end{bmatrix}
\vspace{.1in}
=
\begin{bmatrix}
\nabla^2 f(\rho)  & \Gamma_V^\dagger  & -I   \cr
\GV && \cr
\cMZ   &&   \cMrho 
\end{bmatrix}
\begin{pmatrix}
\Delta \rho\cr \Delta y \cr \Delta Z
\end{pmatrix}
\approx&
-F_\mu.
\end{array}
\end{equation}
We emphasize that the last term is in $\C^{n_\rho\times n_\rho}$ and the system is
overdetermined.
The adjoints of $\cMZ,\cMrho$ are discussed in~\Cref{sect:lintrsadj}, 
\Cref{lem:WRadj,lem:Srho}.
Solving the system \eqref{eq:GNorig}, we obtain  the \textdef{\GN-direction, $\dGN$}$\in \H^{n_\rho}\times
\R^{\mv}\times \H^{n_\rho}$.

\subsection{Projected Gauss-Newton Directions}
\label{sect:projGN}

The \GN direction in \eqref{eq:GNorig} solves a relatively large overdetermined
linear least squares system and does not explicitly exploit the zero blocks. We
now proceed to take advantage of the special structure of the linear
system by using projection and block elimination.

\subsubsection{First Projected Gauss-Newton Direction}
\label{sec:1stprojGN}

Given the system \eqref{eq:GNorig}, we can make a substitution for $\Delta Z$ using the first block equation 
\begin{equation}
\label{eq:deltaZ}
\Delta Z = F_\mu^d+\nabla^2 f(\rho) \Delta \rho + \Gamma_V^\dagger (\Delta y) .
\end{equation}
This leaves the two blocks of equations
\index{$\dGN$, \GN-direction}
\index{\GN-direction, $\dGN$}
\[
\begin{array}{rcl}
\left(F_\mu^{pc}\right)^\prime \begin{pmatrix}
\Delta \rho \cr \Delta y
\end{pmatrix}
&=&
\begin{bmatrix}
\GV ( \Delta \rho ) \cr
Z \Delta \rho +
\left(\nabla^2 f(\rho) \Delta \rho + \Gamma_V^\dagger ( \Delta y ) \right) \rho 
\end{bmatrix}
\vspace{.1in}
\\&=&
\begin{bmatrix}
\GV & \cr
\cMZ +\cMrho \nabla^2f(\rho)    &   \cMrho \Gamma_V^\dagger  
\end{bmatrix}
\begin{pmatrix}
\Delta \rho \cr \Delta y
\end{pmatrix}
\\&\approx&
-\begin{bmatrix}
F^p_\mu\cr F^c_\mu +F^d_\mu \rho
\end{bmatrix},
\end{array}
\]
where the superscript in $F_\mu^{pc}$ stands for the primal and complementary slackness constraints.

The adjoint equation follows:
\[
\left[  \left(F_\mu^{pc}\right)^\prime  \right]^\dagger 
\begin{pmatrix}
r_p \cr R_c
\end{pmatrix}
=
\begin{bmatrix}
\Gamma_V^\dagger  & \cMZ^\dagger  +\nabla^2f(\rho)\cMrho^\dagger  \cr
0 &  \GV\cMrho^\dagger 
\end{bmatrix}
\begin{pmatrix}
r_p \cr R_c
\end{pmatrix}.
\]
In addition, we can evaluate the condition number of the system using
$ \left(  \left(F_\mu^{pc}\right)^\prime  \right)^\dagger  \left(F_\mu^{pc}\right)^\prime $.
Note that we include the adjoints as they are needed for matrix free
methods that exploit sparsity.

\subsubsection{Second Projected Gauss-Newton Direction}
\label{sect:2ndprojGNdir}

We can further reduce the size of the linear system by making further 
variable substitutions. 
Recall that in \Cref{sec:1stprojGN} we solve the system with a variable in $\H^{n_\rho} \times \R^{\mV}$, i.e., $n_\rho^2+\mV$ number of unknowns. 
In this section, we make an additional substitution using the second block equation in \eqref{eq:GNorig} and reduce the number of the unknowns to $n_\rho^2$. 

\begin{theorem}
\label{thm:2ndprojGN}
Let $\hat{\rho}\in \H^{n_\rho}$ be a feasible point for $\GV (\cdot)= \gV$.
Let \textdef{$\cN^\dagger  :\R^{n_\rho^2-\mV}  \to \H^{n_\rho}$} be an
injective linear map in adjoint form
so that, again by abuse of notation and redefining the primal residual,
we have the nullspace representation:
\[
F_\mu^{p} = \GV (\rho) - \gV  \, \iff \, F_\mu^{p} = 
    \cN^\dagger (v)+ \hat \rho  -\rho  , \text{  for some }v. 
\]
Then the second projected \GN direction,
$\dGN = \begin{pmatrix} \Delta v\cr\Delta y  \end{pmatrix}$,
is found from the least squares solution of
\begin{equation}
\label{eq:proj2GNeqn}
\begin{array}{l}
\fbox{$
\left[Z  \cN^\dagger (\Delta v) +\nabla^2 f(\rho) 
\cN^\dagger (\Delta v)\rho \right]  + \left[ \Gamma_V^\dagger (\Delta y ) \rho \right]
  =
- F_\mu^c -Z F_\mu^{p} - \left( F_\mu^d +\nabla^2 f(\rho)F_\mu^{p}
\right) \rho.$}
\end{array} 
\end{equation}

\end{theorem}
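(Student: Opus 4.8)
The plan is to build directly on the first projected Gauss--Newton system derived in \Cref{sec:1stprojGN}, where the dual-feasibility block was already used to eliminate $\Delta Z$ exactly, leaving the two coupled blocks
\[
\GV(\Delta\rho) = -F_\mu^p, \qquad
Z\Delta\rho + \big(\nabla^2 f(\rho)\Delta\rho\big)\rho + \big(\Gamma_V^\dagger(\Delta y)\big)\rho = -F_\mu^c - F_\mu^d\rho
\]
in the unknowns $(\Delta\rho,\Delta y)$. The remaining task is to absorb the first (primal-feasibility) block exactly by reparametrizing $\Delta\rho$ through the nullspace map $\cN^\dagger$, thereby trading the $\mV$ primal equations for $n_\rho^2-\mV$ free coordinates and leaving only the complementarity block to be solved in the least squares sense.

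First I would justify the nullspace representation. Since $\GV$ is surjective after facial reduction, $\nul(\GV)$ has real dimension $n_\rho^2-\mV$, and choosing $\cN^\dagger$ injective with $\range(\cN^\dagger)=\nul(\GV)$ gives, for the fixed feasible $\hat\rho$,
\[
\GV(\rho) = \gV \iff \rho-\hat\rho \in \nul(\GV) \iff \rho = \hat\rho + \cN^\dagger(v) \text{ for some } v,
\]
which is exactly the stated equivalence once the primal residual is redefined in $\H^{n_\rho}$-valued form as $F_\mu^p = \cN^\dagger(v)+\hat\rho-\rho$. Linearizing this redefined residual in the increments $(\Delta v,\Delta\rho)$ yields $F_\mu^p + \cN^\dagger(\Delta v) - \Delta\rho$, and setting the Gauss--Newton update of this block to zero — which is always achievable because $\Delta\rho$ enters with the identity — produces the substitution
\[
\Delta\rho = \cN^\dagger(\Delta v) + F_\mu^p.
\]

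Substituting this expression for $\Delta\rho$ into the complementarity block and expanding gives
\[
Z\cN^\dagger(\Delta v) + \big(\nabla^2 f(\rho)\cN^\dagger(\Delta v)\big)\rho + \big(\Gamma_V^\dagger(\Delta y)\big)\rho + ZF_\mu^p + \big(\nabla^2 f(\rho)F_\mu^p\big)\rho = -F_\mu^c - F_\mu^d\rho.
\]
Moving the two terms that depend only on the known residual $F_\mu^p$ to the right-hand side and factoring $\rho$ out of $F_\mu^d\rho+(\nabla^2 f(\rho)F_\mu^p)\rho$ recovers exactly \cref{eq:proj2GNeqn}. The unknowns are now $(\Delta v,\Delta y)$ of total real dimension $(n_\rho^2-\mV)+\mV = n_\rho^2$, while the surviving equation lives in $\C^{n_\rho\times n_\rho}$; the system is therefore still overdetermined, and $\dGN=(\Delta v,\Delta y)$ is taken as its least squares solution, as claimed.

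The expansion itself is routine; the one delicate point is the derivation of $\Delta\rho = \cN^\dagger(\Delta v)+F_\mu^p$. This requires reading $F_\mu^p$ in its redefined, $\H^{n_\rho}$-valued form and recognizing that, since $\Delta\rho$ appears in the linearized primal residual with the identity coefficient, the Gauss--Newton equation for that block is an \emph{exact} elimination rather than a least squares residual. Once this substitution is in hand the theorem follows by inspection, and the dimension bookkeeping confirms that the primal-feasibility equations have been traded for the free coordinates in $\Delta v$ without disturbing the least squares character of the remaining complementarity block.
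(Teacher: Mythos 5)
Your proposal is correct and follows essentially the same route as the paper: both proofs are sequential block eliminations in which $\Delta Z$ is removed via the dual-feasibility block and $\Delta\rho$ is then reparametrized exactly as $\Delta\rho = F_\mu^{p}+\cN^\dagger(\Delta v)$ from the linearized (redefined, $\H^{n_\rho}$-valued) primal residual, after which substitution into the complementarity block and rearrangement gives \cref{eq:proj2GNeqn}. The only difference is presentational — you start from the already-reduced system of the first projection and add an explicit justification of the nullspace representation via surjectivity of $\GV$, whereas the paper linearizes the full three-block system \cref{eq:modoptcondpertnullsp} directly — and the algebra is identical.
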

\begin{proof}
Using the new primal feasibility representation, the perturbed optimality conditions in~\cref{eq:optcondpert} become:
\begin{equation}
\label{eq:modoptcondpertnullsp}
F_\mu (\rho,v,y,Z)=  
\begin{bmatrix} 
  F_\mu^d\cr
 F_\mu^{p}\cr
   F_\mu^c
\end{bmatrix} 
=
\begin{bmatrix} 
  \nabla_\rho f(\rho) + \Gamma_V^\dagger ( y )   - Z  \cr
 \cN^\dagger ( v )  + \hat \rho -\rho \cr
   Z \rho - \mu I 
\end{bmatrix}  
 = 0, \quad \rho,Z \succ 0.
\end{equation}
After linearizing the system \eqref{eq:modoptcondpertnullsp} 
we use the following to find the \GN search direction:
\[
\begin{array}{rcl}
F_\mu^\prime \dGN 
=
\begin{bmatrix}
\nabla^2 f(\rho) \Delta \rho + \Gamma_V^\dagger (\Delta y) -\Delta Z  \cr
\cN^\dagger (\Delta v) -\Delta \rho\cr
Z \Delta \rho + \Delta Z \rho 
\end{bmatrix}
\vspace{.1in}
\approx
-F_\mu.
\end{array}
\]
From the first block equation we have
\[
\begin{array}{rcl}
\Delta Z 
&=&
 F_\mu^d+\nabla^2 f(\rho) \Delta \rho + \Gamma_V^\dagger ( \Delta y )
\\&=&
 F_\mu^d+\nabla^2 f(\rho) (F_\mu^{p}+\cN^\dagger (\Delta v))
             + \Gamma_V^\dagger ( \Delta y ).
\end{array}
\]
From the second block equation, we have
\[
\Delta \rho = F_\mu^{p}+\cN^\dagger (\Delta v).
\]
Substituting $\Delta Z$ and $\Delta \rho$ into $Z \Delta \rho + \Delta Z \rho $ gives
\[
\begin{array}{rclcl}
Z \Delta \rho + \Delta Z \rho 
&=&
Z ( F_\mu^{p}+\cN^\dagger (\Delta v))
 +\left[F_\mu^d+\nabla^2 f(\rho) 
        (F_\mu^{p}+\cN^\dagger (\Delta v))
             + \Gamma_V^\dagger (\Delta y) \right]\rho
\\&=&
\left[Z  \cN^\dagger (\Delta v) +\nabla^2 f(\rho) 
\cN^\dagger (\Delta v)\rho \right]  
+ \left[ \Gamma_V^\dagger (\Delta y ) \rho \right]
+ Z F_\mu^{p} + \left( F_\mu^d +\nabla^2 f(\rho)F_\mu^{p}  \right) \rho .
\end{array} 
\]
Rearranging the terms, the third block equation becomes
\[
\begin{array}{rcl}
F_\mu^{c\prime}\begin{pmatrix}\Delta v\cr \Delta y\end{pmatrix}
 & = &
\left[Z  \cN^\dagger (\Delta v) +\nabla^2 f(\rho) 
\cN^\dagger (\Delta v)\rho \right]  + \left[ \Gamma_V^\dagger ( \Delta y ) \rho \right]
 \\& = &
- F_\mu^c -Z F_\mu^{p} - \left( F_\mu^d +\nabla^2 f(\rho)F_\mu^{p}  \right) \rho.
\end{array} 
\]
\end{proof}

The matrix representation of \eqref{eq:proj2GNeqn} is presented in \Cref{app:matrix_repre_sys}.
It is easy to see that the adjoint satisfying 
$ \langle F_\mu^{c\prime} (\dGN ) , R_c \rangle = 
\langle \dGN, (F_\mu^{c\prime})^\dagger (R_c) \rangle $ 
now follows:
\[
(F_\mu^{c\prime})^\dagger (R_c) =
\begin{bmatrix}
\cN \Hvec \cMZ^\dagger    + \cN \nabla^2 f(\rho)\Hvec \cMrho^\dagger   \cr
\GV \cMrho^\dagger  
\end{bmatrix}(R_c) .
\]
After solving the system \eqref{eq:proj2GNeqn}, we make back substitutions to recover the original variables. In other words, 
once we get  $(\Delta v, \Delta y)$ from solving \eqref{eq:proj2GNeqn}, 
we obtain $(\Delta \rho, \Delta y, \Delta Z)$ using the original system:
\[
\Delta \rho = F_\mu^{p}+\cN^\dagger (\Delta v), \, \quad
\Delta Z =
 F_\mu^d+\nabla^2 f(\rho) (F_\mu^{p}+\cN^\dagger (\Delta v))
             + \Gamma_V^\dagger (\Delta y ).
\]
\Cref{thm:exactPR} below illustrates cases where we maintain the exact primal feasibility. 

\begin{theorem}
\label{thm:exactPR}
Let $\alpha$ be a step length and consider the update 
\[
\rho_+ \leftarrow \rho + \alpha \Delta \rho = \rho + F_\mu^{p}+\alpha \cN^\dagger (\Delta v) .
\]
\begin{enumerate}
\item
If a step length one is taken ($\alpha=1$), 
then the new primal residual is exact, i.e.,
\[
F_\mu^p=\cN^\dagger ( v_+ ) + \hat \rho -\rho_+=0.
\]
\item 
Suppose that the exact primal feasibility is achieved. Then the primal residual is $0$ throughout the iterations regardless of the step length.
\end{enumerate}
\end{theorem}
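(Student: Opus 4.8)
The plan is to verify both claims by direct substitution into the redefined nullspace representation $F_\mu^p = \cN^\dagger(v) + \hat\rho - \rho$ from \Cref{thm:2ndprojGN}, relying on the linearity of $\cN^\dagger$ and on the crucial structural feature of the update: the feasibility-restoring term $F_\mu^p$ enters \emph{without} the step-length factor $\alpha$, while only the tangential direction $\cN^\dagger(\Delta v)$ is scaled by $\alpha$. Throughout I would update the nullspace coordinate consistently with the primal step, namely $v_+ = v + \alpha\,\Delta v$ (so $v_+ = v + \Delta v$ when $\alpha = 1$); keeping this bookkeeping aligned with the $\rho$-update is the only point that demands care.

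For the first claim I would set $\alpha = 1$, so the update reads $\rho_+ = \rho + F_\mu^p + \cN^\dagger(\Delta v)$. Substituting the current residual $F_\mu^p = \cN^\dagger(v) + \hat\rho - \rho$ and using linearity gives
\[
\rho_+ = \rho + \big(\cN^\dagger(v) + \hat\rho - \rho\big) + \cN^\dagger(\Delta v) = \cN^\dagger(v + \Delta v) + \hat\rho = \cN^\dagger(v_+) + \hat\rho .
\]
The new primal residual is therefore $\cN^\dagger(v_+) + \hat\rho - \rho_+ = 0$, which is exactly the assertion. The mechanism is transparent: the unscaled $F_\mu^p$ term is precisely what cancels the present infeasibility, placing $\rho_+$ back on the affine manifold $\{\rho : \GV(\rho) = \gV\}$.

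For the second claim I would assume exact feasibility has already been reached, i.e.\ $F_\mu^p = 0$, equivalently $\rho = \cN^\dagger(v) + \hat\rho$. Then for an arbitrary step length $\alpha$ the feasibility term drops out and the update reduces to $\rho_+ = \rho + \alpha\,\cN^\dagger(\Delta v)$, with $v_+ = v + \alpha\,\Delta v$. Linearity of $\cN^\dagger$ yields
\[
\cN^\dagger(v_+) + \hat\rho - \rho_+ = \cN^\dagger(v) + \alpha\,\cN^\dagger(\Delta v) + \hat\rho - \rho - \alpha\,\cN^\dagger(\Delta v) = \cN^\dagger(v) + \hat\rho - \rho = F_\mu^p = 0 ,
\]
so the residual is preserved at zero independently of $\alpha$. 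An immediate induction over the iterations then gives that the primal residual remains $0$ thereafter.

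I do not anticipate a genuine obstacle: both parts are consequences of the affine structure built into the nullspace representation, and the computations are one-line once the substitution is made. The single point I would flag explicitly is the asymmetry in the update---a full step on the feasibility component and a fractional step on the tangential component---since the cancellation in the second part fails if the $F_\mu^p$ term were also scaled by $\alpha$. This asymmetry is exactly the design choice that makes the iterates, once feasible, stay on the feasibility manifold for any subsequent step length.
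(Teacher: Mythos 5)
Your proof is correct and follows essentially the same route as the paper: part one is the identical substitution-and-cancellation using the linearity of $\cN^\dagger$ and the redefinition $F_\mu^p=\cN^\dagger(v)+\hat\rho-\rho$, and part two rests on the same observation that once $F_\mu^p=0$ the update moves only along $\range(\cN^\dagger)$. The only cosmetic difference is that in part two the paper verifies feasibility by computing $\GV(\rho_+)=\GV(\rho)=\gamma$ directly, whereas you verify that the nullspace-form residual $\cN^\dagger(v_+)+\hat\rho-\rho_+$ stays zero under the bookkeeping $v_+=v+\alpha\,\Delta v$; these are equivalent by the nullspace representation in Theorem~\ref{thm:2ndprojGN}, and your explicit remark on the unscaled $F_\mu^p$ term correctly identifies the design choice that makes the argument work.
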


\begin{proof}
If a step length one is taken for updating
\[
\rho_+ \leftarrow \rho + \Delta \rho = \rho + F_\mu^{p}+\cN^\dagger (\Delta v),
\]
then the new primal residual 
\[
\begin{array}{rcl}
(F_\mu^p)_+
&=&
\cN^\dagger (v_+) + \hat \rho - \rho_+ 
\\&=& 
\cN^\dagger (v+\Delta v) + \hat \rho - \rho - F_\mu^{p}-\cN^\dagger (\Delta v)
\\&=&
\cN^\dagger (v) + \hat \rho - \rho - \cN^\dagger (v) -\hat \rho + \rho 
\\&=&
0.
\end{array}
\]
In other words, as for Newton's  method, a step length of one implies that the new
residuals are zero for linear equations.

We can now change the line
search to maintain $\rho_+ = \cN^\dagger (v+ \alpha \Delta v) -\hat
\rho \succ 0$ and preserve exact primal feasibility.
Assume that $F_\mu^{p}=0$.
\[
\begin{array}{rcl}
\rho_+ 
\leftarrow 
 \rho + \alpha \Delta \rho
= \rho + \alpha ( F_\mu^{p}+\cN^\dagger (\Delta v))
= \rho + \alpha \cN^\dagger (\Delta v)
\end{array}
\]
Now, we see that
\[
\GV (\rho_+) = \GV ( \rho + \alpha \cN^\dagger (\Delta v) )
= \GV ( \rho ) = \gamma,
\]
where the last equality follows from the exactly feasibility assumption.
\end{proof}

\subsection{Projected Gauss-Newton Primal-Dual Interior Point Algorithm}
\label{sect:projGNipalgor}

We now present the pseudocode for the Gauss-Newton primal-dual interior point method in~\Cref{algo:GNalgo}.

It is a series of steps that find the least squares solution
of the over-determined linear
system~\cref{eq:proj2GNeqn}, while decreasing the perturbation parameter
$\mu \downarrow 0$, and maintaining the positive definiteness of $\rho, Z$.
\Cref{algo:GNalgo} is summarized as follows: 
\begin{enumerate}
\item At each iteration, we find the projected \GN direction described in
\Cref{sect:2ndprojGNdir}; 
See \Cref{algo:GNalgo} 
lines: \ref{lst:line2}, \ref{lst:line3} and \ref{lst:line4}.
\item
We then choose a step length that maintains strict feasibility. 
Whenever the step length is one, we attain
primal feasibility for all future iterations; 
See \Cref{algo:GNalgo} line: \ref{lst:line5}.
\item
We decrease the perturbation parameter $\mu$ appropriately, and proceed
to the next iteration;
See \Cref{algo:GNalgo} line: \ref{lst:line7}.
\item
The algorithm stops when the relative duality gap  reaches a prescribed
tolerance or we reach our prescribed
maximum number of iterations; 
See~\Cref{sect:implheur} for implementation details on stopping
criteria, preconditioning, etc.
\end{enumerate}

\index{algorithm, \GN interior point for \QKDp}
\begin{algorithm}
\caption{Projected Gauss-Newton Interior Point Algorithm for \QKDp}
\begin{algorithmic}[1] 
\REQUIRE $\rho\succ 0$, $\mu \in \R_{++}$, $\eta \in (0,1)$
\WHILE{stopping criteria is not met}
\STATE solve \eqref{eq:proj2GNeqn} for $(\Delta v, \Delta y)$ \label{lst:line2}
\STATE $\Delta \rho = F_\mu^{p}+\cN^\dagger (\Delta v)$
\label{lst:line3}
\STATE $\Delta Z =  F_\mu^d+\nabla^2 f(\rho) (F_\mu^{p}+\cN^\dagger (\Delta v))
             + \Gamma_V^\dagger (\Delta y) $
\label{lst:line4}
\STATE choose step length $\alpha$
\label{lst:line5}
\STATE $(\rho,y,Z) \leftarrow (\rho,y,Z) + \alpha (\Delta \rho, \Delta y, \Delta Z)$
\label{lst:line6}
\STATE $\mu \leftarrow \langle \rho,Z\rangle/n_\rho;\, \mu \leftarrow \eta \mu$
\label{lst:line7}
\ENDWHILE
\end{algorithmic}
\label{algo:GNalgo}
\end{algorithm}

\subsection{Dual and Bounding}
\label{sect:dualbnds}
We first look at upper bounds\footnote{Our discussion about upper bounds
here is about upper bounds for the given optimization problem, which are
not necessarily key rate upper bounds of the \QKD protocol under study. This is because the constraints that one feeds into the algorithm might not use all the information available to constrain Eve's attacks.} found from feasible solutions
in~\Cref{prop:iterrefine}. Then we use 
the dual program to provide provable lower bounds for the \FR
problem \eqref{eq:keyrateopt} thus providing lower bounds for the
original problem with the accuracy of \FRp.

\subsubsection{Upper Bounds}
A trivial upper bound is obtained as soon as we have a primal feasible
solution $\hat \rho$ by evaluating the objective function. Our algorithm is a
primal-dual \emph{infeasible} interior point approach. Therefore we
typically have approximate linear feasibility $\GV (\hat \rho) \approx \gV$;
though we do maintain positive definiteness $\hat \rho\succ 0$ throughout the
iterations. Therefore, once we are close to feasibility we can project
onto the affine manifold and hopefully maintain positive definiteness,
i.e.,~we apply iterative refinement by finding the projection
\[
\min_\rho \left\{ \frac 12 \|\rho - \hat \rho\|^2  \ : \ \GV (\rho) = \gV \right\}.
\]
\begin{prop}
\label{prop:iterrefine}
Let $\hat \rho \succ 0,\,F_\mu^p =  \GV  ( \hat \rho )    - \gammaV$. Then
\[
\rho = \hat \rho - \GV^{-1} F_\mu^p 
= \argmin_\rho  \left\{ \frac 12 \|\rho - \hat \rho\|^2  \ : \ \GV (\rho) = \gV \right\},
\]
where we denote \textdef{$\GV^{-1}$, generalized inverse}.
If $\rho \succeq 0$, then $p^*  \leq f(\rho)$.
\end{prop}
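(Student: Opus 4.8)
The plan is to recognize the minimization as the orthogonal projection of $\hat\rho$ onto the affine manifold $\{\rho : \GV(\rho) = \gV\}$ in the Frobenius inner product, and then read off the upper bound from feasibility. First I would note that the objective $\tfrac12\|\rho-\hat\rho\|^2$ is strictly convex and the constraint set is a nonempty affine subspace (nonempty precisely because, after the facial reduction and removal of redundant constraints leading to \eqref{eq:finalredinrho}, the map $\GV$ is surjective), so the problem has a unique minimizer characterized by its first-order stationarity conditions, which are also sufficient by convexity. Introducing a multiplier $y\in\R^{\mV}$ for the linear constraint via the Lagrangian $\tfrac12\|\rho-\hat\rho\|^2+\langle y,\GV(\rho)-\gV\rangle$, stationarity reads $\rho-\hat\rho = -\GV^\dagger(y)$ together with $\GV(\rho)=\gV$; equivalently, the residual $\rho-\hat\rho$ must be orthogonal to $\nul(\GV)$, i.e.\ lie in $\range(\GV^\dagger)$.

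Next I would eliminate $\rho$ and solve for the multiplier. Substituting $\rho = \hat\rho - \GV^\dagger(y)$ into $\GV(\rho)=\gV$ gives $\GV\GV^\dagger(y) = \GV(\hat\rho)-\gV = F_\mu^p$. Since $\GV$ is surjective, the composite $\GV\GV^\dagger$ is a positive definite, hence invertible, operator on $\R^{\mV}$, so $y = (\GV\GV^\dagger)^{-1}F_\mu^p$ and therefore
\[
\rho = \hat\rho - \GV^\dagger(\GV\GV^\dagger)^{-1}F_\mu^p = \hat\rho - \GV^{-1}F_\mu^p,
\]
where $\GV^{-1}:=\GV^\dagger(\GV\GV^\dagger)^{-1}$ is exactly the Moore--Penrose generalized inverse in the surjective case. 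A one-line check then confirms feasibility: $\GV(\rho) = \GV(\hat\rho) - \GV\GV^{-1}F_\mu^p = \GV(\hat\rho) - F_\mu^p = \gV$, matching the claimed formula.

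Finally, the bound. The computed $\rho$ satisfies the linear constraint $\GV(\rho)=\gV$ by construction, so under the additional hypothesis $\rho\succeq 0$ it is a feasible point of the final model \eqref{eq:finalredinrho}; since $p^*$ is the infimum of $f$ over that feasible set, we immediately obtain $p^*\le f(\rho)$. I expect no deep obstacle here: the only points needing care are the justification that $\GV\GV^\dagger$ is invertible, which rests on the surjectivity of $\GV$ established after facial reduction, and the bookkeeping of matching the adjoint notation $\GV^\dagger$ against the generalized-inverse notation $\GV^{-1}$. It is also worth emphasizing that $\rho\succeq 0$ is a genuine hypothesis rather than a consequence, since the Euclidean projection of an interior point $\hat\rho\succ 0$ onto the affine manifold need not remain positive semidefinite; this is exactly why the upper bound is stated conditionally.
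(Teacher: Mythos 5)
Your proof is correct: the paper states \Cref{prop:iterrefine} without proof, and your argument is exactly the standard one the notation presupposes — the Lagrangian stationarity/normal-equations derivation of the orthogonal projection onto the affine manifold, with $\GV^{-1}=\GV^\dagger(\GV\GV^\dagger)^{-1}$ justified by the surjectivity of $\GV$ established after facial reduction, followed by the trivial feasibility observation for the upper bound. Your closing remark that $\rho\succeq 0$ is a genuine hypothesis rather than a consequence is also consistent with how the paper uses the result.
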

\index{generalized inverse, $\GV^{-1}$}
In our numerical experiments below we see 
that we obtain valid upper bounds
starting in the early iterations and, as we use a Newton type method,
we maintain exact primal feasibility throughout the iterations resulting
in a zero primal residual, and no further need for the projection.
As discussed above, we take a step length of one as soon as
possible. This means that exact primal feasibility holds for the
remaining iterations and we keep improving the upper bound at each
iteration.

\subsubsection{Lower Bounds for \FR Problem}
\label{sect:LBforFR}
Facial reduction for the affine constraint means that the corresponding
feasible set of the original problem lies within the minimal face
$V_\rho \H^{n_\rho}_+ V_\rho^\dagger$ of the semidefinite cone.
Since we maintain positive definiteness for $\rho,Z$ during the
iterations, we can obtain a lower bound using weak duality.
Note that $\rho \succ 0$ implies that the gradient 
$\nabla f(\rho)$ exists.
\begin{cor}[lower bound for \FR~\cref{eq:finalredinrho}]
\label{cor:lowerbound}
Consider the problem~\cref{eq:finalredinrho}.
Let $\hat \rho,\hat y$ be a primal-dual iterate with $\hat
\rho\succ 0$. Let
\[
\bar Z = \nabla f(\hat \rho) +\Gamma_V^\dagger (\hat y) .
\]
If $\bar Z\succeq 0$,
then a lower bound for problem~\cref{eq:finalredinrho} is
\[
p^*  \geq f(\hat\rho) + \langle \hat y,  \GV (\hat \rho) -\gV \rangle -\langle
\hat \rho,\bar Z\rangle.
\]
\end{cor}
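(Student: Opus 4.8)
The plan is to read this off as a weak-duality inequality, obtained from the convexity of $f$ together with the gradient (first-order) inequality at the strictly feasible iterate $\hat\rho$. The pair $(\hat y, \bar Z)$ plays the role of a dual-feasible point in the sense of \Cref{thm:optcondfinalprob}: the definition $\bar Z = \nabla f(\hat\rho) + \Gamma_V^\dagger(\hat y)$ is exactly the dual-feasibility residual set to zero, and the hypothesis $\bar Z \succeq 0$ supplies the semidefinite dual constraint. Because our \FR guarantees that $\hat\rho \succ 0$ lies in the interior of the domain of $f$, the gradient $\nabla f(\hat\rho)$ exists and is a genuine subgradient (cf.\ \Cref{cor:fsubdiff}), so the inequality
\[
f(\rho) \geq f(\hat\rho) + \langle \nabla f(\hat\rho), \rho - \hat\rho \rangle
\]
holds for every $\rho$. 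This is the one place where strict feasibility is indispensable.

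First I would substitute $\nabla f(\hat\rho) = \bar Z - \Gamma_V^\dagger(\hat y)$ into this inequality and move the adjoint across, using $\langle \Gamma_V^\dagger(\hat y), \rho - \hat\rho\rangle = \langle \hat y, \GV(\rho) - \GV(\hat\rho)\rangle$, to express the linear term through $\GV$. Next I would restrict $\rho$ to be feasible for \eqref{eq:finalredinrho}, so that $\GV(\rho) = \gV$ collapses the $\GV(\rho)$ contribution and produces the primal-infeasibility correction $\langle \hat y, \GV(\hat\rho) - \gV\rangle$; note that $\hat\rho$ itself is \emph{not} assumed feasible, which is precisely why this term survives. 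Finally, since $\bar Z \succeq 0$ and $\rho \succeq 0$, I would discard the nonnegative term $\langle \bar Z, \rho\rangle \geq 0$, leaving
\[
f(\rho) \geq f(\hat\rho) + \langle \hat y, \GV(\hat\rho) - \gV\rangle - \langle \hat\rho, \bar Z\rangle
\]
for every feasible $\rho$; taking the infimum over feasible $\rho$ replaces the left-hand side by $p^*$ and gives the claim.

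The computation is routine once the gradient inequality is in place, so the only delicate point I anticipate is its validity at boundary points $\rho \succeq 0$, where $f$ may be nondifferentiable or even infinite. I would resolve this by treating $f$ as a convex function extended by $+\infty$ outside its domain: when $f(\rho) = +\infty$ the inequality is vacuous, and otherwise the fact that $\nabla f(\hat\rho)$ is a subgradient at the interior point $\hat\rho$ makes it the standard gradient inequality for convex functions. No differentiability of $f$ at $\rho$ is needed anywhere in the argument.
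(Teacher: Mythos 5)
Your proposal is correct and takes essentially the same route as the paper: the paper identifies $(\hat y,\bar Z)$ as a dual-feasible pair (with $\hat\rho$ attaining the inner Lagrangian minimum because $\nabla f(\hat\rho)+\Gamma_V^\dagger(\hat y)-\bar Z=0$) and then cites weak duality from \Cref{thm:optcondfinalprob}, whereas you unroll that weak-duality step explicitly via the subgradient inequality at the interior point $\hat\rho$, substitute $\nabla f(\hat\rho)=\bar Z-\Gamma_V^\dagger(\hat y)$, and discard $\langle\bar Z,\rho\rangle\ge 0$ over the feasible set. The underlying mechanism (convexity plus dual feasibility) is identical, the sign bookkeeping checks out, and your handling of possible nondifferentiability at boundary $\rho$ is the same extended-value convention the paper's $\argmin$ step implicitly relies on.
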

\begin{proof}
Consider the dual problem
\[
d^*  = \max_{y,Z\succeq 0}  
        \min_{\rho\in \H^{n_\rho}} L(\rho,y)-\langle Z,\rho\rangle.
\]
We now have dual feasibility
\[
\bar Z\succeq 0, \,
 \nabla f(\hat \rho) +\Gamma_V^\dagger (\hat y) - \bar Z = 0 \implies
\hat \rho \in \argmin_\rho L(\rho,\hat y) - \langle \bar Z,\rho\rangle.
\]
Since we have dual feasibility, weak duality
in~\Cref{thm:optcondfinalprob},~\Cref{item:strduality} as stated
in the dual problem above yields the result.
\end{proof}
\begin{remark}\label{remark:lowerbound}
We note that the lower bound in~\Cref{cor:lowerbound}
is a simplification of the approach in~\cite{winick2017reliable}, where
after a near optimal solution is found, 
a dual problem of a linearized problem is solved using CVX in MATLAB. Then a strong duality
theorem is assumed to hold and is applied along with a linearization of
the objective function.
Here we do not assume strong duality, though it holds for the facially
reduced problem. And we apply weak duality
to get a theoretically guaranteed lower bound. 

We emphasize that this holds within the margin of error of the \FRp. 
Recall that we started with the problem in~\cref{eq:absQKD}. If we only
apply the accurate \FR based on spectral decompositions, then the lower
bound from~\Cref{cor:lowerbound} is accurate and theoretically valid up
to the accuracy of the spectral decompositions.\footnote{Note that the
condition number of the spectral decomposition of Hermitian matrices is
$1$; see e.g.,~\cite{MR98m:65001}.} In fact, in our numerics, we can
obtain tiny gaps of order $10^{-13}$ when requested; and we have never
encountered a case where the lower bound is greater than the upper
bound. Thus the bound  applies to our original problem as well. 
Greater care must be taken if we had to apply \FR to the entire
constraint $\Gamma(\rho)=\gamma$. The complexity of \SDP feasibility is
still not known. Therefore, the user should be aware of the difficulties
if the full \FR is done. 

A corresponding result for a lower bound for the original problem is
given in~\Cref{cor:lowerboundpert}.
\end{remark}

\subsubsection{Lower Bounds for the Original Problem}
\label{sect:lborigprob}
We can also obtain a lower bound for the case where \FR is performed with 
some error. Recall that we assume that the original
problem~\cref{eq:absQKD} is feasible. We follow the same arguments as 
in~\Cref{sect:LBforFR} but apply it to the original problem. All that
changes is that we have to add a small perturbation to the optimum
$V_\rho \hat R V_\rho^\dagger$ from the \FR problem in order
to ensure a positive definite $\rho$ for differentiability.
The exposing vector from \FR process presents an intuitive choice for
the perturbation.
\begin{cor}
\label{cor:lowerboundpert}
Consider the original problem~\cref{eq:absQKD} and the results
from the theorem of the alternative, \Cref{lem:FRfarkas}, for fixed $y$:
\begin{equation}
\label{eq:thmalternforLB}
0 \neq W = \Gamma^\dagger (y) \succeq 0, \ \gamma^\dagger y =
\epsilon_\gamma, \quad \epsilon_\gamma \geq 0.
\end{equation}
Let the orthogonal spectral decomposition be
\[
W = \begin{bmatrix} V & N \end{bmatrix}  
\begin{bmatrix} D_\delta & 0 \cr 0 & D_>  \end{bmatrix}
\begin{bmatrix} V & N \end{bmatrix}^\dagger, \, D_> \in \mathbb{S}_{++}^r.
\]
Let $0\preceq \eta\approx W$ be the (approximate) 
exposing vector obtained as the nearest rank $r$ positive semidefinite
matrix to W,
\[
W = N D_> N^\dagger + V D_\delta V^\dagger 
= \eta + V D_\delta V^\dagger  .
\]

Let $\hat R, \hat y$ be a primal-dual iterate for the \FR problem,
with $\hat R\succ 0$. 
Add a small perturbation matrix $\Phi \succ 0$ to guarantee that the
approximate optimal solution
\[
\hat \rho_\phi = V\hat R V^\dagger +N\Phi N^\dagger \succ 0.
\]
Without loss of generality, let $\hat y$ be a dual variable
for~\cref{eq:absQKD}, adding zeros to extend the given  vector if needed. Set
\begin{equation}
\label{eq:generalZpsd}
\bar Z_\phi = \nabla f(\hat \rho_\phi ) 
	  + \Gamma^\dagger(\hat y).
\end{equation}

If $\bar Z_\phi  \succeq 0$,
then a lower bound for the original problem~\cref{eq:absQKD} is
\begin{equation}
\label{eq:generalLB}
p^*  \geq f(\hat \rho_\phi) + \langle \hat y,  \Gamma (\hat \rho_\phi)
 -\gamma \rangle 
       -\langle \hat \rho_\phi,\bar Z_\phi\rangle.
\end{equation}
\end{cor}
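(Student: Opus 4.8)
The plan is to reproduce the weak-duality certificate of \Cref{cor:lowerbound} almost verbatim, but now for the \emph{original} problem \eqref{eq:absQKD} with its untouched map $\Gamma$ and objective $f$. The cleanest route is through the subgradient inequality: I would show that $g := \bar Z_\phi - \Gamma^\dagger(\hat y)$, which by the defining identity \eqref{eq:generalZpsd} equals $\nabla f(\hat\rho_\phi)$, is a subgradient of the convex function $f$ at $\hat\rho_\phi$, so that $f(\rho) \ge f(\hat\rho_\phi) + \langle g, \rho - \hat\rho_\phi\rangle$ for every $\rho$. Restricting to an arbitrary feasible $\rho$ (so $\Gamma(\rho) = \gamma$ and $\rho\succeq 0$) and expanding $\langle g, \rho - \hat\rho_\phi\rangle$ using $\langle \bar Z_\phi, \rho\rangle \ge 0$ (both matrices are $\succeq 0$) together with $\langle \hat y, \Gamma(\rho)\rangle = \langle \hat y, \gamma\rangle$, then taking the infimum over the feasible set, produces exactly the right-hand side of \eqref{eq:generalLB}. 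Equivalently, one packages this as weak duality $p^* \ge d^*$ for the Lagrangian dual $d^* = \max_{y,\,Z\succeq 0}\min_\rho \big(L(\rho,y) - \langle Z,\rho\rangle\big)$ of \eqref{eq:absQKD}, with $(\hat y,\bar Z_\phi)$ dual feasible and $\hat\rho_\phi$ attaining the inner minimum; the two formulations coincide.

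Two routine ingredients I would dispatch first. Since the facially reduced problem carries fewer constraints than \eqref{eq:absQKD}, I pad $\hat y$ with zeros on the discarded constraints, as the statement permits, so that $\Gamma^\dagger(\hat y)\in\H^n$ and the pairings are unambiguous. The hypothesis $\bar Z_\phi \succeq 0$ then supplies dual cone feasibility outright, and convexity of $f$ -- the joint convexity of the quantum relative entropy recalled in \Cref{sect:objfn} -- is what upgrades the stationarity relation $g = \nabla f(\hat\rho_\phi)$ to the global subgradient inequality above.

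The main obstacle is justifying that $\nabla f(\hat\rho_\phi)$ is a genuine subgradient of the \emph{original} objective at $\hat\rho_\phi$. This is exactly the difficulty flagged in \Cref{rem:notposdef}: because $\cG$ does not preserve positive definiteness, $\cG(\hat\rho_\phi)$ and $\cZ(\cG(\hat\rho_\phi))$ are structurally rank deficient, so $f$ need not be classically differentiable on $\H^n$ even though $\hat\rho_\phi\succ 0$. Here the perturbation does the real work: $\hat\rho_\phi = V\hat R V^\dagger + N\Phi N^\dagger \succ 0$ with $\hat R,\Phi\succ 0$ keeps $\hat\rho_\phi$ strictly inside the cone, and I would argue that the extended-logarithm gradient formula -- the analogue of \Cref{thm:explgradHess} applied to $\cG,\cZ$ -- is realized as a limit of gradients taken along positive definite points, hence lies in $\partial f(\hat\rho_\phi)$ by the subgradient-as-limit characterization of \Cref{cor:fsubdiff}; finiteness of $f$ throughout is guaranteed by \Cref{lemma:rangeZrelation}.

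Finally, I would emphasize that validity of \eqref{eq:generalLB} never uses feasibility or near-optimality of $\hat\rho_\phi$, since weak duality holds for \emph{any} dual feasible pair. Consequently the spectral data $W$, $D_\delta$, $D_>$, $\eta$ and the residual $\epsilon_\gamma = \gamma^\dagger y$ enter only through the choice of perturbation direction $\range(N)$, and hence govern the \emph{tightness} of the bound rather than its correctness -- which is precisely what lets this certificate survive an approximate facial reduction.
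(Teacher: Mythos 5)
Your proposal is correct and follows essentially the same route as the paper: the paper's proof is exactly the weak-duality certificate, observing that $\bar Z_\phi \succeq 0$ together with $\nabla f(\hat\rho_\phi) + \Gamma^\dagger(\hat y) - \bar Z_\phi = 0$ makes $(\hat y,\bar Z_\phi)$ dual feasible with $\hat\rho_\phi$ attaining the inner Lagrangian minimum, and your subgradient-inequality expansion is just that argument unpacked. Your extra care about why $\nabla f(\hat\rho_\phi)$ is a legitimate subgradient of the original (non-facially-reduced) objective is a worthwhile addition that the paper defers to the remark following the corollary, where it notes one must form the positive-definite-preserving maps $\WcG,\WcZ$ to evaluate the gradient.
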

\begin{proof} 
By abuse of notation, we let $f,L$ be the objective function and
Lagrangian for ~\cref{eq:absQKD}. 
Consider the dual problem
\[
d^*  = \max_{y,Z\succeq 0}  
        \min_{\rho\in \H^n} \big(L(\rho,y)-\langle Z,\rho\rangle\big).
\]
We now have dual feasibility
\[
\bar Z_\phi\succeq 0, \,
 \nabla f(\hat \rho_\phi) +\Gamma^\dagger (\hat y) - \bar Z_\phi = 0 \implies
\hat \rho_\phi \in \argmin_\rho \big(L(\rho,\hat y) - \langle \bar Z_\phi,\rho\rangle\big).
\]
Since we have dual feasibility, weak duality
in~\Cref{thm:optcondfinalprob},~\Cref{item:strduality} as stated
in the dual problem above yields the result.
\end{proof}
\begin{remark}
We note that~\cref{eq:generalZpsd} with $\bar Z_\phi$ is dual feasibility
(stationarity of the Lagrangian) for an optimal $\hat \rho_\phi$.
Therefore, under continuity arguments, we expect $\bar Z_\phi\succeq 0$ to
hold as well.

In addition, for implementation we need to be able to evaluate $\nabla
f(\hat \rho_\phi)$. Therefore, we need to form the positive definite
preserving maps $\WcG, \WcZ$, but without performing \FR on the
feasible set. That we can do this accurately using a spectral
decomposition follows from~\Cref{lem:tranform}.
\end{remark}

\section{Numerical Testing}\label{sect:tests}

We compare our algorithm to other algorithms by considering six \QKD
protocols including four variants of the Bennett-Brassard 1984 (BB84)
protocol, twin-field \QKD and discrete-modulated
continuous-variable \QKD. In~\Cref{sect:testexamples} we include the descriptions of protocol examples
that we use to generate instances for the numerical tests. We note that while it is possible to simplify the optimization problem of some protocols using protocol-specific properties as discussed in \Cref{sec:intro_convex}, we have not performed those protocol-specific simplifications since we aim to demonstrate the generality of our method for a wide class of protocols and in particular the ability to handle problems with considerably large problem sizes.

We continue with the tests
in~\Cref{sect:compnumericstheory,sect:challengingtests,sect:comperform}.
This includes security analysis of some selected \QKD protocols and 
comparative performances among different algorithms. 
In particular, in~\Cref{sect:compnumericstheory}, we compare the results
obtained by our algorithm with the analytical results for selected test examples where tight analytical results can be obtained. 
In \Cref{sect:challengingtests}, we present results where it is quite
challenging for the previous method in~\cite{winick2017reliable} to
produce tight lower bounds. In particular, we consider the
discrete-modulated continuous-variable \QKD protocol and compare results obtained in~\cite{Lin2019}. In~\Cref{sect:comperform}, we compare performances among different algorithms in terms of accuracy and running time.

\subsection{Comparison between the Algorithmic Lower Bound and the Theoretical Key Rate}
\label{sect:compnumericstheory}

We compare results from instances for which there exist tight analytical key rate expressions to demonstrate that our Gauss-Newton method can achieve high accuracy with respect to the analytical key rates. There are known analytical expressions for entanglement-based BB84, prepare-and-measure BB84 as well as measurement-device-independent BB84 variants mentioned in \Cref{sect:testexamples}. We take the measurement-device-independent BB84 as an example since it involves the largest problem size among these three examples and therefore more numerically challenging. In \Cref{fig:experiment1}, we present instances with different choices of parameters for data generation. The instances are tested with a desktop computer that runs with the operating system Ubuntu 18.04.4 LTS, MATLAB version 2019a, Intel Xeon CPU E5-2630 v3 @ 2.40GHz $\times$ 32 and 125.8 Gigabyte memory. We set the tolerance $\epsilon  = 10^{-12}$ for the Gauss-Newton method. 

In \Cref{fig:experiment1}, the numerical lower bounds from the Gauss-Newton method are close to the analytical results to at least 12 decimals and in many cases they agree up to 15 decimals.

\begin{figure}[ht!]
	\centering

	\includegraphics[height=6cm]{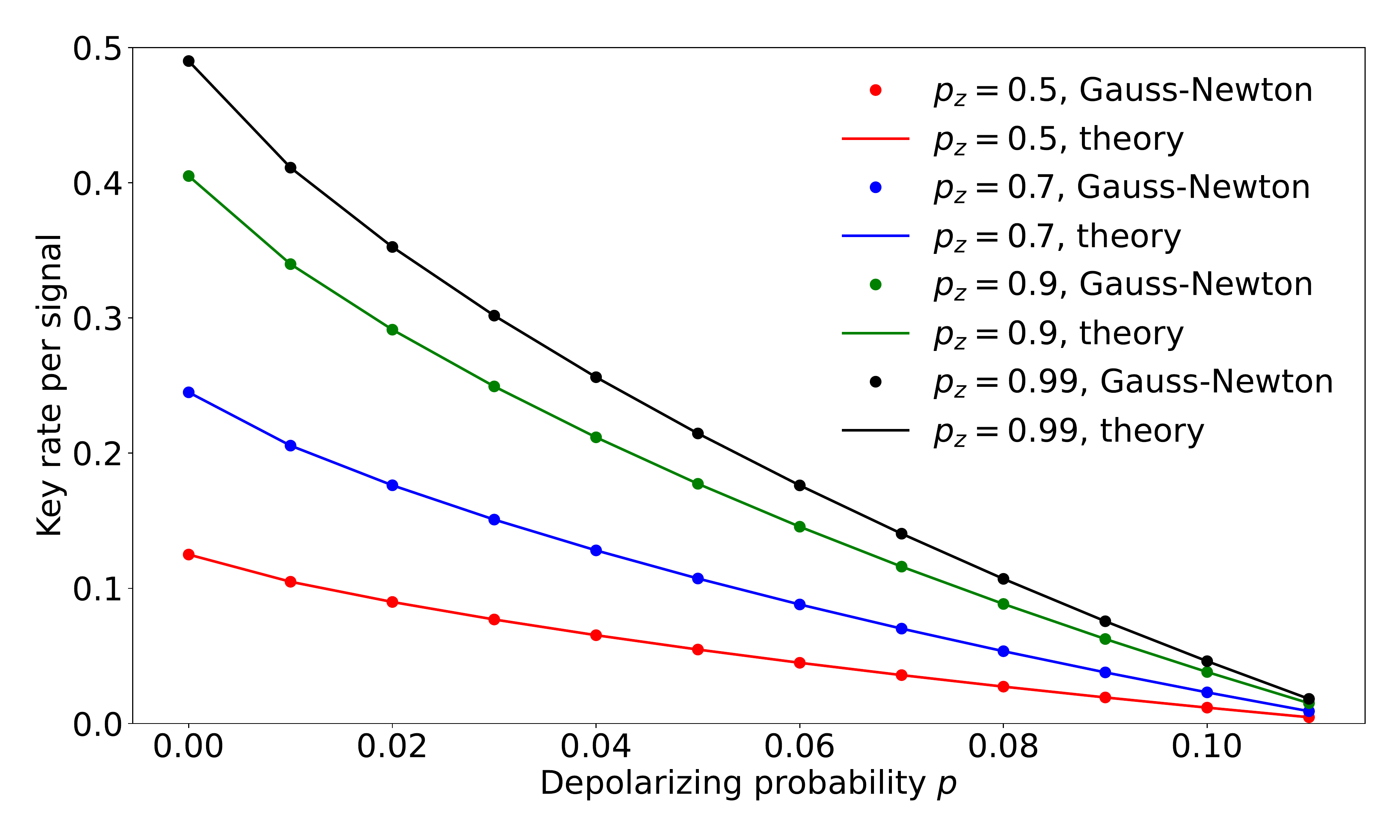}
	\caption{Comparisons of key rate for measurement-device-independent BB84 (\Cref{sect:example3}) between our Gauss-Newton method and the known analytical key rate.}
	\label{fig:experiment1}
\end{figure}

As noted in \Cref{sect:example5}, analytical results are also known when the channel noise parameter $\xi$ is set to zero since in this case, one may argue the optimal eavesdropping attack is the generalized beam splitting attack. This means the feasible set contains essentially a single $\rho$ up to unitaries. Since our objective function is unitarily invariant, one can analytically evaluate the key rate expression. In \Cref{fig:experiment2}, we compare the results from the Gauss-Newton method with the analytical key rate expressions for different choices of distances $L$ (See \Cref{sect:example5} for the description about instances of this protocol example). These instances were run in the same machine as in \Cref{fig:experiment1}. We set the tolerance $\epsilon  = 10^{-9}$ for the Gauss-Newton method.

\subsection{Solving Numerically Challenging Instances}
\label{sect:challengingtests}  

We show results where the Frank-Wolfe method without \FR has difficulties in providing tight lower bounds in certain instances. In \Cref{fig:experiment2}, we plot results obtained previously in~\cite[Figure 2(b)]{Lin2019} by the Frank-Wolfe method without \FRp. In particular, results from Frank-Wolfe method have visible differences from the analytical results starting from distance $L = 60$ km. In addition the lower bounds are quite loose once the distance reaches $150$ km. In fact, there are points like the one around $180$ km where the Frank-Wolfe method cannot produce nontrivial (nonzero) lower bounds. On the other hand, the Gauss-Newton method provides much tighter lower bounds.    

\begin{figure}[ht!]
	\centering
	\includegraphics[height=7cm]{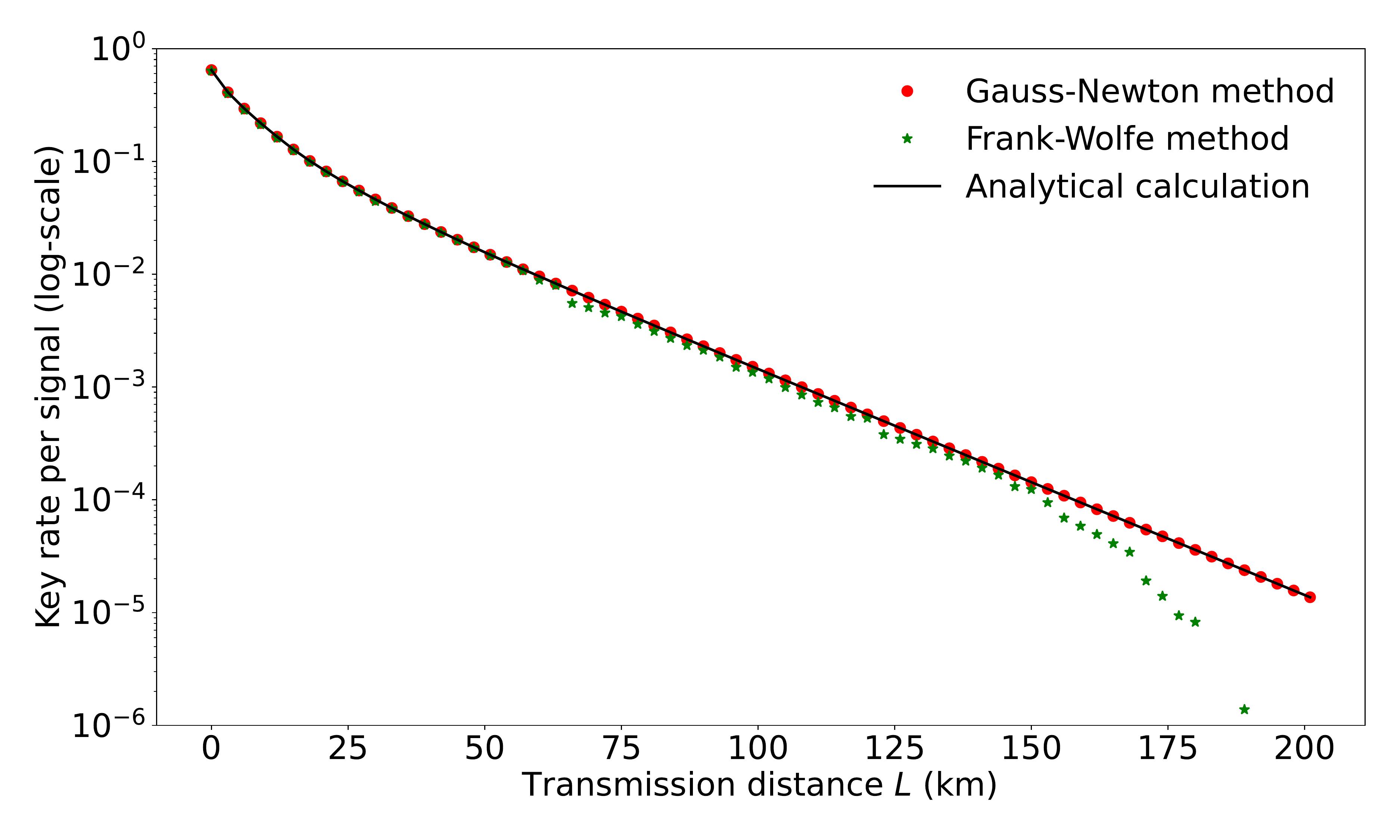}
	\caption{Comparison of key rate for discrete-modulated
continuous-variable \QKD (\Cref{sect:example5}) among our Gauss-Newton method, the Frank-Wolfe method and analytical key rate for the noise $\xi = 0$ case.}
	\label{fig:experiment2}
\end{figure}

In \Cref{fig:experiment2test6}, we show another example to demonstrate the advantages of our method. These instances were run in the same machine as in \Cref{fig:experiment2}. For this discrete-phase-randomized BB84 protocol with $5$ discrete global phases (see \Cref{sect:example6} for more descriptions), the previous Frank-Wolfe method was unable to find nontrivial lower bounds. This is because the previous method can only achieve an accuracy around $10^{-3}$ for this problem due to the problem size. This is insufficient to produce nontrivial lower bounds for many instances since the key rates are on the order of $10^{-3}$ or lower as shown in \Cref{fig:experiment2test6}. On the other hand, due to high accuracy of our method, we can obtain meaningful key rates. The advantage of high accuracy achieved by our method enables us to perform security analysis for protocols that involve previously numerically challenging problems. Like the discrete-phase-randomized BB84 protocol, these protocols involve more signal states, which lead to higher-dimensional problems.

\begin{figure}[ht!]
	\centering

	\includegraphics[height=7cm]{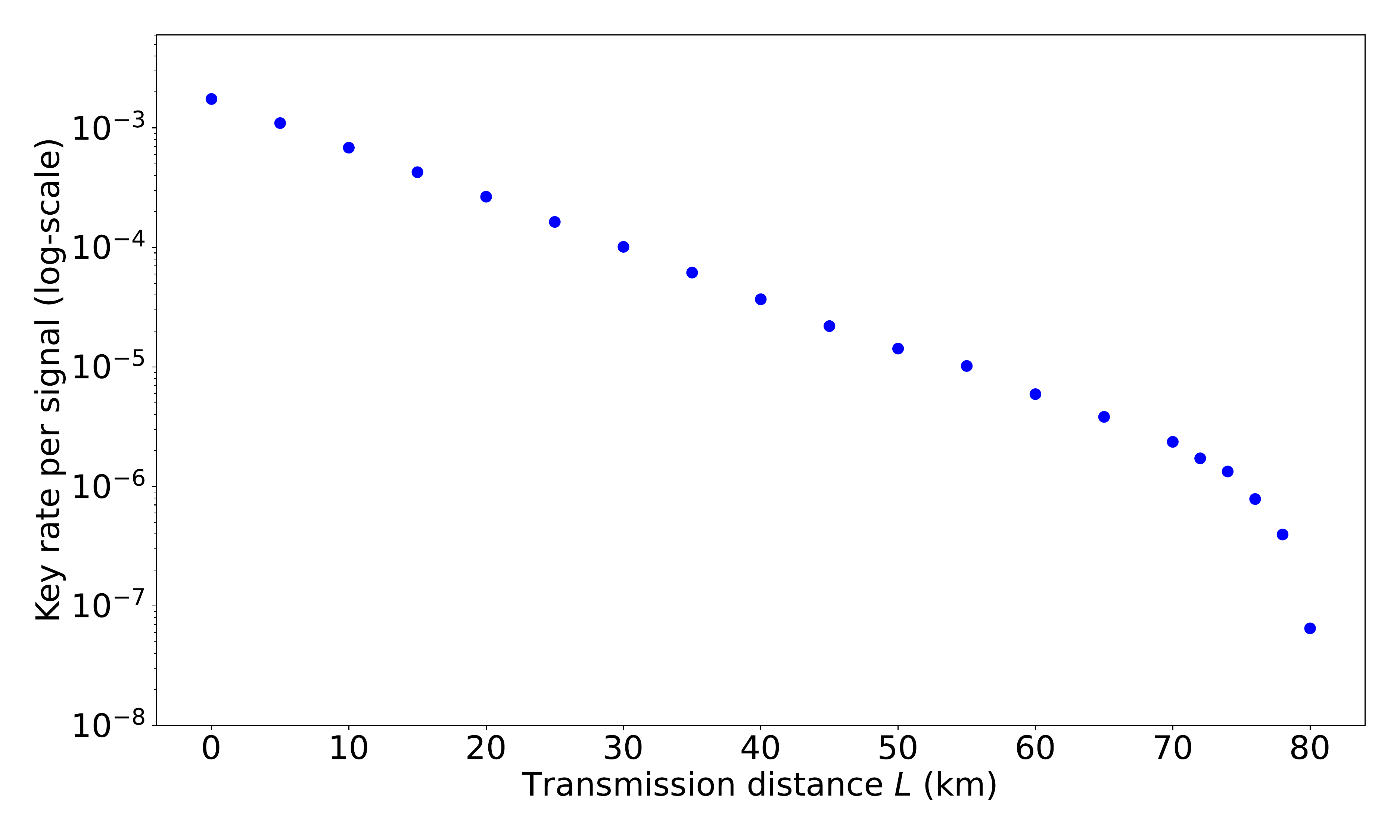}
	\caption{Key rate for discrete-phase-randomized BB84 (\Cref{sect:example6}) with the number of discrete global phases $c=5$. In this plot, the coherent state amplitude is optimized for each distance by a simple coarse-grained search over the parameter regime.}
	\label{fig:experiment2test6}
\end{figure}
\subsection{Comparative Performance}
\label{sect:comperform}

In this section we examine the comparative performance among three algorithms; the Gauss-Newton method, the Frank-Wolfe method and cvxquad. 
The Gauss-Newton method refers to the algorithm developed throughout this paper. The Frank-Wolfe method refers to the algorithm developed in~\cite{winick2017reliable} and cvxquad is developed in~\cite{cvxquad}.
We use \Cref{table:FINAL} to present detailed reports on some selected instances. More numerics are reported throughout \Cref{table:table1,table:table2,table:table3,table:table4,table:table5,table:table6} in  \Cref{sec:additionalreport}.

The instances are tested with MATLAB version 2021a using
Dell PowerEdge R640	Two Intel Xeon Gold 6244 8-core 3.6 GHz (Cascade Lake)	with 192 Gigabyte memory.
For the instances corresponds to the DMCV protocol, we used the tolerance $\epsilon = 10^{-9}$ and the tolerance $\epsilon = 10^{-12}$ was used for the remaining instances. The maximum number of iteration was set to $80$ for the Gauss-Newton method.

\begin{table}[H]
\hspace{-0.5cm}
\tiny{
\begin{tabular}{|ccc||cc||cc||cc||cc|} \hline
\multicolumn{3}{|c||}{\textbf{Problem Data}} & \multicolumn{2}{|c||}{\textbf{Gauss-Newton}} & \multicolumn{2}{|c||}{\textbf{Frank-Wolfe with FR}} & \multicolumn{2}{|c||}{\textbf{Frank-Wolfe w/o FR}} & \multicolumn{2}{|c|}{\textbf{cvxquad with FR}}   \cr\hline
\multicolumn{1}{|c|}{\textbf{protocol}} & \multicolumn{1}{|c|}{\textbf{parameter}} & \multicolumn{1}{|c||}{\textbf{size}} & \multicolumn{1}{|c|}{\textbf{gap}} & \multicolumn{1}{|c||}{\textbf{time}} & \multicolumn{1}{|c|}{\textbf{gap}} & \multicolumn{1}{|c||}{\textbf{time}} & \multicolumn{1}{|c|}{\textbf{gap}} & \multicolumn{1}{|c||}{\textbf{time}} & \multicolumn{1}{|c|}{\textbf{gap}} & \multicolumn{1}{|c|}{\textbf{time}}   \cr\hline
ebBB84 & (0.50,0.05) & (4,16) &5.98e-13 & 0.40 & 1.01e-04 & 92.49 & 1.17e-04 & 93.05 & 5.46e-01 & 214.02  \cr 
ebBB84 & (0.90,0.07) & (4,16) &1.42e-12 & 0.20 & 2.71e-04 & 91.26 & 2.75e-04 & 94.49 & 7.39e-01 & 177.64  \cr 
pmBB84 & (0.50,0.05) & (8,32) &5.51e-13 & 0.23 & 1.12e-04 & 1.38 & 6.47e-04 & 1.91 & 5.26e-01 & 158.64  \cr 
pmBB84 & (0.90,0.07) & (8,32) &5.13e-13 & 0.17 & 7.31e-05 & 1.29 & 6.25e-04 & 38.65 & 6.84e-01 & 233.43  \cr 
mdiBB84 & (0.50,0.05) & (48,96) &1.14e-12 & 1.09 & 4.99e-05 & 104.31 & 5.22e-04 & 134.05 & 1.82e-01 & 557.08  \cr 
mdiBB84 & (0.90,0.07) & (48,96) &2.96e-13 & 0.96 & 2.04e-04 & 106.61 & 2.85e-03 & 126.62 & 4.57e-01 & 537.52  \cr 
TFQKD & (0.80,100.00,0.70) & (12,24) &1.15e-12 & 0.79 & 2.60e-09 & 1.21 & 1.57e-03 & 124.48 & n/a & 0.01  \cr 
TFQKD & (0.90,200.00,0.70) & (12,24) &1.04e-12 & 0.44 & 3.98e-09 & 1.13 & 1.68e-04 & 2.25 & n/a & 0.00  \cr 
DMCV & (10.00,60.00,0.05,0.35) & (44,176) &2.71e-09 & 507.83 & 4.35e-06 & 467.41 & 3.57e-06 & 657.08 & n/a & 0.01  \cr 
DMCV & (11.00,120.00,0.05,0.35) & (48,192) &3.24e-09 & 700.46 & 2.35e-06 & 194.62 & 2.15e-06 & 283.06 & n/a & 0.01  \cr 
dprBB84 & (1.00,0.08,30.00) & (12,48) &4.92e-13 & 1.19 & 3.85e-06 & 96.74 & 9.43e-05 & 141.38 & $\star\star$ & 118.81  \cr 
dprBB84 & (2.00,0.14,30.00) & (24,96) &1.04e-12 & 11.76 & 5.71e-06 & 17.66 & 5.38e-06 & 34.60 & $\star\star$ & 106.24  \cr 
dprBB84 & (3.00,0.10,30.00) & (36,144) &4.96e-13 & 63.26 & 6.48e-04 & 7.38 & 2.08e-02 & 29.00 & $\star\star$ & 582.64  \cr 
dprBB84 & (4.00,0.12,30.00) & (48,192) &3.80e-13 & 330.39 & 4.42e-05 & 13.78 & 9.79e-04 & 175.39 & $\star\star$ & 3303.23  \cr 
 \hline
\end{tabular}

}
\caption{Numerical Report from Three Algorithms}
\label{table:FINAL}
\end{table}

In \Cref{table:FINAL} \textbf{Problem Data} refers to the data used to generate the instances. 
\textbf{Gauss-Newton} refers to the Gauss-Newton method. \textbf{Frank-Wolfe} refers to the Frank-Wolfe algorithm used in~\cite{winick2017reliable} and we use `with \FR (w/o \FRp , resp)' to indicate the model is solved with \FR (without \FRp , resp). 
The header \textbf{cvxquad with \FRp} refers to the algorithm provided by~\cite{cvxquad} with \FR reformulation.
If a certain algorithm fails to give a reasonable answer within a reasonable amount of time, we give a `$\star\star$' flag in the gap followed by the time taken to obtain the error message.
We use `n/a' to indicate the instances for which cvxquad is not applicable due to the size differences in the images under $\WcG$ and $\WcZ$ due to \FRp.

The following provides details for the remaining headers in \Cref{table:FINAL}.
\begin{enumerate}
\item \textbf{protocol}: the protocol name; refer to \Cref{sect:testexamples}; 
\item \textbf{parameter}: the parameters used for testing; see \Cref{sect:example1} - \Cref{sect:example6} for the ordering of the parameters; 
\item \textbf{size}: the size $(n,k)$ of original problem; $n,k$ are defined in \eqref{eq:Gmap};
\item \textbf{gap}:
the relative gap between the bestub and bestlb;
\begin{equation}
\label{def:relgap}
	\frac
	       {\text{bestub - bestlb}}
	 {1+
	  \frac{|\text{bestub}| + |\text{bestlb}|}2} .
\end{equation}

\item \textbf{time}: time taken in seconds. 
\end{enumerate}

We make some discussions on the formula \cref{def:relgap}.
The best upper bound from Gauss-Newton algorithm is used for all instances for `bestub' in \cref{def:relgap}.
The Gauss-Newton algorithm computes the lower bounds as presented in \Cref{cor:lowerbound}.
The Frank-Wolfe algorithm presented in~\cite{winick2017reliable} obtains the lower bound by a linearization technique near the optimal. As presented in~\cite{cvxquad},
cvxquad uses the semidefinite approximations of the matrix logarithm. The lower bounds from cvxquad are often larger than the theoretical optimal values. This observation indicates that the lower bounds from cvxquad are not reliable.   
Therefore, we adopt the lower bound strategy used in~\cite{winick2017reliable} for cvxquad.

We now discuss the results in \Cref{table:FINAL}.
Comparing the two columns \textbf{gap} and \textbf{time} among
the different methods,
we see that the Gauss-Newton method outperforms other 
algorithms in both the accuracy and the running time. 
For example, comparing \textbf{Gauss-Newton} and
\textbf{Frank-Wolfe with FR}, the gaps and running times from \textbf{Gauss-Newton} are competitive.
There are three instances that \textbf{Gauss-Newton} took longer time. We
emphasize that the gap values with \textbf{Gauss-Newton} illustrate
much higher accuracy.

We now illustrate that the reformulation strategy via \FR contributes to superior algorithmic performances. 
For the columns \textbf{Frank-Wolfe with FR} and 
\textbf{Frank-Wolfe w/o FR}  in \Cref{table:FINAL},
the \FR reformulation contributes to not only giving tighter gaps but also reducing the running time significantly. 
We now consider the column corresponding to 
\textbf{cvxquad with FR}  in \Cref{table:FINAL}.
We see that the algorithm fails (marked with `$\star\star$') with some instances due to the memory shortage.
Facial reduction indeed contributes to the reduction on the problem
sizes. For example, we reduced the problem sizes in
\Cref{table:reducesize}.
\label{pg:problemsizereduction}
\begin{table}[h]
\centering
\begin{small}
\begin{tabular}{|cc|cc|}
\hline
protocol & parameter & $(n,m)$ & $(n_\rho,\mV)$ \\
\hline
pmBB84  & $(0.5,0.05)$ & $(8,21)$ &  $(4,8)$ \\
mdiBB84 & $(0.5,0.05)$ & $(48,305)$ & $(12,34)$ \\
\hline
\end{tabular}
\end{small}
\caption{Reduction in Problem Sizes}
\label{table:reducesize}
\end{table}

\index{$\mV$}
\index{$n_\rho$}

\section{Conclusion}
\label{sect:concl}
\subsection{Summary}
In this paper we have presented a robust numerical method for finding 
\emph{provable tight lower bounds} for the convex optimization problem
that finds the key rate for the \QKD problem. 
Our empirical evidence illustrates consistent 
significant improvements in solution
time and accuracy over previous methods. In particular, we solve 
many problems close to machine accuracy and provide theoretical provable 
accurate lower bounds. This includes previously unsolved problems.
(See e.g.,~\Cref{table:FINAL}.)

This paper used novel convex optimization techniques applied specifically
to the \QKD problem. This includes reformulations of the convex
optimization problem that finds the key rate. The result is a
regularized problem that avoids the need for previously used
perturbations and resulting possible instabilities.
Below, we give a summary of the contributions for the model reformulation and for the algorithm.

\subsubsection{Summary of the Model Reformulation}

We have reformulated, simplified, and stabilized the model 
for \QKD key rate calculation through the sequence 
\[
\cref{eq:keyrateopt}
\xrightarrow{\text{(1)}}
\cref{eq:keyrateoptequivsigdel}
\xrightarrow{\text{(2)}}
\cref{eq:optprob}
\xrightarrow{\text{(3)}}
\cref{eq:optprobobjconstrthree}
\xrightarrow{\text{(4)}}
\cref{eq:optprobobjintptFRreduced}
\xrightarrow{\text{(5)}}
\cref{eq:finalredinrho},
\]
via (1) variable substitutions;
(2) property of $\cZ$ from \Cref{prop:Zopprops};
(3) facial reduction on $\rho,\delta,\sigma$;
(4) rotation of the constraints;
(5) substituting the constraints back to the objective.

\subsubsection{Summary of \Cref{algo:GNalgo}}
\label{sect:summalgor}
Our algorithm \Cref{algo:GNalgo} is based
on a standard primal-dual interior-point approach applied to the \FR
stabilized model. However, it differs in several ways.
\begin{enumerate}
\item
We modify the primal feasibility to use a nullspace representation.
Therefore, both primal and dual feasibility have a similar representation. 
\item
We use a projected Gauss-Newton search direction to account for the
overdetermined least squares problem arising from the optimality
conditions. This means we project the Gauss-Newton direction after
substituting using the primal-dual linear feasibility equations.
\item
We exploit the exact feasibility of linear constraints after a step
length one for the Gauss-Newton method. Therefore, we attempt to take a
primal and/or dual step length one as soon as possible. Exact feasibility
results.
\item
We use a modified form of the dual to obtain a lower bound that is used
along with an upper bound from the objective function to stop the
algorithm when the duality gap is provably small. Our lower bound is
a provably lower bound for the original problem as it is using a
feasible dual point to evaluate the dual value. This is needed, as an
approximate primal optimal value is not exactly optimal.

\end{enumerate}

\subsection{Future Plans}
\label{sect:future}

There are still many improvements that can be made. 
Exact primal feasibility was quickly obtained and maintained throughout 
the iterations. However, accurate dual feasibility was difficult to maintain.
This is most likely due to the rounding errors in the numerical computation of the Hessian $H_{f}(\rho)$ when $\rho$ is near the boundary of the positive semidefinite cone.
This approximation can be improved by including a quasi-Newton approach, 
as we have accurate gradient evaluations.
We maintain high accuracy results even in the cases where the Jacobian 
was not full rank at the optimum. This appears to be due to the 
special data structures and more theoretical analysis at the optimum can be done.

In this work, we considered a model with the linear constraints 
$\Gamma(\rho) = \gamma$ restricted to be equalities. While this model 
covers many interesting \QKD protocols, there are scenarios where inequality
constraints are needed, e.g., when using the flag-state squasher 
\cite{zhang2021security} to reduce the dimension. Moreover, there can be 
additional matrix inequality constraints when the dimension reduction method
\cite{upadhyaya2021dimension} is applied. It is interesting and
important to address possible numerical instabilities introduced by those 
inequality constraints as was done in this paper.

\section*{Acknowledgements}
\addcontentsline{toc}{section}{Acknowledgements}
We thank Kun Fang and Hamza Fawzi for discussions and Cunlu Zhou for kindly providing us the code presented in \cite{Faybusovich2020longstep}.  
The authors H.H., J.I. and H.W. thank the support of the National Sciences and Engineering Research Council (NSERC) of Canada. 
Part of this work was done at the Institute for Quantum Computing, University of Waterloo, which is supported by Innovation, Science and Economic Development Canada. 
J.L. and N.L. are supported by NSERC under the Discovery Grants Program, Grant No. 341495, and also under the Collaborative Research and Development Program, Grant No. CRDP J 522308-17. Financial support for this work has been partially provided by Huawei Technologies Canada Co., Ltd.

\section*{Code Availability}

\addcontentsline{toc}{section}{Code Availability}

The code developed in this work is currently available at this \href{https://www.math.uwaterloo.ca/%7Ehwolkowi/henry/reports/ZGNQKDmainsolverUSEDforPUBLCNJuly31}
{link}.\footnote{\url{https://www.math.uwaterloo.ca/~hwolkowi/henry/reports/ZGNQKDmainsolverUSEDforPUBLCNJuly31/}}
 It will be integrated into the open-source \QKD security software project, which can be accessed via the \href{https://openqkdsecurity.wordpress.com}{link}.\footnote{\url{https://openqkdsecurity.wordpress.com/}}

\newpage
\appendix

\section{Background Results and Proofs}
\label{app:pfstech}

\subsection{Adjoints for Matrix Multiplication}
\label{sect:adjMM}

Adjoints are essential for our interior point algorithm when using
matrix-free methods.
We define the \textdef{symmetrization linear map, $\sym$}, as
$\sym(M) = (M+M^\dagger )/2$.
The \textdef{skew-symmetrization linear map, $\sksym$}, is
$\sksym(M) = (M-M^\dagger )/2$.
\index{$\sym$, symmetrization linear map}
\index{$\sksym$, skew-symmetrization linear map}

\begin{lemma}[adjoint of $\cW(R) := WR$]
\label{lem:WRadj}
Let $W\in \Cnn$ be a given square complex matrix, 
and define the (left matrix
multiplication) linear map $\cW : \Cnn\to\Cnn$ by
$\cW(R)=WR$. Then the adjoint $\cW^\dagger : \Cnn \to \Cnn$ is defined by
\begin{equation}
\label{eq:generalWRadj}
\cW^\dagger (M) = 
  \Re(W)^\dagger \Re(M) +\imag(W)^\dagger \imag(M)
 + i\left( \Re(W)^\dagger \imag(M) -\imag(W)^\dagger \Re(M)\right).
\end{equation}
If $W\in \Hn$ and $\cW:\Hn \to \Cnn$, then the adjoint $\cW^\dagger :\Cnn\to
\Hn$ is defined by
\begin{equation}
\label{eq:adjWMHerm}
\cW^\dagger (M) = 
\sym\left[\Re(W)\Re(M) -\imag(W)\imag(M)\right] +
 i \sksym\left[\imag(W) \Re(M) + \Re(W) \imag(M)\right].
\end{equation}
\end{lemma}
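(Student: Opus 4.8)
The plan is to recognize both displayed formulas as two faces of a single fact: $\cW(R)=WR$ is (the restriction of) complex left-multiplication, whose adjoint for the complex Hermitian form $\Tr(Y^\dagger X)$ is elementary, and the real inner product in~\eqref{def:complex_innerprod} is precisely the real part of that Hermitian form. First I would dispatch \eqref{eq:generalWRadj}. Here $\cW:\Cnn\to\Cnn$ is complex-linear, and from $\Tr((WR)^\dagger M)=\Tr(R^\dagger W^\dagger M)$ the Hermitian adjoint of $\cW$ is left-multiplication by $W^\dagger$. Since $\langle Y,X\rangle=\Re(\Tr(Y^\dagger X))$ is exactly the real part of the Hermitian form, taking real parts of the defining identity $\langle WR,M\rangle_{\mathbb C}=\langle R,W^\dagger M\rangle_{\mathbb C}$ shows that the same map $M\mapsto W^\dagger M$ is also the adjoint for the real inner product (real and complex adjoints of a complex-linear map coincide). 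It then only remains to expand $W^\dagger M$ into real and imaginary parts using $W^\dagger=\Re(W)^\dagger-i\,\imag(W)^\dagger$ and $M=\Re(M)+i\,\imag(M)$; collecting the four products reproduces \eqref{eq:generalWRadj}. This step is purely computational.

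The substantive point is \eqref{eq:adjWMHerm}, where the \emph{domain} is restricted to the real subspace $\Hn\subsetneq\Cnn$. Because $\cW$ now acts on $\Hn$ only, its adjoint must take values in $\Hn$, so the naive answer $WM$ (which is generally not Hermitian) must be corrected. The correct object is the full adjoint of the first part followed by the orthogonal projection $\mathcal{P}_{\Hn}:\Cnn\to\Hn$: for $R\in\Hn$ one writes $\langle WR,M\rangle=\langle R,WM\rangle=\langle R,\mathcal{P}_{\Hn}(WM)\rangle$, using $W^\dagger=W$ in the first equality (since $W\in\Hn$) and $R\perp\Hn^{\perp}$ in the second. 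The key auxiliary fact I would establish is that, under the real inner product, the Hermitian and skew-Hermitian matrices form orthogonal complements: if $H^\dagger=H$ and $S^\dagger=-S$, then $(HS)^\dagger=-SH$ forces $\Tr(HS)$ to be purely imaginary, whence $\langle H,S\rangle=\Re(\Tr(HS))=0$. Since every matrix splits as $\tfrac12(M+M^\dagger)+\tfrac12(M-M^\dagger)$ into its Hermitian and skew-Hermitian parts, this identifies $\mathcal{P}_{\Hn}=\sym$, so $\cW^\dagger(M)=\sym(WM)$.

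To finish, I would expand $WM=(\Re(W)\Re(M)-\imag(W)\imag(M))+i(\Re(W)\imag(M)+\imag(W)\Re(M))$ and use the block identity $\sym(N)=\sym(\Re N)+i\,\sksym(\imag N)$, valid because symmetrization retains the symmetric part of the real block and the skew part of the imaginary block; this yields \eqref{eq:adjWMHerm}. The main obstacle is the bookkeeping in the second part: one must correctly see that restricting the domain to $\Hn$ replaces $WM$ by its symmetrization, and this rests entirely on verifying the orthogonality of the Hermitian and skew-Hermitian subspaces under the real inner product. Once $\mathcal{P}_{\Hn}=\sym$ is in hand, both displayed formulas follow by routine expansion of products of the real and imaginary blocks.
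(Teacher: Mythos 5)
Your proof is correct, but it takes a genuinely different route from the paper's. The paper proceeds by brute force: it expands $WR$ into real and imaginary blocks, writes $\langle \cW(R),M\rangle$ as four trace terms, cycles the traces, and reads off $\cW^\dagger(M)$ by matching the coefficients of $\Re(R)$ and $\imag(R)$; the Hermitian case is then obtained by inserting $\sym$ and $\sksym$ term by term, using that $\Re(R)$ is symmetric and $\imag(R)$ is skew when $R\in\Hn$. You instead make two structural observations: (i) since $\langle Y,X\rangle=\Re\Tr(Y^\dagger X)$ is the real part of the complex Hilbert--Schmidt form, the complex adjoint $M\mapsto W^\dagger M$ of the $\mathbb C$-linear map $\cW$ is automatically its adjoint for the real inner product, which gives \eqref{eq:generalWRadj} by a one-line expansion of $W^\dagger M$; and (ii) restricting the domain to $\Hn$ composes this with the orthogonal projection onto $\Hn$, which you correctly identify as $\sym$ by checking that Hermitian and skew-Hermitian matrices are orthogonal under the real inner product (your argument $\overline{\Tr(HS)}=\Tr(-SH)$ forcing $\Tr(HS)$ purely imaginary is right). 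The final identity $\sym(A+iB)=\sym(A)+i\,\sksym(B)$ for real $A,B$ then yields \eqref{eq:adjWMHerm}. Your approach is shorter and explains \emph{why} the symmetrizations appear (they are the projection $\mathcal{P}_{\Hn}$), at the cost of invoking the coincidence of real and complex adjoints and the orthogonal-complement fact, both of which you justify; the paper's computation is more elementary and self-contained but obscures this structure.
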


\begin{proof}
We have
\[
\begin{array}{rcl}
WR 
&=& (\Re(W) +i\imag(W))(\Re(R) +i\imag(R)) \\
&=& \Re(W)\Re(R) - \imag(W)\imag(R) 
+i\Re(W)\imag(R) +i \imag(W)\Re(R) . \\
\end{array}
\]
Hence,
\[
\Re(WR) = \Re(W)\Re(R) - \imag(W)\imag(R), \quad
\imag(WR) = \Re(W)\imag(R) + \imag(W)\Re(R). 
\]
Then the inner product \cref{def:complex_innerprod} yields
\[
\begin{array}{rcl}
\<\cW(R),M \>  
&=&
\<WR,M \>  
\\&=&
 \< \Re(WR) ,\Re(M) \> + \<\imag(WR),\imag(M)\>.
\end{array}
\]
We first focus on the first term.
\[
\begin{array}{rcl}
\< \Re(WR) ,\Re(M) \>
&=& \trace( \Re(WR)^\dagger \Re(M)   )\\
&=& \trace\left( [ \Re(W)\Re(R) - \imag(W)\imag(R) ]^\dagger \Re(M) \right)\\
&=& \trace\left( [ \Re(W)\Re(R)]^\dagger \Re(M) \right)
    -\trace\left( [ \imag(W)\imag(R) ]^\dagger \Re(M) \right) \\
&=& \trace\left( \Re(R)^\dagger \Re(W)^\dagger \Re(M) \right)
    -\trace\left( \imag(R)^\dagger \imag(W)^\dagger \Re(M) \right) \\
&=& \left\langle \Re(R), \Re(W)^\dagger \Re(M) \right\rangle   
    -\left\langle \imag(R),\imag(W)^\dagger \Re(M) \right\rangle \\
\end{array}
\]
We now focus on the second term.
\[
\begin{array}{rcl}
\< \imag(WR) ,\imag(M) \>
&=& \trace( \imag(WR)^\dagger \imag(M)   )\\
&=& \trace\left( [ \Re(W)\imag(R) + \imag(W)\Re(R) ]^\dagger \imag(M) \right)\\
&=& \trace\left( [ \Re(W)\imag(R)]^\dagger \imag(M) \right)
    + \trace\left( [\imag(W)\Re(R) ]^\dagger \imag(M) \right) \\
&=& \trace\left( \imag(R)^\dagger \Re(W)^\dagger \imag(M) \right)
    +\trace\left( \Re(R)^\dagger\imag(W)^\dagger \imag(M) \right) \\
&=& \left\langle \imag(R), \Re(W)^\dagger \imag(M) \right\rangle   
    +\left\langle \Re(R),\imag(W)^\dagger \imag(M) \right\rangle \\
\end{array}
\]
Therefore,
\begin{equation}
\label{eq:IWRIM}
\begin{array}{rcl}
\<\cW(R),M\>  
&=&
 \left\langle \Re(R), \Re(W)^\dagger \Re(M) \right\rangle   
    -\left\langle \imag(R),\imag(W)^\dagger \Re(M) \right\rangle 
\\&& \quad + \left\langle \imag(R), \Re(W)^\dagger \imag(M) \right\rangle   
    +\left\langle \Re(R),\imag(W)^\dagger \imag(M) \right\rangle 
\\&=&
 \left\langle \Re(R), \Re(W)^\dagger \Re(M) +\imag(W)^\dagger \imag(M) \right\rangle 
\\&& \quad + \left\langle \imag(R), \Re(W)^\dagger \imag(M) 
                           -\imag(W)^\dagger \Re(M) \right\rangle 
\\&=&
 \langle R, \cW^\dagger (M)\rangle.
\end{array}
\end{equation}
This proves the first general adjoint expression \eqref{eq:generalWRadj}.

Now, suppose that $W$ Hermitian is given, and consider $\cW: \Hn\to
\Cnn$, i.e.,~a mapping from $\Hn$.
Then \cref{eq:IWRIM} becomes
\begin{equation}
\label{eq:IWRIM2}
\begin{array}{rcl}
\<\cW(R),M\>  
&=&
 \left\langle \Re(R), \Re(W)^\dagger \Re(M) +\imag(W)^\dagger \imag(M) \right\rangle 
\\&& \quad + \left\langle \imag(R), \Re(W)^\dagger \imag(M) 
                           -\imag(W)^\dagger \Re(M) \right\rangle 
\\&=&
 \left\langle \Re(R), \Re(W) \Re(M) -\imag(W) \imag(M) \right\rangle 
\\&& \quad + \left\langle \imag(R), \Re(W) \imag(M) 
                           +\imag(W) \Re(M) \right\rangle 
\\&=&
 \left\langle \Re(R), \sym(\Re(W) \Re(M) -\imag(W) \imag(M)) \right\rangle 
\\&& \quad + \left\langle \imag(R), \sksym(\Re(W) \imag(M) 
                           +\imag(W) \Re(M)) \right\rangle.
\end{array}
\end{equation}
This yields the second term in~\cref{eq:adjWMHerm}.
\end{proof}

\begin{lemma}[adjoint of $\rho(S)=S\rho$]
\label{lem:Srho}
Let $\rho\in \Cnn$ be a given square complex matrix, 
and define the (right matrix
multiplication) linear map $\rho : \Cnn\to\Cnn$ by
$\rho(S)=S \rho$. Then the adjoint $\rho^\dagger : \Cnn \to \Cnn$ is defined by
\begin{equation}
\label{eq:generalSrhoadj}
\rho^\dagger (M) = \Re(M)\Re(\rho)^\dagger +\imag(M) \imag(\rho)^\dagger 
+ i  \left( -\Re(M)\imag(\rho)^\dagger +  \imag(M) \Re(\rho)^\dagger  \right).
\end{equation}
If $\rho\in \Hn$ and $\rho:\Hn \to \Cnn$, then the adjoint $\rho^\dagger :\Cnn\to
\Hn$ is defined by
\begin{equation}
\label{eq:adjSrhoHerm}
\rho^\dagger (M) = \cS\left[ \Re(M)\Re(\rho) -\imag(M) \imag(\rho) \right]
+ i \sksym \left[ \Re(M)\imag(\rho) +  \imag(M) \Re(\rho)  \right] .
\end{equation}
\end{lemma}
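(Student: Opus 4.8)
The plan is to mirror the proof of \Cref{lem:WRadj} almost verbatim, with the sole structural change that $\rho$ now multiplies $S$ on the \emph{right} rather than on the left. First I would expand the product into real and imaginary parts via $S\rho = (\Re(S)+i\imag(S))(\Re(\rho)+i\imag(\rho))$, obtaining $\Re(S\rho) = \Re(S)\Re(\rho) - \imag(S)\imag(\rho)$ and $\imag(S\rho) = \Re(S)\imag(\rho) + \imag(S)\Re(\rho)$. Substituting these into the real inner product \cref{def:complex_innerprod}, I would write $\langle \rho(S), M\rangle = \langle \Re(S\rho), \Re(M)\rangle + \langle \imag(S\rho), \imag(M)\rangle$ and expand each of the two terms as a sum of traces.

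The key step is the application of the cyclicity of the trace. Because $\rho$ sits on the right of $S$, moving the $\Re(\rho), \imag(\rho)$ factors out of the inner product leaves them on the right of $M$; for instance $\trace(\Re(\rho)^\dagger \Re(S)^\dagger \Re(M)) = \trace(\Re(S)^\dagger \Re(M)\Re(\rho)^\dagger) = \langle \Re(S), \Re(M)\Re(\rho)^\dagger\rangle$, in contrast to the left-multiplication case of \Cref{lem:WRadj} where the factors remained on the left of $M$. Collecting the four resulting terms into the form $\langle \Re(S), \cdot\rangle + \langle \imag(S), \cdot\rangle$ lets me read off $\Re(\rho^\dagger(M)) = \Re(M)\Re(\rho)^\dagger + \imag(M)\imag(\rho)^\dagger$ and $\imag(\rho^\dagger(M)) = -\Re(M)\imag(\rho)^\dagger + \imag(M)\Re(\rho)^\dagger$, which recombine into the general formula \cref{eq:generalSrhoadj}.

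For the Hermitian specialization \cref{eq:adjSrhoHerm}, I would first use that $\rho \in \Hn$ forces $\Re(\rho)$ symmetric and $\imag(\rho)$ skew-symmetric, so $\Re(\rho)^\dagger = \Re(\rho)$ and $\imag(\rho)^\dagger = -\imag(\rho)$; this turns the two bracketed expressions into $\Re(M)\Re(\rho) - \imag(M)\imag(\rho)$ and $\Re(M)\imag(\rho) + \imag(M)\Re(\rho)$. Then, since the domain is restricted to $S \in \Hn$, the factor $\Re(S)$ is symmetric and $\imag(S)$ is skew-symmetric, so pairing against them extracts only the symmetric, respectively skew-symmetric, part of the companion matrix; inserting $\sym$ and $\sksym$ yields exactly \cref{eq:adjSrhoHerm}. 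I expect the only point requiring genuine care to be the trace bookkeeping in the second paragraph, namely keeping track of which side each $\rho$-factor lands on after cycling and of the sign carried by the skew-symmetric imaginary parts, since a single misplaced transpose or sign would interchange a left- and a right-multiplication and break the adjoint identity.
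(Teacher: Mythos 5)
Your proposal is correct and is essentially the paper's own proof: the paper simply states ``the proof is similar to the proof of \Cref{lem:WRadj},'' and your write-up fleshes out exactly that similarity --- the same real/imaginary expansion, the same trace manipulations (with the cyclic shuffle correctly placing the $\rho$-factors on the right of $M$), and the same use of the symmetry of $\Re(\rho),\Re(S)$ and skew-symmetry of $\imag(\rho),\imag(S)$ to obtain the $\sym$/$\sksym$ form in the Hermitian case. The sign and transpose bookkeeping you flag as the delicate point does work out to match \cref{eq:generalSrhoadj} and \cref{eq:adjSrhoHerm}.
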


\begin{proof}
The proof is similar to the proof of~\Cref{lem:WRadj}.
\end{proof}

\subsection{Proof of \Cref{lemma:rangeZrelation} }
\label{sec:proof_rangelem}

\begin{proof}
Recall that
\begin{equation}
\label{eq:ABpsdrange}
A,B\succeq 0 \implies \range(A+B) = \range(A) + \range(B).
\end{equation}
Let $X$ be a positive semidefinite matrix with rank $r$ and spectral
decomposition
\[
X = \sum_{i=1}^r \lambda_i u_i u_i^\dagger .
\]
We only focus on the first term $\lambda_1 u_1 u_1^\dagger$.
Then 
\[
\cZ(\lambda_1 u_1 u_1^\dagger) = 
\sum_{j=1}^n Z_j (\lambda_1 u_1 u_1^\dagger )Z_j = 
\sum_{j=1}^n \lambda_1 ( Z_j u_1) (Z_ju_1)^\dagger . 
\]
We note, from \cref{eq:ABpsdrange}, that  
\[
\begin{array}{rl}
\range( \cZ(\lambda_1 u_1 u_1^\dagger)  ) 
& = \range( \lambda_1 ( Z_1 u_1) (Z_1u_1)^\dagger
+ \lambda_1 ( Z_2 u_1) (Z_2u_1)^\dagger + \cdots
+\lambda_1 ( Z_n u_1) (Z_n u_1)^\dagger    ) \\ 
&= \range( Z_1 u_1 ) + \cdots + \range( Z_n u_1 ) .
\end{array}
\]
We also note that 
\[
u_1 = I u_1 = \left( \sum_{j=1}^n Z_j \right) u_1
=\sum_{j=1}^n Z_j  u_1 
\in \range(Z_1u_1) + \cdots + \range(Z_n u_1) .
\]
Hence, 
\[
\range(\lambda_1 u_1 u_1^\dagger) =
\range(u_1) 
\subseteq 
\range(Z_1u_1) + \cdots + \range(Z_n u_1) 
= \range( \cZ(\lambda_1 u_1 u_1^\dagger )  ) .
\]

We now consider the first two terms in $X$, $\lambda_1 u_1 u_1^\dagger + \lambda_2 u_2 u_2^\dagger $.
Similarly, 
\begin{equation}
\label{eq:rangeinclusions}
\range(\lambda_1 u_1 u_1^\dagger)
\subseteq 
\range( \cZ(\lambda_1 u_1 u_1^\dagger)  ) 
\ \ \text{ and }  \ \ 
\range(\lambda_2 u_2 u_2^\dagger)
\subseteq 
\range( \cZ(\lambda_2 u_2 u_2^\dagger)  ) .
\end{equation}
Then 
\[
\begin{array}{rll}
\range( \lambda_1 u_1 u_1^\dagger + \lambda_2 u_2 u_2^\dagger )
&= \range(\lambda_1 u_1 u_1^\dagger) + \range(\lambda_2 u_2 u_2^\dagger) & \text{by \cref{eq:ABpsdrange} }\\
& \subseteq  \range( \cZ(\lambda_1 u_1 u_1^\dagger)  ) + \range( \cZ(\lambda_2 u_2 u_2^\dagger)  )  & \text{by \cref{eq:rangeinclusions} } \\
& = \range( \cZ(\lambda_1 u_1 u_1^\dagger) + \cZ(\lambda_2 u_2 u_2^\dagger)  ) & \text{by \cref{eq:ABpsdrange} } \\
& = \range( \cZ(\lambda_1 u_1 u_1^\dagger + \lambda_2 u_2 u_2^\dagger ))  &  \text{by linearity of } \cZ . \\
\end{array}
\]
This completes the proof (The induction steps are clear.).
\end{proof}

\subsection{Derivatives for Quantum Relative Entropy under Positive Definite Assumptions}
\label{sect:derivsquant}

We can reformulate the quantum relative entropy function defined in the key rate optimization \cref{eq:keyrateopt} as
\begin{equation}
\label{eq:f_in_trace}
\begin{array}{rcl}
f(\rho) 
&=&
D(\cG(\rho)\|\cZ(\cG(\rho)))
\\
&=&
\Tr\left(\cG(\rho) \log \cG(\rho)\right) -
\Tr(\cG(\rho) \log
\cZ(\cG(\rho)))\\
&=&
\Tr\left(\cG(\rho) \log \cG(\rho)\right) -
\Tr(\cZ(\cG(\rho)) \log
\cZ(\cG(\rho)))\\
\end{array}
\end{equation}
Here, the linear map $\cZ$ is added to the second term in~\cref{eq:f_in_trace} above, and the equality follows from~\Cref{prop:Zopprops}.

In this section, we review the gradient (Fr\'echet derivative), and the image of the Hessian, for the reformulated relative entropy function $f$ defined in~\cref{eq:f_in_trace}. We obtain the derivatives of $f$ under the
assumption that the matrix-log is acting on positive
definite matrices. This assumption is needed for differentiability.
Note that the difficulty arising from the singularity is handled by
using perturbations in~\cite{winick2017reliable,Faybusovich2020longstep}. 
This emphasizes the need for the regularization below as otherwise $f$ 
in~\cref{eq:f_in_trace} is \emph{never} differentiable. 
We avoid using perturbations in this paper by applying \FR in the  sections below. 

We now use the chain rule and
derive the first and the second order derivatives of the composition
of a linear and entropy function.
\begin{lemma}
	\label{lem:gradfromchainrule}
	Let $\cH:\H^n\to \H^k$ be a linear map that
	preserves positive semidefiniteness. Assume that $\cH(\rho) \in  \H_{++}^k$. Define the composite function $g:\H^k_+ \to \R$ by
	\[
	g(\rho) = \trace \left(\cH (\rho)\log(\cH(\rho)) \right).
	\]
	Then the gradient of $g$ at $\rho$ is 
	\[
	\nabla g(\rho) =   \cH^\dagger  (\log[\cH(\rho)]) +  \cH^\dagger  (I),
	\]
	and the Hessian of $g$ at $\rho$ acting on $\Delta \rho$ is
	\[
	\nabla^2 g(\rho)(\Delta \rho) =\cH^\dagger  \left( 
	\log^\prime\cH(\rho)(\cH(\Delta \rho))
	\right), 
	\]
where $\log^\prime$ denotes the  Fr\'echet  derivative.
\end{lemma}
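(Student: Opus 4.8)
The plan is to factor $g$ through the scalar--trace entropy functional and apply the chain rule for Fr\'echet derivatives. Define $\phi : \H_{++}^k \to \R$ by $\phi(X) = \trace(X\log X)$, so that $g = \phi\circ\cH$ on the set where $\cH(\rho)\succ 0$. Since $\cH$ is linear its Fr\'echet derivative at every point is $\cH$ itself, and the hypothesis $\cH(\rho)\in\H_{++}^k$ guarantees that $\phi$ is analytic in a neighbourhood of $\cH(\rho)$ (the matrix logarithm is analytic on the positive definite cone). The chain rule then gives $Dg(\rho)[\Delta\rho] = D\phi(\cH(\rho))[\cH(\Delta\rho)]$, so the whole computation reduces to differentiating $\phi$.

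The key computational step is the identity
\[
D\phi(X)[H] = \trace\big((\log X + I)H\big), \quad X\succ 0.
\]
I would establish this by writing $\phi = \trace\circ\, h$ with the scalar function $h(x) = x\log x$ extended to Hermitian matrices by the spectral calculus, and invoking the standard formula $D\,\trace(h(X))[H] = \trace(h'(X)H)$ for primary matrix functions, with $h'(x) = \log x + 1$. Equivalently, one can expand directly, $D\phi(X)[H] = \trace(H\log X) + \trace\big(X\cdot D\log(X)[H]\big)$, and then use the integral representation $D\log(X)[H] = \int_0^\infty (X+sI)^{-1}H(X+sI)^{-1}\,ds$ together with the cyclic property of the trace to show $\trace\big(X\cdot D\log(X)[H]\big) = \trace(H)$; indeed, cyclicity reduces the integrand to $\trace\big((X+sI)^{-1}X(X+sI)^{-1}H\big)$, and $\int_0^\infty (X+sI)^{-1}X(X+sI)^{-1}\,ds = I$ by an eigenvalue computation. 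Either route yields the displayed identity. I expect this trace identity for the log-derivative term to be the only real obstacle; everything else is bookkeeping.

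With the identity in hand, the gradient follows by moving $\cH$ across the real inner product on Hermitian matrices. Since $\log\cH(\rho) + I$ is Hermitian,
\[
Dg(\rho)[\Delta\rho] = \ip{\log\cH(\rho) + I}{\cH(\Delta\rho)}
= \ip{\cH^\dagger(\log\cH(\rho)) + \cH^\dagger(I)}{\Delta\rho},
\]
which identifies $\nabla g(\rho) = \cH^\dagger(\log[\cH(\rho)]) + \cH^\dagger(I)$. For the Hessian I would differentiate this gradient once more. The constant term $\cH^\dagger(I)$ drops out, and by linearity of $\cH^\dagger$ together with the chain rule applied to the remaining term,
\[
\nabla^2 g(\rho)(\Delta\rho) = \cH^\dagger\big(D\log(\cH(\rho))[\cH(\Delta\rho)]\big) = \cH^\dagger\big(\log^\prime[\cH(\rho)](\cH(\Delta\rho))\big),
\]
which is the claimed formula. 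Throughout, the positive definiteness $\cH(\rho)\succ 0$ is precisely what keeps $\log$ and $\log^\prime$ well-defined and analytic, so that all the derivatives above exist.
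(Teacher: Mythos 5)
Your proposal is correct and follows essentially the same route as the paper: a product/chain-rule expansion of $\trace(\cH(\rho)\log\cH(\rho))$, reduction of the log-derivative term to $\cH^\dagger(I)$, and one further differentiation for the Hessian. The only difference is cosmetic --- the paper disposes of the middle term by citing the identity $\log^\prime(\delta)(\delta)=I$ together with self-adjointness, whereas you actually verify the equivalent trace identity $\trace\big(X\cdot D\log(X)[H]\big)=\trace(H)$ via the integral representation, which is a slightly more complete justification of the same step.
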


\begin{proof}
	We first work on the first-order derivative.
	\begin{equation}
	\label{eq:gradient_for_1sttermnew}
	\begin{aligned}
	\langle \nabla g(\rho),\Delta \rho \rangle
	&= 
	\Big\langle 
	\frac{d}{d \rho}
	\trace \left( \cH(\rho)  \log(\cH(\rho) )
	\right),\Delta \rho \Big\rangle \\
	&= 
	\trace \Big( \frac{d}{d \rho}
	\left( \cH(\rho)  \log(\cH(\rho) )
	\right)(\Delta \rho) \Big) \\
	&= 
	\trace \left(
	\frac{d}{d \rho}
	\Big(\cH(\rho)\Big) (\Delta \rho)
	\log(\cH(\rho)) +  \cH(\rho)
	\frac{d}{d \rho}\Big(\log(\cH(\rho))\Big)(\Delta \rho) \right)\\
	&= 
	\Big\langle 
	\frac{d}{d \rho}
	\Big(\cH(\rho)\Big) \Delta \rho
	,  \log(\cH(\rho))\Big\rangle + \Big\langle \cH(\rho),
	\frac{d}{d \rho}\Big(\log(\cH(\rho))\Big)\Delta \rho \Big\rangle \\
	&= \left\langle  \Delta \rho, \cH^\dagger 
	\Big(\log(\cH(\rho))\Big) \right\rangle + 
	\left\langle 
	\left(\frac{d}{d \rho}\log(\cH(\rho))\right)^\dagger 
	\cH(\rho),
	\Delta \rho\right\rangle\\
	&=\left\langle  \Delta \rho, \cH^\dagger  \Big(\log[\cH(\rho)]\Big)
	\right\rangle + \left\langle  \frac{d \cH(\rho)}{d \rho}^\dagger  (I), 
	\Delta \rho\right\rangle\\
	&=\left\langle  \cH^\dagger  \Big(\log[\cH(\rho)]\Big), \Delta \rho \right\rangle + \left\langle  \cH^\dagger  (I), 
	\Delta \rho\right\rangle. \\
	\end{aligned}
	\end{equation}
	Note that we used the fact that the directional derivative of matrix-log at $\rho$ in the direction  $\rho$ is:
	\[
	\log^\prime(\delta)(\delta)=\log^\prime(\delta;\delta) = I.
	\]
	Similarly, the Hessian $g$ at $\rho$ acting on $\Delta \rho$ can be obtained as follows.
	\begin{equation}
	\label{eq:hessianqkdnew}
	\begin{array}{c}
	\nabla^2 g(\rho)(\Delta \rho) 
	=
	\frac {\partial}{\partial \rho}
	\cH^\dagger  \Big( 
	[\log\cH(\rho)]  \Big)
	=
	\cH^\dagger 
	\frac {\partial}{\partial \rho}
	\Big( 
	[\log\cH(\rho)]  \Big) 
	=
	\cH^\dagger  \Big( 
	[\log^\prime\cH(\rho)(\cH\Delta \rho)]
	\Big).
	\end{array}
	\end{equation}
	
\end{proof}
	\index{$\log^\prime$, Fr\'echet  derivative of $\log$}
	\index{Fr\'echet  derivative of $\log$, $\log^\prime$}

Under the assumption that $\cG(\rho) \succ 0$, 
we can use~\Cref{lemma:rangeZrelation} and show
that $\cZ(\cG(\rho)) \succ 0$.
Using \Cref{lem:gradfromchainrule} and \eqref{eq:deltaZdeltalog}, 
we obtain the first and the 
second order derivatives of the objective function $f$ in \eqref{eq:f_in_trace}.

\begin{corollary}
	\label{thm:gradf} 
	Suppose that $\rho\in \Hnp$ and $\cG(\rho) \succ 0$. 
	Then the \textdef{gradient of $f$} at $\rho$ is
	\begin{equation}
	\label{eq:gradient}   
	\nabla f(\rho) =
	\cG^\dagger  \Big( \log[\cG(\rho)]\Big) -
	(\cZ\circ \cG)^\dagger  \Big( \log[(\cZ \circ \cG)(\rho)] \Big).
	\end{equation}
	The
	\textdef{Hessian at $\rho\in \Hn_{+}$ acting on the direction 
		$\Delta \rho \in \Hn$} is
	\begin{equation}
	\label{eq:hessianqkd}
	\begin{array}{rcl}
	\nabla^2 f(\rho)(\Delta \rho) 
	&=&
	\cG^\dagger  \Big( 
	[\log^\prime\cG(\rho)(\cG\Delta \rho)]\Big)
	- 
	(\cZ\circ\cG)^\dagger  \Big([\log^\prime(\cZ \circ
	\cG)(\rho)((\cZ\circ\cG)(\Delta \rho))\Big).
	\end{array}
	\end{equation}
\end{corollary}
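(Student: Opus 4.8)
The plan is to recognize that the reformulated objective in~\eqref{eq:f_in_trace}, namely $f(\rho) = \Tr(\cG(\rho)\log\cG(\rho)) - \Tr(\cZ(\cG(\rho)) \log \cZ(\cG(\rho)))$, is a difference of two composite entropy terms, each of exactly the shape handled by \Cref{lem:gradfromchainrule}: the first term uses the linear map $\cH = \cG$, and the second uses $\cH = \cZ\circ\cG$. Both maps preserve positive semidefiniteness, so the only hypothesis that must be discharged before quoting \Cref{lem:gradfromchainrule} termwise is that each map sends $\rho$ to a positive \emph{definite} matrix.

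First I would verify the positive-definiteness requirement. The assumption $\cG(\rho)\succ0$ handles the first term immediately. For the second term I would invoke \Cref{lemma:rangeZrelation} with $X = \cG(\rho)$: since $\cG(\rho)\succ0$ has full range, the inclusion $\range(\cG(\rho))\subseteq\range(\cZ(\cG(\rho)))$ forces $\cZ(\cG(\rho))$ to have full range as well, and being positive semidefinite it is therefore positive definite. With both arguments positive definite, \Cref{lem:gradfromchainrule} yields $\cG^\dagger(\log[\cG(\rho)]) + \cG^\dagger(I)$ as the gradient of the first term and $(\cZ\circ\cG)^\dagger(\log[(\cZ\circ\cG)(\rho)]) + (\cZ\circ\cG)^\dagger(I)$ as the gradient of the second, together with the corresponding Hessian images.

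The key step, and the one where the structure of $\cZ$ is essential, is the cancellation of the two identity terms upon subtracting the second gradient from the first. Using that $\cZ$ is self-adjoint (\Cref{prop:Zopprops}) and that $\cZ(I)=\sum_j Z_j = I$ by \Cref{def:operGandG}, I compute $(\cZ\circ\cG)^\dagger(I) = \cG^\dagger(\cZ^\dagger(I)) = \cG^\dagger(\cZ(I)) = \cG^\dagger(I)$, so the two $\cH^\dagger(I)$ contributions cancel exactly and leave precisely the gradient~\eqref{eq:gradient}. For the Hessian, the constant terms $\cH^\dagger(I)$ have zero derivative, so subtracting the two Hessian images from \Cref{lem:gradfromchainrule} reproduces~\eqref{eq:hessianqkd} with no further bookkeeping. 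I expect the main (though mild) obstacle to be exactly this identity-term cancellation: without the special projection structure of $\cZ$ the gradient would retain a spurious $\cG^\dagger(I) - (\cZ\circ\cG)^\dagger(I)$ term, so it is the reformulation via~\eqref{eq:deltaZdeltalog} that produced~\eqref{eq:f_in_trace} which makes the clean formula possible.
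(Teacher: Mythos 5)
Your proposal is correct and follows essentially the same route as the paper: the paper likewise invokes \Cref{lemma:rangeZrelation} to get $\cZ(\cG(\rho))\succ 0$ from $\cG(\rho)\succ 0$ and then applies \Cref{lem:gradfromchainrule} termwise to the two traces in~\eqref{eq:f_in_trace}. Your explicit verification that $(\cZ\circ\cG)^\dagger(I)=\cG^\dagger(\cZ(I))=\cG^\dagger(I)$, so the two identity contributions cancel, is a detail the paper leaves implicit, and it is exactly the right justification.
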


\subsection{Proof of \Cref{thm:FRonObserv}}
\label{sec:proofFRonObserv}

\begin{proof}
We provide an alternative, self-contained proof. We note that
the key is finding an \textdef{exposing vector} for $S_R$,
i.e.,~$Z_\Gamma\succeq 0$ such that $\langle Z_\Gamma,\rho \rangle = 0$, $\forall \rho \in S_R$.
See e.g.,~\cite{DrusWolk:16}. The standard theorem of the alternative 
for strict feasibility, \Cref{lem:FRfarkas},
yields the following equalities for $Z_\Gamma$:
\[
0 \neq Z_\Gamma = 
\sum_j y_{j} \Gamma_{j} 
=\sum_j y_j \left(\Theta_j \otimes \one_B \right) 
=\left(\sum_j y_j \Theta_j\right) \otimes \one_B  
\succeq 0; \quad 
y^\dagger\theta=0.
\]
It is equivalent to look at the smaller problem and find $y$ so that
\[
0 \neq Z_\Theta = 
\sum_j y_j \Theta_j
\succeq 0; \quad 
y^\dagger\theta=0.
\]
Since the reduced density operator constraint requires that
$\theta_{j} = \Tr \rho_A \Theta_j$, we get
\[
0=\sum_j y_{j}\theta{j} = \Tr \left( \rho_A \sum_j y_{j}\Theta_j \right) \iff
\rho_A \left(\sum_j y_{j}\Theta_j \right) =
\rho_A Z_\Theta = 0,
\]
i.e.,~the exposing vector $Z_\Theta = QR_\Theta Q^\dagger$, for some $R_\Theta$.
Conversely, we can set $Z_\Theta = QQ^\dagger, R_\Theta=I$, by the 
basis property of the
$\Theta_i$, i.e.,~the basis property means we can always find an
appropriate $y$ so that 
$\sum_j y_{j}\Theta_j  = QQ^\dagger$. We get that $\rank Z_\Theta= n_A-r$.
Therefore, $Z_\Gamma = Z_\Theta \otimes \one_B$, with $\rank Z_\Theta =n_B(n_A-r)$, is an exposing vector as desired, i.e.,~we have
\[
S_R \subset \left\{\rho \succeq 0 \,:\, 
\langle Z_\Theta\otimes \one_B,\rho \rangle = 0 \right\}.
\]
Thus we get the conclusion that $\rho = VRV^\dagger$, as
desired.
\end{proof}

\subsection{Proof of \Cref{thm:optcondfinalprob}}
\label{sec:proofoptcondfinalprob}
\begin{proof}
The dual in \Cref{item:strduality} is obtained from from standard min-max 
argument; See~\cite[Chapter 5]{MR2061575}.
\[
\begin{array}{rcl} 
d^*  = \max\limits_y \min\limits_{\rho\in \Hp^{n_\rho}} L(\rho,y)
&=& \max\limits_y  \left\{ L(\rho,y) : Z \in 
\partial f(\rho) + \Gamma_V^\dagger (y) , \, Z \in (\H_+^{n_\rho}-\rho)^\dagger \right\}
\\&=& \max\limits_{y,Z\succeq 0}  
        \min\limits_{\rho\in \H_+^{n_\rho}} L(\rho,y)-\langle Z,\rho\rangle.
\end{array}
\]
That strong duality holds comes from our regularization process, i.e., 
the existence of a Slater point; see~\cite[Chapter 8]{Lu:69}.

\Cref{item:optcharac} is the standard optimality conditions for convex
	programming, where the dual feasibility $0\in \partial f(\rho) +
	\Gamma_V^\dagger (y)  -Z $ holds from \Cref{cor:fsubdiff}.
\end{proof}

\section{Implementation Details}
\label{app:compasp}

In this section we look at simplifications for
evaluations of the objective function and its derivatives.

\subsection{Matrix Representations of Derivatives}
\label{sec:matrix_repre}

We now include a matrix representation for the derivatives.

Let $A,B,C$ be given compatible matrices. If $X$ is Hermitian, 
then the linear system $AXB=C$ can be written as 
$$ 
\left((B^\dagger )^T \otimes A\right) \text{vec}(X) = \text{vec}(C).
$$
Note that $M^T$ is the transpose of $M$, i.e.,~without conjugation.

Let $g: \R \rightarrow \R$ be a continuously differentiable function. The first divided difference $h^{[1]}(\lambda,\mu)$ of $g$ at $\lambda,\mu \in \R$  is defined as
\begin{equation}\label{fdd}
		h^{[1]}(\lambda,\mu) = \begin{cases}
\frac{g(\lambda)-g(\mu)}{\lambda-\mu}  &\text{ if } \lambda \neq \mu \\
g'(\lambda) &\text{ if } \lambda = \mu \; . \\
		\end{cases}
\end{equation}
If $D$ is a diagonal matrix with diagonal entries $\lambda_{1},\ldots,\lambda_{n}$, then we define $h^{[1]}(D)$ to be the symmetric $n \times n$ matrix given by $h^{[1]}(\diag(D))$.

\begin{lemma}
Let $\A : \H^s\to \H^t$ be a linear map,
$\rho, \Delta \rho \in \H^s, \A (\rho) \in \H_{++}^t$, and 
$f(\rho ) = \tr ( \A(\rho )\log \A(\rho ) )$.
Let $\A(\rho ) = UDU^\dagger$ be the spectral decomposition of $f$ at $\rho$, and  the Hessian of $f$ at $\rho$ in the 
direction $\Delta \rho$ are given by 
$$
\nabla f(\rho ) = \A^\dagger  ( \log \A(\rho) ) + \A^\dagger (I),
$$
 and 
$$
\left(H_{f}(\rho )\right)(\triangle \rho) = \A^\dagger \left(U (h^{[1]}(D)\circ U^\dagger \A(\triangle \rho)U) U^\dagger \right),
$$
where $h^{[1]}(D)$ is the first divided difference of the logarithm function $g(x) = \ln x$, see \cref{fdd} and the paragraph below.
\end{lemma}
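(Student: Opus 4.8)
The plan is to reduce both formulae to results already in hand and then supply the one genuinely new ingredient, namely an explicit form for the Fréchet derivative of the matrix logarithm. The gradient formula requires no new computation: the map $\mathcal{A}$ preserves positive semidefiniteness and $\mathcal{A}(\rho)\succ 0$ by hypothesis, so $f(\rho)=\tr(\mathcal{A}(\rho)\log\mathcal{A}(\rho))$ is precisely the composite function treated in \Cref{lem:gradfromchainrule} with $\cH=\mathcal{A}$. Hence $\nabla f(\rho)=\mathcal{A}^\dagger(\log[\mathcal{A}(\rho)])+\mathcal{A}^\dagger(I)$ follows verbatim. The same lemma also delivers the intermediate Hessian expression
\[
\nabla^2 f(\rho)(\Delta\rho)=\mathcal{A}^\dagger\bigl(\log^\prime[\mathcal{A}(\rho)](\mathcal{A}(\Delta\rho))\bigr),
\]
so the entire remaining task is to rewrite the inner derivative $\log^\prime[\mathcal{A}(\rho)]$ in terms of the spectral data $U,D$ and the first divided difference $h^{[1]}$ of $\eqref{fdd}$.

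For this I would invoke the Daleckii--Krein (divided-difference) formula for the derivative of a primary matrix function. Writing $X=\mathcal{A}(\rho)=UDU^\dagger$ with $D=\diag(\lambda_1,\dots,\lambda_t)$ and all $\lambda_i>0$ (so $\log$ is analytic on a neighbourhood of the spectrum and every divided difference is finite), the claim to establish is
\[
\log^\prime(X)[H]=U\bigl(h^{[1]}(D)\circ(U^\dagger H U)\bigr)U^\dagger,
\]
where $\circ$ is the Hadamard product. I would prove this by first reducing to the diagonal case: since the functional calculus is unitarily covariant, $\log(UYU^\dagger)=U\log(Y)U^\dagger$, and differentiating this identity conjugates the derivative by $U$, so it suffices to compute $\log^\prime(D)[\,\cdot\,]$. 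On the diagonal one differentiates the Cauchy-integral (or power-series) representation of $\log$ entrywise and checks that the $(i,j)$ entry of $\log^\prime(D)[E]$ equals $\tfrac{\log\lambda_i-\log\lambda_j}{\lambda_i-\lambda_j}\,E_{ij}$ off the diagonal and $\lambda_i^{-1}E_{ii}$ on it, which is exactly the entrywise action of $h^{[1]}(D)$ from $\eqref{fdd}$. Conjugating back by $U$ yields the displayed identity; this is the standard argument found in references on functions of matrices.

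Substituting $X=\mathcal{A}(\rho)$ and $H=\mathcal{A}(\Delta\rho)$ into this identity and feeding the result into the intermediate Hessian expression above gives
\[
\nabla^2 f(\rho)(\Delta\rho)=\mathcal{A}^\dagger\Bigl(U\bigl(h^{[1]}(D)\circ(U^\dagger\mathcal{A}(\Delta\rho)U)\bigr)U^\dagger\Bigr),
\]
which is the asserted Hessian. The main obstacle is the Daleckii--Krein step: justifying the entrywise differentiation of $\log$ and the interchange of differentiation with the spectral representation. This is routine precisely because $\mathcal{A}(\rho)\succ 0$ keeps the spectrum bounded away from $0$ and $\log$ analytic there, but it is the only point at which genuine analysis, rather than a direct appeal to \Cref{lem:gradfromchainrule}, is required.
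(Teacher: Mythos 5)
Your proposal is correct, and it follows the route the paper intends: the paper actually states this lemma \emph{without} proof, but the gradient and the intermediate Hessian expression are exactly the content of \Cref{lem:gradfromchainrule} with $\cH=\A$, and the only missing ingredient is the Daleckii--Krein divided-difference representation $\log^\prime(UDU^\dagger)[H]=U\bigl(h^{[1]}(D)\circ(U^\dagger HU)\bigr)U^\dagger$, which you identify and justify correctly (unitary covariance reduces to the diagonal case, and positivity of $\A(\rho)$ keeps the spectrum away from $0$ so every divided difference of $\log$ is finite). Your derivation is sound and supplies precisely the argument the paper leaves implicit.
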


In the actual computation, it is more convenient to express the gradient and Hessian in matrix form. Let $A$ be the matrix representation of $\A$. The Hessian in matrix form is
$$\begin{array}{rrl}
	H_{f}(\rho )&=& 
	A^\dagger (U^\dagger\otimes U)\Diag(h^{[1]}(D))((U^\dagger )^\dagger\otimes U^\dagger )A.
\end{array}$$

\subsection{Matrix Representation of the Second Projected Gauss-Newton System }
\label{app:matrix_repre_sys}

We present the matrix representation of  \cref{eq:proj2GNeqn}.
Let $N_i$ be a basis element of $\nul(\Gamma_V)$.
Then $\cN^\dagger (w)$ has the representation $\sum_i w_i{ N_i } $.
Then the LHS of \cref{eq:proj2GNeqn} becomes 
\begin{equation}
\begin{array}{rl}
&\left[Z  \cN^\dagger (\Delta v) +\nabla^2 f(\rho) 
[ \cN^\dagger (\Delta v) ] \ \rho \right]  + \left[ \Gamma_V^\dagger (\Delta y) \rho \right] \\
=& 
Z \sum_i  N_i \Delta v_i
+\nabla^2 f(\rho) [ \sum_i  N_i \Delta v_i ] \ \rho  +  \Gamma_V^\dagger (\Delta y) \rho  \\
=& 
\sum_i Z N_i \Delta v_i
+ \sum_i  \nabla^2 f(\rho) N_i \rho \Delta v_i + \sum_i \Gamma_i \rho  \Delta y_i  .
\end{array} 
\end{equation}
Applying $\Cvec$\footnote{The operator \text{vec} maps a real matrix to a column vector by stacking the columns on top of one another. $\Cvec$ is a generalization of \text{vec} for complex matrices. It maps a complex matrix $M$ to the column vector $\begin{bmatrix}
		\text{vec}(\real(M)) \\
		\text{vec}(\im(M))
\end{bmatrix}$.} to the terms related to $\Delta v$, we have the following matrix representation:
\[
\begin{bmatrix}
\Cvec\left( ZN_1 +  \nabla^2 f(\rho) N_1 \rho \right)  & \cdots 
& \Cvec\left( ZN_{n_\rho^2-\mV} +  \nabla^2 f(\rho) N_{n_\rho^2-\mV} \rho \right)
\end{bmatrix}
\begin{bmatrix}
\Delta v_1  \\ \vdots \\ \Delta v_{n_\rho^2-\mV}
\end{bmatrix} .
\]
Similarly, applying $\Cvec$ on the terms related to $\Delta y$, we have the following matrix representation:
\[
\begin{bmatrix}
\Cvec \left( \Gamma_1 \rho \right) & \cdots & \Cvec \left( \Gamma_m \rho \right)
\end{bmatrix}
\begin{bmatrix}
\Delta y_1  \\ \vdots \\ \Delta y_{m} 
\end{bmatrix}.
\]
The RHS of \cref{eq:proj2GNeqn} becomes  
$\Cvec \left(  F_\mu^c + Z F_\mu^{p} + \left( F_\mu^d +\nabla^2 f(\rho)F_\mu^{p}  \right) \rho \right) $.
Thus, $\dGN$ is obtained be solving the system
\[
\begin{array}{rl}
&
\begin{bmatrix}
\left[
\Cvec\left( ZN_i +  \nabla^2 f(\rho) N_i \rho \right)
\right]_{i=1,\ldots, n_\rho^2-\mV}
& 
\left[
\Cvec \left( \Gamma_j \rho \right) 
\right]_{j=1,\ldots, \mV}
\end{bmatrix}
\begin{bmatrix}
\Delta v \\ \Delta y
\end{bmatrix}
\\
=&\Cvec \left(  F_\mu^c + Z F_\mu^{p} + \left( F_\mu^d +\nabla^2 f(\rho)F_\mu^{p}  \right) \rho \right) .
\end{array} 
\]

\subsection{Implementation Heuristics}
\label{sect:implheur}
We now discuss the implementation details. This involves preprocessing
for a nullspace representation and preconditioning. The details follow.

\subsubsection{Stopping Criteria}

We terminate the algorithm when the optimality condition 
\eqref{eq:optcondpert} is approximately satisfied. 
Denote the residual in~\Cref{thm:2ndprojGN} by
\[
\text{RHS} = - F_\mu^c -Z F_\mu^{p} - \left( F_\mu^d +\nabla^2 f(\rho)F_\mu^{p}  \right) \rho,
\]
and the denominator term by
\[
  \text{denom} =
       1 + \frac 12\min\left\{ \  \|\rho\|+\|Z\| \  ,  \
       |\text{bestub}| + | \text{bestlb}| \ \right\}.
\]
Then
\begin{equation}
\label{def:stopgap}
       \text{relstopgap} = \frac 1{\text{denom}}
	\max \left\{ \ \text{bestub} - 	\text{bestlb} \ , \
\| \text{RHS} \| \ \right\}.
\end{equation}
In other words, for a pre-defined tolerance $\epsilon$, we terminate the
algorithm when the relstopgap$ < \epsilon$.
If the algorithm computes lower and upper bounds of the optimal value throughout its execution, we may terminate the algorithm when the gap between lower and upper bounds is within $\epsilon$.
Finally, a common way to terminate an algorithm is to impose restrictions on the running time, e.g., setting an upper bound on the number of iterations or the physical running time. 

\subsubsection{\GN Direction using Sparse Nullspace Representation}
\label{sect:sparsenullrep}

We let $r = \Hvec (\rho) $, and construct
a matrix representation $H$ for the Hessian, and a matrix representation
$M$ for the linear constraints that
includes a permutation of rows and columns $rp,cp$ with inverse column
permutation $icp$, so that 
\[
r=\Hvec(\rho): \quad \textdef{$r(cp) = P_{cp}r$},\,\textdef{$r =
P_{icp}r(cp)$},\quad P_{cp}P_{icp}=P_{icp}P_{cp}=I,\,
P_{icp}=P_{cp}^\dagger.
\]
We can ignore the row permutations. We have
\[
\begin{array}{rcl}
\GV (\rho )
&=&
(\GV \HMat) \Hvec(\rho) 
\\&=&
(\GV \HMat) P_{icp}P_{cp}\Hvec(\rho) 
\\&=&
(P_{cp}(\GV \HMat)^\dagger)^\dagger P_{cp}\Hvec(\rho) 
\\&=&
 M r(cp)\\
& =& M P_{cp}\Hvec(\rho).
\end{array}
\]
We now get the nullspace representation:
\[
\hat r = \Hvec (\hat \rho) ;\,
\GV ( \hat \rho ) = M \hat r(cp) = \gV, \, M = \begin{bmatrix} B & E
\end{bmatrix}, \,
N^\dagger  = \begin{bmatrix}B^{-1} E \cr -I\end{bmatrix};
\]
\begin{equation}
\label{eq:nullspacerep}
r=\Hvec (\rho) :\quad 
\GV( \rho) = \gV \iff
\MGV P_{cp} r = \gV \, \iff\,  r = \hat r + 
P_{icp} N^\dagger ( w ), \text{ for some } w.
\end{equation}
The permutation of rows and columns are done in order to obtain a simple,
near triangular, well conditioned $B$ so that $B^{-1}E$ can be done
simply and maintain sparsity if possible.
The permutation of the rows does not affect the problem and we can
ignore it. However the permutation of the columns cannot be ignored.
We get the following
\[
\cN^\dagger (v) =\HMat \left(P_{icp} N^\dagger (v)\right) , \, \Gamma_V^\dagger (\Delta y) = P_{icp}M (\Delta y) , \,
\nabla^2f(\rho)\cN^\dagger  (\Delta v) =  \HMat \left(HP_{icp}N^\dagger (\Delta v) \right).
\]
By abuse of notation, the Gauss-Newton direction $\dGN\in \R^{n_\rho^2}$ 
can now be found from:
\begin{equation}
\label{eq:2ndprojelim}
\begin{array}{rcl}
F_\mu^{c\prime}\dGN 
&=&
 Z\HMat \left(P_{icp}N^\dagger (\Delta v)\right) + 
      \left(\HMat \left(\nabla^2f(\rho)(P_{icp}N^\dagger (\Delta v))\right)+\Gamma_V^\dagger (\Delta y) \right) \rho
\\&=&
\begin{bmatrix}
\cMZ\left(\HMat \left(P_{icp}N^\dagger \bigcdot\right) \right)    + \cMrho\left(\HMat \left(\nabla^2 f(\rho)
	P_{icp}N^\dagger \bigcdot\right)  \right) &
\cMrho \Gamma_V^\dagger  \bigcdot 
\end{bmatrix}
\begin{pmatrix}
\Delta v \cr \Delta y
\end{pmatrix}
\\&=&
-(F^c_\mu +F^d_\mu \rho + ZF^p_\mu) -  
\left(\nabla^2 f(\rho)(F_\mu^{p})\right) \rho.
\end{array}
\end{equation}

\subsubsection{Preconditioning}
\label{sect:precond}

The overdetermined linear system in~\cref{eq:2ndprojelim} can be
ill-conditioned. We use diagonal preconditioning, i.e.,~we let
$d_i=\|F_\mu^{c\prime}(e_i)\|$, for unit vectors $e_i$ and then
column precondition using 
\[
F_\mu^{c\prime} \leftarrow F_\mu^{c\prime} \Diag(d)^{-1}.\footnote{The
MATLAB command is:
$d_{GN} = ((F_\mu^{c\prime}/ \Diag(d))\backslash \text{RHS})./d$.}
\]
This diagonal preconditioning has been shown to be the optimal diagonal
preconditioning for the so-called $\Omega$-condition
number,~\cite{DeWo:90}. It performs exceptionally well in our tests
below.

\subsubsection{Step Lengths}
The \GN method is based on a linearization that suggests a step length of
one. However, long step methods are known to be more efficient in practice for
interior point methods for linear \SDPp s. Typically step lengths are
found using backtracking to ensure primal-dual
positive definiteness of $\rho, Z$.

In our case we do not have a
linear objective and we typically experience Maratos type situations,
i.e.,~we get fast convergence for primal feasibility but slow and no
convergence for dual feasibility. However, we do have the gradient and
Hessian of the objective function and therefore can minimize the
quadratic model for the objective function in the search direction
$\Delta \rho$
\[
\min_\alpha f(\rho) + \alpha \langle \nabla f(\rho),\Delta \rho \rangle
+ \frac 12 \alpha^2 \langle \Delta \rho \nabla^2 f(\rho),\Delta \rho
\rangle, \quad \alpha^*  = -\langle \nabla f(\rho),\Delta \rho\rangle/
 \langle \Delta \rho , \nabla^2 f(\rho)\Delta \rho\rangle.
\]
Therefore, we begin the backtracking with this step.

Moreover, we take a step length of one as soon as possible, and only
after this do we allow step lengths larger than one. This means
that exact primal feasibility holds for all further steps. This happens
relatively early for our numerical tests.

\section{Descriptions and Further Numerics of the Protocols}
\label{sect:testexamples}

We briefly describe six \QKD protocols where
we compare our algorithm to other algorithms. 
We also describe how the data ($\gamma$ in
\cref{eq:keyrateopt}) is generated. In addition,
we remark on the level of numerical difficulty for each example. We
consider four variants of the Bennett-Brassard 1984 (BB84) protocol
\cite{Bennett1984} including single-photon based variants:
entanglement-based (\Cref{sect:example1}), prepare-and-measure
(\Cref{sect:example2}), measurement-device-independent~\cite{Lo2012}
(\Cref{sect:example3}) and a coherent-state based variant with discrete
global phase randomization~\cite{cao2015discrete}
(\Cref{sect:example6}).  We also consider the single-photon version of
the twin-field \QKD~\cite{Lucamarini2018} (\Cref{sect:example4}).
Another interesting protocol in our numerical tests is the
quadrature phase-shift keying scheme of discrete-modulated
continuous-variable \QKD with heterodyne detection
(\Cref{sect:example5}), see
\cite[Protocol 2]{Lin2019}. These protocols correspond to numerical problems with
the level of difficulty ranging from easy to difficult. In the
descriptions below, we use the Dirac notation for quantum states which
are vectors in the underlying Hilbert space. We skip the description
about some common classical postprocessing steps in a \QKD protocol like
error correction and privacy amplification since they are unimportant
for our discussions here. We note that the description of linear maps
$\cG$ and $\cZ$ directly follow from the protocol description by
following the simplification procedure explained in~\cite[Appendix
A]{Lin2019}. We omit those detailed descriptions here and note that the explicit expressions for some of protocols can also be found in~\cite[Appendix D]{george2020finite}.

\subsection{Entanglement-Based BB84}\label{sect:example1}

We consider this protocol with a single-photon source and restrict our
discussions to the qubit space. In the quantum communication phase,
Alice and Bob each receive one half of a bipartite state. This is
supposed to be a two-qubit maximally entangled state before Eve's
tampering. And the measure in the $Z$ basis is with probability $p_z$,
or in $X$ basis with a probability $1-p_z$. In the classical
communication phase, they announce their basis choices for each round
and perform sifting to keep those rounds where they both chose the same
basis. In the end, they generate keys from both $Z$ and $X$ bases.

In the simulation, we assume both bases have the same error rate $e_z =
e_x = Q$. In particular, $\Gamma$ (in \cref{eq:keyrateopt}) contains the
$Z$-basis error rate, $X$-basis error rate constraints as well as one
coarse-grained constraint for each mismatched basis choice scenario.
This is to
ensure that Alice and Bob get completely uncorrelated outcomes in that
case. In other words, the data $\gamma$ are determined by $Q$. This test
example is supposed to be numerically easy, since it involves the smallest
possible size of $\rho$ for \QKDp, i.e., four. Moreover, there is no reduced density operator $\rho_A$ constraint for this example. In Table \ref{table:FINAL}, instances of this test example are labeled as ebBB84($p_z, Q$) for different values of $p_z$ and $Q$.

\subsection{Prepare-and-Measure BB84}\label{sect:example2}

Another protocol example in our numerical tests is the
prepare-and-measure version of BB84 with a single-photon source. In the
quantum communication phase, Alice chooses the $Z$ basis with a probability
$p_z$ or the $X$ basis with a probability $1-p_z$. When she chooses the 
$Z$ basis,
she sends either $\ket{0}$ or $\ket{1}$ at random, where $\ket{0}$ and
$\ket{1}$ are eigenstates of the Pauli $\sigma_Z$ operator. When she chooses
the $X$ basis, she sends either $\ket{+}$ or $\ket{-}$ at random, where
$\ket{\pm} = \frac{1}{\sqrt{2}} (\ket{0}\pm \ket{1})$. After Alice sends
the state of her choice to Bob, Bob chooses to measure in the $Z$ basis with
a probability $p_z$ or the $X$ basis with a probability $1-p_z$. The
rest of the protocol is exactly the same as the entanglement-based BB84
protocol described in~\Cref{sect:example1}. We call
$\{\ket{0},\ket{1},\ket{+},\ket{-}\}$ as stated in BB84.

For the security analysis, we use the source-replacement
scheme~\cite{Ferenczi2012} to convert it to its equivalent
entanglement-based scheme. Therefore, the main differences between this
example and the one in \Cref{sect:example1} are: (1) the dimension of
Alice's system for this example is four due to the source-replacement
scheme, while it is two for the entanglement-based BB84; (2) there is
the reduced density operator constraint $\rho_A$ which is of size $4$
and translated to $16$ linear constraints. In this test example, the
size of $\rho$ is $8$ and the size of $\cG(\rho)$ is $32$ before \FRp. 

The data simulation is done in a similar way as that in the
entanglement-based BB84 protocol, i.e.,~$e_x = e_z = Q$. In \Cref{table:FINAL}, instances of this test example are labeled as pmBB84($p_z, Q$).

\subsection{Measurement-Device-Independent BB84}\label{sect:example3}

In the measurement-device-independent variant of BB84 with single-photon
sources, Alice and Bob each prepare one of four BB84 states (with the
probability of choosing the $Z$ basis as $p_z$). Then they both
send this to an untrusted
third-party Charlie for measurements. He ideally then performs the
Bell-state measurements and announces the outcomes. We
consider a setup where Charlie only uses linear optics, and thus can only
measure two out of four Bell states. In this protocol, Charlie announces
either a successful Bell-state measurement or a failure. If a successful
measurement, Charlie then also announces the Bell state. Therefore, 
there are three possible announcement outcomes. 
After the announcement, Alice and Bob perform the basis sifting, as well
as discard rounds that are linked to unsuccessful events. They then
generate keys from rounds where they both chose the $Z$ basis and Charlie's announcement is one of the successful events. 

We now consider the measurement-device-independent type of protocols. 
As described in~\cite{coles2016numerical}, the optimization variable 
$\rho$ involves three parties as $\rho_{ABC}$. Here, registers $AB$
together serve the role of $A$ in the reduced density operator constraint
set~\cref{eq:constrreduced}. The dimension of Alice's system is $4$ and
so is Bob's dimension. The register $C$ is a classical register that
stores the announcement outcome. Thus it is three-dimensional with three
possible announcement outcomes. In the data simulation, we assume that
each qubit sent to Charlie goes through a depolarizing channel, with the
depolarizing probability $p$.  

In the numerical tests, we label instances of this protocol example as
mdiBB84($p_z, p$). The size of $\rho$ is $48$ and that of $\cG(\rho)$ is
$96$ before \FRp.

\subsection{Twin-Field \QKDp}\label{sect:example4}

As above, this protocol also uses the measurement-device-independent 
setup. The
exact protocol description can be found in~\cite[Protocol 1]{Curty2019}.
In this protocol, Alice and Bob each prepare a state $\ket{\phi_q}_{Aa}
= \sqrt{q}\ket{0}_A\ket{0}_a + \sqrt{1-q}\ket{1}_A\ket{1}_a$
($\ket{\phi_q}_{Bb}$) with $0 \leq q \leq 1$, where the register $A$ is a
qubit system and the register $a$ is an optical mode with the vacuum
state $\ket{0}_a$ and the single-photon state $\ket{1}_a$. After they
send states to the intermediate station, Charlie at the intermediate
station is supposed to perform the single-photon interference of these
two signal pulses and then announces the measurement outcome for each of
two detectors: click or no-click. Then Alice and Bob each perform
the $X$-basis measurement on their local qubits with a probability $p_x$ or
the $Z$-basis measurement with a probability $1-p_x$. They generate keys
from rounds where they both choose the $X$ basis and where Charlie announces a successful measurement outcome, that is, having exactly one of two detectors click.

In the simulation, we consider a lossy channel, with the transmittance
$10^{-0.02 L}$, for the distance $L$ in kilometers between Alice and
Bob. We consider the symmetric scenario where Charlie is at an equal
distance away from Alice and Bob. We also consider detector
imperfections: each detector at Charlie's side has detector efficiency
$\eta_d = 14.5\%$ and dark count probability $p_d = 10^{-8}$. 
In instances of this protocol, data is generated as a function of:
$q$ that appears in the states $\ket{\phi_q}_{Aa}$ and
$\ket{\phi_q}_{Bb}$; the total distance $L$ in kilometers between Alice
and Bob; and the probability of choosing $X$ basis $p_x$. 
The instances of this test example are labeled as TFQKD($q, L, p_x$).   

\subsection{Discrete-Modulated Continuous-Variable \QKDp}\label{sect:example5}

The exact protocol description can be found in~\cite[Protocol
2]{Lin2019}. We use the same simulation method described
in~\cite[Equation (30)]{Lin2019} to generate the data $\gamma$. In this
protocol, Alice sends Bob one of four coherent states $\ket{\alpha
e^{i \theta_j}}$, where $\theta_j = \frac{j \pi}{2}$ for $j=0,1,2,3$. And
Bob performs the heterodyne measurement, i.e.,~measuring both $X$-
and $P$-quadratures after splitting the signal into two halves by a
$50/50$ beamsplitter. The first and second moments of $X$- and
$P$-quadratures are used to constrain $\rho$. The data simulation uses a
phase-invariant Gaussian channel with transmittance $\eta_t$ and excess
noise $\xi$ to generate those values. We use the same photon-number
cutoff assumption used there to truncate the infinite-dimensional
Hilbert space. For the calculation, it is typically sufficient to choose
$N_c \geq 10$ to minimize the effects of errors due to the truncation. 
For simplicity, we assume the detector at Bob's side is an ideal
detector. The channel transmittance, $\eta_t$, is related to the
transmission distance $L$ between Alice and Bob, by $\eta_t = 10^{-0.02 L}$.

Let $N_c$ be an integer that represents the cutoff photon number. Before
\FRp, the sizes of $\rho$ and $\cG(\rho)$ are
 $4(N_c+1)$ and $16(N_c+1)$, respectively. 
In \Cref{table:FINAL}, we label instances of this example as DMCV($N_c, L, \xi, \alpha$). 

When the noise $\xi = 0$, this problem can be solved analytically via
physical arguments. The detailed instructions for analytical calculation can be found in~\cite[Appendix C]{Lin2019}. We use this special case to demonstrate that our interior-point method can reproduce the analytical results to high precision. 

\subsection{Discrete-Phase-Randomized BB84}\label{sect:example6}

We consider the phase-encoding BB84 protocol with $c$ (a parameter) 
discrete global phases evenly spaced between [0, $2\pi$]
\cite{cao2015discrete}. A detailed protocol description can also be
found in~\cite[Sec. IV D]{george2020finite}. In particular, each of four
BB84 states is realized by a two-mode coherent state $\ket{\alpha
e^{i\theta}}_r\ket{\alpha e^{i(\theta+\phi_A)}}_s$, where the first mode
is the phase reference mode and the second mode encodes the private
information. In particular, the $\theta$ is a global phase that involves
discrete phase randomization, i.e.,~$\theta \in \{\frac{2\pi \ell
}{c}: \ell=0,\dots, c-1\}$. The relative phase for encoding is $\phi_A
\in \{\frac{j\pi}{2}: j=0,1,2,3\}$, where $\{0,\pi\}$ correspond to the $Z$
basis and $\{\frac{\pi}{2},\frac{3\pi}{2}\}$ correspond to the $X$ basis.

Data simulation is done in exactly the same way as in~\cite[Section IV
D]{george2020finite}, where we consider detector imperfections and a lossy channel with a misalignment error due to phase drift. 

We remark that the instances of this test example become more
challenging as one increases the number of discrete phases $c$, since
the size of $\rho$ is $12c$ and the size of $\cG(\rho)$ is $48c$ before
\FRp. In all instances of this protocol, we choose $p_z = 0.5$ and the data are simulated with detector efficiency $\eta_d = 0.045$, dark count probability $p_d = 8.5 \times 10^{-7}$ and relative phase drift of $11^{\circ}$. The final key rate values are presented by taking the error correction efficiency as 1.16. In the numerical tests, we label instances of this protocol as dprBB84($c, \alpha, L$).

\subsection{Additional Numerical Results}
\label{sec:additionalreport}

\begin{table}[h]
\centering
\tiny{
\begin{tabular}{|ccc||cc||cc||cc||cc|} \hline
\multicolumn{3}{|c||}{\textbf{Problem Data}} & \multicolumn{2}{|c||}{\textbf{Gauss-Newton}} & \multicolumn{2}{|c||}{\textbf{Frank-Wolfe with FR}} & \multicolumn{2}{|c||}{\textbf{Frank-Wolfe w/o FR}} & \multicolumn{2}{|c|}{\textbf{cvxquad with FR}}   \cr\hline
\multicolumn{1}{|c|}{\textbf{protocol}} & \multicolumn{1}{|c|}{\textbf{parameter}} & \multicolumn{1}{|c||}{\textbf{size}} & \multicolumn{1}{|c|}{\textbf{gap}} & \multicolumn{1}{|c||}{\textbf{time}} & \multicolumn{1}{|c|}{\textbf{gap}} & \multicolumn{1}{|c||}{\textbf{time}} & \multicolumn{1}{|c|}{\textbf{gap}} & \multicolumn{1}{|c||}{\textbf{time}} & \multicolumn{1}{|c|}{\textbf{gap}} & \multicolumn{1}{|c|}{\textbf{time}}   \cr\hline
ebBB84 & (0.50,0.01) & (4,16) &1.14e-12 & 0.42 & 5.96e-05 & 95.32 & 5.88e-05 & 99.09 & 6.37e-01 & 216.75  \cr 
ebBB84 & (0.50,0.03) & (4,16) &8.35e-13 & 0.20 & 6.37e-05 & 93.35 & 6.24e-05 & 99.47 & 5.88e-01 & 258.98  \cr 
ebBB84 & (0.50,0.05) & (4,16) &5.98e-13 & 0.18 & 1.01e-04 & 95.31 & 1.17e-04 & 101.36 & 5.46e-01 & 213.04  \cr 
ebBB84 & (0.50,0.07) & (4,16) &1.05e-12 & 0.19 & 1.66e-04 & 96.92 & 1.65e-04 & 100.46 & 5.07e-01 & 179.38  \cr 
ebBB84 & (0.50,0.09) & (4,16) &1.35e-12 & 0.18 & 1.43e-04 & 96.35 & 2.55e-04 & 100.64 & 4.70e-01 & 170.14  \cr 
ebBB84 & (0.70,0.01) & (4,16) &1.21e-13 & 0.21 & 7.60e-05 & 96.33 & 7.62e-05 & 99.14 & 7.06e-01 & 172.09  \cr 
ebBB84 & (0.70,0.03) & (4,16) &6.36e-13 & 0.18 & 9.16e-05 & 96.77 & 9.15e-05 & 99.78 & 6.59e-01 & 160.88  \cr 
ebBB84 & (0.70,0.05) & (4,16) &5.34e-13 & 0.18 & 1.67e-04 & 97.03 & 1.03e-04 & 100.73 & 6.14e-01 & 173.77  \cr 
ebBB84 & (0.70,0.07) & (4,16) &1.26e-12 & 0.20 & 1.80e-04 & 96.22 & 1.74e-04 & 100.71 & 5.70e-01 & 261.77  \cr 
ebBB84 & (0.70,0.09) & (4,16) &2.06e-13 & 0.18 & 3.85e-04 & 97.08 & 2.61e-04 & 101.55 & 5.26e-01 & 225.94  \cr 
ebBB84 & (0.90,0.01) & (4,16) &5.40e-13 & 0.17 & 1.02e-04 & 97.68 & 1.03e-04 & 98.45 & 8.73e-01 & 141.93  \cr 
ebBB84 & (0.90,0.03) & (4,16) &7.05e-13 & 0.21 & 1.25e-04 & 97.97 & 1.43e-04 & 99.50 & 8.27e-01 & 164.26  \cr 
ebBB84 & (0.90,0.05) & (4,16) &3.48e-13 & 0.18 & 1.48e-04 & 96.91 & 9.81e-05 & 100.99 & 7.83e-01 & 186.87  \cr 
ebBB84 & (0.90,0.07) & (4,16) &1.42e-12 & 0.17 & 2.71e-04 & 96.55 & 2.75e-04 & 101.79 & 7.39e-01 & 179.55  \cr 
ebBB84 & (0.90,0.09) & (4,16) &1.09e-12 & 0.21 & 3.51e-04 & 96.40 & 3.42e-04 & 101.99 & 6.94e-01 & 228.63  \cr 
 \hline
\end{tabular}

}
\caption{Numerical Report for ebBB84 Instances}
\label{table:table1}
\end{table}

\begin{table}
\centering
\tiny{
\begin{tabular}{|ccc||cc||cc||cc||cc|} \hline
\multicolumn{3}{|c||}{\textbf{Problem Data}} & \multicolumn{2}{|c||}{\textbf{Gauss-Newton}} & \multicolumn{2}{|c||}{\textbf{Frank-Wolfe with FR}} & \multicolumn{2}{|c||}{\textbf{Frank-Wolfe w/o FR}} & \multicolumn{2}{|c|}{\textbf{cvxquad with FR}}   \cr\hline
\multicolumn{1}{|c|}{\textbf{protocol}} & \multicolumn{1}{|c|}{\textbf{parameter}} & \multicolumn{1}{|c||}{\textbf{size}} & \multicolumn{1}{|c|}{\textbf{gap}} & \multicolumn{1}{|c||}{\textbf{time}} & \multicolumn{1}{|c|}{\textbf{gap}} & \multicolumn{1}{|c||}{\textbf{time}} & \multicolumn{1}{|c|}{\textbf{gap}} & \multicolumn{1}{|c||}{\textbf{time}} & \multicolumn{1}{|c|}{\textbf{gap}} & \multicolumn{1}{|c|}{\textbf{time}}   \cr\hline
pmBB84 & (0.50,0.01) & (8,32) &5.96e-13 & 0.38 & 6.19e-06 & 1.95 & 4.64e-04 & 25.18 & 6.30e-01 & 190.28  \cr 
pmBB84 & (0.50,0.03) & (8,32) &1.01e-12 & 0.18 & 1.71e-05 & 1.37 & 6.54e-04 & 90.80 & 5.74e-01 & 181.83  \cr 
pmBB84 & (0.50,0.05) & (8,32) &5.51e-13 & 0.19 & 1.12e-04 & 1.31 & 6.47e-04 & 1.94 & 5.26e-01 & 159.28  \cr 
pmBB84 & (0.50,0.07) & (8,32) &8.88e-14 & 0.16 & 5.89e-05 & 1.30 & 8.77e-04 & 1.98 & 4.81e-01 & 161.49  \cr 
pmBB84 & (0.50,0.09) & (8,32) &9.38e-13 & 0.19 & 6.71e-05 & 1.42 & 9.04e-04 & 2.03 & 4.40e-01 & 179.01  \cr 
pmBB84 & (0.70,0.01) & (8,32) &7.69e-13 & 0.25 & 7.62e-06 & 1.33 & 2.39e-04 & 130.73 & 7.03e-01 & 213.43  \cr 
pmBB84 & (0.70,0.03) & (8,32) &4.75e-13 & 0.18 & 2.38e-05 & 1.32 & 2.68e-04 & 133.52 & 6.51e-01 & 246.50  \cr 
pmBB84 & (0.70,0.05) & (8,32) &6.16e-13 & 0.17 & 3.52e-05 & 1.37 & 3.37e-04 & 134.59 & 6.04e-01 & 281.84  \cr 
pmBB84 & (0.70,0.07) & (8,32) &6.30e-13 & 0.19 & 7.43e-05 & 1.26 & 3.04e-04 & 141.97 & 5.60e-01 & 259.33  \cr 
pmBB84 & (0.70,0.09) & (8,32) &8.47e-13 & 0.16 & 9.25e-05 & 1.29 & 3.55e-04 & 7.07 & 5.18e-01 & 297.32  \cr 
pmBB84 & (0.90,0.01) & (8,32) &3.68e-13 & 0.18 & 7.07e-06 & 1.33 & 3.27e-04 & 4.98 & 8.60e-01 & 247.82  \cr 
pmBB84 & (0.90,0.03) & (8,32) &1.27e-12 & 0.18 & 2.29e-05 & 1.35 & 5.43e-04 & 137.74 & 7.96e-01 & 230.35  \cr 
pmBB84 & (0.90,0.05) & (8,32) &1.36e-12 & 0.16 & 4.62e-05 & 1.35 & 5.96e-04 & 72.04 & 7.38e-01 & 291.83  \cr 
pmBB84 & (0.90,0.07) & (8,32) &5.13e-13 & 0.17 & 7.31e-05 & 1.35 & 6.25e-04 & 40.11 & 6.84e-01 & 235.27  \cr 
pmBB84 & (0.90,0.09) & (8,32) &7.84e-13 & 0.19 & 1.06e-04 & 1.32 & 7.39e-04 & 142.32 & 6.32e-01 & 244.81  \cr 
 \hline
\end{tabular}

}
\caption{Numerical Report for pmBB84 Instances}
\label{table:table2}
\end{table}

\begin{table}
\centering
\tiny{
\begin{tabular}{|ccc||cc||cc||cc||cc|} \hline
\multicolumn{3}{|c||}{\textbf{Problem Data}} & \multicolumn{2}{|c||}{\textbf{Gauss-Newton}} & \multicolumn{2}{|c||}{\textbf{Frank-Wolfe with FR}} & \multicolumn{2}{|c||}{\textbf{Frank-Wolfe w/o FR}} & \multicolumn{2}{|c|}{\textbf{cvxquad with FR}}   \cr\hline
\multicolumn{1}{|c|}{\textbf{protocol}} & \multicolumn{1}{|c|}{\textbf{parameter}} & \multicolumn{1}{|c||}{\textbf{size}} & \multicolumn{1}{|c|}{\textbf{gap}} & \multicolumn{1}{|c||}{\textbf{time}} & \multicolumn{1}{|c|}{\textbf{gap}} & \multicolumn{1}{|c||}{\textbf{time}} & \multicolumn{1}{|c|}{\textbf{gap}} & \multicolumn{1}{|c||}{\textbf{time}} & \multicolumn{1}{|c|}{\textbf{gap}} & \multicolumn{1}{|c|}{\textbf{time}}   \cr\hline
mdiBB84 & (0.50,0.01) & (48,96) &1.25e-12 & 1.27 & 1.64e-05 & 109.60 & 3.75e-04 & 498.82 & 2.11e-01 & 719.99  \cr 
mdiBB84 & (0.50,0.03) & (48,96) &8.37e-13 & 0.81 & 3.53e-05 & 107.90 & 5.31e-04 & 2811.15 & 1.95e-01 & 630.89  \cr 
mdiBB84 & (0.50,0.05) & (48,96) &1.14e-12 & 0.83 & 4.99e-05 & 111.08 & 5.22e-04 & 464.06 & 1.82e-01 & 586.22  \cr 
mdiBB84 & (0.50,0.07) & (48,96) &1.35e-12 & 1.08 & 4.87e-05 & 335.57 & 4.60e-04 & 2169.93 & 1.71e-01 & 569.14  \cr 
mdiBB84 & (0.50,0.09) & (48,96) &1.25e-12 & 0.91 & 8.27e-05 & 342.04 & 4.37e-04 & 829.61 & 1.60e-01 & 568.97  \cr 
mdiBB84 & (0.70,0.01) & (48,96) &1.20e-12 & 0.68 & 2.20e-05 & 5.68 & 5.53e-04 & 901.29 & 3.79e-01 & 719.27  \cr 
mdiBB84 & (0.70,0.03) & (48,96) &4.24e-13 & 1.05 & 1.11e-04 & 5.83 & 1.21e-03 & 256.79 & 3.55e-01 & 670.49  \cr 
mdiBB84 & (0.70,0.05) & (48,96) &1.06e-12 & 1.10 & 6.92e-05 & 340.21 & 1.75e-03 & 865.68 & 3.31e-01 & 651.44  \cr 
mdiBB84 & (0.70,0.07) & (48,96) &5.71e-13 & 0.95 & 1.37e-04 & 337.17 & 1.55e-03 & 864.25 & 3.09e-01 & 604.94  \cr 
mdiBB84 & (0.70,0.09) & (48,96) &1.57e-13 & 0.92 & 1.61e-04 & 347.21 & 2.24e-03 & 872.37 & 2.88e-01 & 604.71  \cr 
mdiBB84 & (0.90,0.01) & (48,96) &8.44e-13 & 0.66 & 4.42e-05 & 343.38 & 3.21e-03 & 175.20 & 5.53e-01 & 710.65  \cr 
mdiBB84 & (0.90,0.03) & (48,96) &1.39e-12 & 0.90 & 9.15e-05 & 343.08 & 3.66e-03 & 301.29 & 5.16e-01 & 671.95  \cr 
mdiBB84 & (0.90,0.05) & (48,96) &9.88e-13 & 0.84 & 1.73e-04 & 345.79 & 4.64e-03 & 519.80 & 4.85e-01 & 644.23  \cr 
mdiBB84 & (0.90,0.07) & (48,96) &2.96e-13 & 1.05 & 2.04e-04 & 338.15 & 2.85e-03 & 148.24 & 4.57e-01 & 580.20  \cr 
mdiBB84 & (0.90,0.09) & (48,96) &5.21e-13 & 1.02 & 2.52e-04 & 343.95 & 3.26e-03 & 193.04 & 4.31e-01 & 595.07  \cr 
 \hline
\end{tabular}

}
\caption{Numerical Report for mdiBB84 Instances}
\label{table:table3}
\end{table}

\begin{table}
\centering
\tiny{
\begin{tabular}{|ccc||cc||cc||cc|} \hline
\multicolumn{3}{|c||}{\textbf{Problem Data}} & \multicolumn{2}{|c||}{\textbf{Gauss-Newton}} & \multicolumn{2}{|c||}{\textbf{Frank-Wolfe with FR}} & \multicolumn{2}{|c|}{\textbf{Frank-Wolfe without FR}}   \cr\hline
\multicolumn{1}{|c|}{\textbf{protocol}} & \multicolumn{1}{|c|}{\textbf{parameter}} & \multicolumn{1}{|c||}{\textbf{size}} & \multicolumn{1}{|c|}{\textbf{gap}} & \multicolumn{1}{|c||}{\textbf{time}} & \multicolumn{1}{|c|}{\textbf{gap}} & \multicolumn{1}{|c||}{\textbf{time}} & \multicolumn{1}{|c|}{\textbf{gap}} & \multicolumn{1}{|c|}{\textbf{time}}   \cr\hline
TFQKD & (0.75,50.00,0.70) & (12,24) &8.45e-13 & 1.33 & 2.72e-09 & 2.04 & 1.83e-03 & 155.08  \cr 
TFQKD & (0.75,100.00,0.70) & (12,24) &1.42e-12 & 0.86 & 3.75e-09 & 1.35 & 1.53e-03 & 153.57  \cr 
TFQKD & (0.75,150.00,0.70) & (12,24) &9.94e-13 & 0.78 & 2.82e-09 & 1.39 & 7.82e-04 & 162.13  \cr 
TFQKD & (0.75,200.00,0.70) & (12,24) &1.15e-12 & 1.02 & 3.98e-09 & 1.37 & 7.19e-04 & 155.11  \cr 
TFQKD & (0.75,250.00,0.70) & (12,24) &6.79e-13 & 0.79 & 8.21e-09 & 1.32 & 3.14e-04 & 172.53  \cr 
TFQKD & (0.80,50.00,0.70) & (12,24) &1.15e-12 & 0.77 & 2.82e-09 & 1.39 & 1.52e-03 & 156.17  \cr 
TFQKD & (0.80,100.00,0.70) & (12,24) &1.15e-12 & 1.08 & 2.60e-09 & 1.33 & 1.57e-03 & 156.11  \cr 
TFQKD & (0.80,150.00,0.70) & (12,24) &1.25e-12 & 0.94 & 2.97e-09 & 1.29 & 8.30e-04 & 158.12  \cr 
TFQKD & (0.80,200.00,0.70) & (12,24) &9.23e-13 & 0.73 & 4.23e-09 & 1.32 & 5.60e-04 & 155.45  \cr 
TFQKD & (0.80,250.00,0.70) & (12,24) &3.91e-13 & 0.55 & 2.22e-09 & 1.35 & 1.97e-04 & 164.83  \cr 
TFQKD & (0.90,50.00,0.70) & (12,24) &8.08e-13 & 0.79 & 4.30e-09 & 1.30 & 1.55e-03 & 156.38  \cr 
TFQKD & (0.90,100.00,0.70) & (12,24) &1.38e-12 & 0.37 & 3.62e-09 & 1.31 & 1.23e-03 & 154.08  \cr 
TFQKD & (0.90,150.00,0.70) & (12,24) &1.14e-12 & 0.61 & 2.87e-09 & 1.28 & 6.02e-04 & 161.73  \cr 
TFQKD & (0.90,200.00,0.70) & (12,24) &1.04e-12 & 0.34 & 3.98e-09 & 1.36 & 1.68e-04 & 2.63  \cr 
TFQKD & (0.90,250.00,0.70) & (12,24) &5.77e-13 & 0.43 & 2.77e-09 & 1.37 & 1.08e-05 & 2.07  \cr 
TFQKD & (0.95,50.00,0.70) & (12,24) &9.84e-13 & 0.77 & 4.00e-09 & 1.43 & 1.38e-03 & 156.09  \cr 
TFQKD & (0.95,100.00,0.70) & (12,24) &1.06e-12 & 0.70 & 4.36e-09 & 1.41 & 7.47e-04 & 153.19  \cr 
TFQKD & (0.95,150.00,0.70) & (12,24) &1.36e-12 & 0.71 & 3.92e-09 & 1.35 & 8.38e-04 & 149.27  \cr 
TFQKD & (0.95,200.00,0.70) & (12,24) &7.63e-13 & 0.65 & 2.60e-09 & 1.36 & 3.13e-04 & 150.75  \cr 
TFQKD & (0.95,250.00,0.70) & (12,24) &1.02e-12 & 0.64 & 3.61e-09 & 1.42 & 5.56e-06 & 1.81  \cr 
 \hline
\end{tabular}

}
\caption{Numerical Report for TFQKD Instances}
\label{table:table4}
\end{table}

\begin{table}
\centering
\tiny{
\begin{tabular}{|ccc||cc||cc||cc|} \hline
\multicolumn{3}{|c||}{\textbf{Problem Data}} & \multicolumn{2}{|c||}{\textbf{Gauss-Newton}} & \multicolumn{2}{|c||}{\textbf{Frank-Wolfe with FR}} & \multicolumn{2}{|c|}{\textbf{Frank-Wolfe without FR}}   \cr\hline
\multicolumn{1}{|c|}{\textbf{protocol}} & \multicolumn{1}{|c|}{\textbf{parameter}} & \multicolumn{1}{|c||}{\textbf{size}} & \multicolumn{1}{|c|}{\textbf{gap}} & \multicolumn{1}{|c||}{\textbf{time}} & \multicolumn{1}{|c|}{\textbf{gap}} & \multicolumn{1}{|c||}{\textbf{time}} & \multicolumn{1}{|c|}{\textbf{gap}} & \multicolumn{1}{|c|}{\textbf{time}}   \cr\hline
DMCV & (10.00,60.00,0.05,0.35) & (44,176) &2.71e-09 & 1016.19 & 4.35e-06 & 612.62 & 3.57e-06 & 919.84  \cr 
DMCV & (10.00,120.00,0.05,0.35) & (44,176) &2.70e-09 & 1090.41 & 2.27e-06 & 216.61 & 2.16e-06 & 277.47  \cr 
DMCV & (10.00,180.00,0.05,0.35) & (44,176) &2.87e-09 & 1173.14 & 1.90e-07 & 23.64 & 1.36e-07 & 32.50  \cr 
DMCV & (11.00,60.00,0.05,0.35) & (48,192) &3.05e-09 & 1419.71 & 2.50e-06 & 768.52 & 1.43e-06 & 1095.78  \cr 
DMCV & (11.00,120.00,0.05,0.35) & (48,192) &3.24e-09 & 1484.12 & 2.35e-06 & 261.89 & 2.15e-06 & 380.63  \cr 
DMCV & (11.00,180.00,0.05,0.35) & (48,192) &3.40e-09 & 1796.16 & 2.53e-07 & 26.98 & 1.59e-07 & 38.62  \cr 
DMCV & (10.00,150.00,0.02,0.70) & (44,176) &2.07e-09 & 1143.23 & 1.67e-06 & 83.32 & 1.82e-06 & 99.96  \cr 
DMCV & (10.00,200.00,0.02,0.70) & (44,176) &1.96e-09 & 1214.62 & 7.29e-07 & 21.26 & 7.06e-07 & 28.04  \cr 
DMCV & (10.00,150.00,0.02,0.80) & (44,176) &3.27e-09 & 1113.17 & 1.10e-06 & 105.03 & 1.85e-06 & 106.68  \cr 
DMCV & (10.00,200.00,0.02,0.80) & (44,176) &1.63e-09 & 1106.38 & 3.18e-07 & 19.61 & 2.90e-07 & 26.11  \cr 
DMCV & (11.00,150.00,0.02,0.70) & (48,192) &3.31e-09 & 1607.78 & 1.72e-06 & 103.49 & 1.36e-06 & 150.76  \cr 
DMCV & (11.00,200.00,0.02,0.70) & (48,192) &3.05e-09 & 1573.12 & 7.12e-07 & 27.40 & 6.68e-07 & 35.62  \cr 
DMCV & (11.00,150.00,0.02,0.80) & (48,192) &3.37e-09 & 1597.15 & 1.36e-06 & 93.86 & 1.57e-06 & 115.21  \cr 
DMCV & (11.00,200.00,0.02,0.80) & (48,192) &3.38e-09 & 1541.27 & 3.38e-07 & 25.34 & 2.99e-07 & 34.36  \cr 
 \hline
\end{tabular}

}
\caption{Numerical Report for DMCV Instances}
\label{table:table5}
\end{table}

\begin{table}
\centering
\tiny{
\begin{tabular}{|ccc||cc||cc||cc|} \hline
\multicolumn{3}{|c||}{\textbf{Problem Data}} & \multicolumn{2}{|c||}{\textbf{Gauss-Newton}} & \multicolumn{2}{|c||}{\textbf{Frank-Wolfe with FR}} & \multicolumn{2}{|c|}{\textbf{Frank-Wolfe without FR}}   \cr\hline
\multicolumn{1}{|c|}{\textbf{protocol}} & \multicolumn{1}{|c|}{\textbf{parameter}} & \multicolumn{1}{|c||}{\textbf{size}} & \multicolumn{1}{|c|}{\textbf{gap}} & \multicolumn{1}{|c||}{\textbf{time}} & \multicolumn{1}{|c|}{\textbf{gap}} & \multicolumn{1}{|c||}{\textbf{time}} & \multicolumn{1}{|c|}{\textbf{gap}} & \multicolumn{1}{|c|}{\textbf{time}}   \cr\hline
dprBB84 & (1.00,0.08,15.00) & (12,48) &9.42e-13 & 1.66 & 3.85e-06 & 117.56 & 1.03e-04 & 189.23  \cr 
dprBB84 & (1.00,0.08,30.00) & (12,48) &4.92e-13 & 1.70 & 3.85e-06 & 114.97 & 9.43e-05 & 178.82  \cr 
dprBB84 & (1.00,0.14,15.00) & (12,48) &2.96e-13 & 0.91 & 3.63e-04 & 115.55 & 1.16e-02 & 178.91  \cr 
dprBB84 & (1.00,0.14,30.00) & (12,48) &5.21e-13 & 1.43 & 2.60e-04 & 1.89 & 8.31e-03 & 2.51  \cr 
dprBB84 & (2.00,0.08,15.00) & (24,96) &1.10e-12 & 22.06 & 9.58e-05 & 41.11 & 2.11e-03 & 82.96  \cr 
dprBB84 & (2.00,0.08,30.00) & (24,96) &9.58e-13 & 22.26 & 1.17e-04 & 26.10 & 7.32e-06 & 114.66  \cr 
dprBB84 & (2.00,0.14,15.00) & (24,96) &1.35e-12 & 24.93 & 1.89e-05 & 7.56 & 5.11e-04 & 16.59  \cr 
dprBB84 & (2.00,0.14,30.00) & (24,96) &1.04e-12 & 24.63 & 5.71e-06 & 20.79 & 5.38e-06 & 42.06  \cr 
dprBB84 & (2.00,0.14,30.00) & (24,96) &1.04e-12 & 24.82 & 5.71e-06 & 20.19 & 5.38e-06 & 44.72  \cr 
dprBB84 & (3.00,0.08,15.00) & (36,144) &1.38e-12 & 129.39 & 2.36e-04 & 12.94 & 7.41e-03 & 33.59  \cr 
dprBB84 & (3.00,0.08,30.00) & (36,144) &6.33e-13 & 139.34 & 2.26e-04 & 12.54 & 7.04e-03 & 36.53  \cr 
dprBB84 & (3.00,0.14,15.00) & (36,144) &1.30e-12 & 118.84 & 4.32e-05 & 41.49 & 1.43e-04 & 50.07  \cr 
dprBB84 & (3.00,0.14,30.00) & (36,144) &3.32e-13 & 127.94 & 5.80e-06 & 11.32 & 5.74e-06 & 37.63  \cr 
dprBB84 & (4.00,0.08,15.00) & (48,192) &6.98e-09 & 766.17 & 2.88e-04 & 61.19 & 8.10e-03 & 235.02  \cr 
dprBB84 & (4.00,0.08,30.00) & (48,192) &2.13e-09 & 786.01 & 2.97e-04 & 21.23 & 8.45e-03 & 232.35  \cr 
dprBB84 & (4.00,0.14,15.00) & (48,192) &2.85e-12 & 539.08 & 1.29e-04 & 18.50 & 3.85e-03 & 201.96  \cr 
dprBB84 & (4.00,0.14,30.00) & (48,192) &1.17e-12 & 545.04 & 1.26e-04 & 25.67 & 3.73e-03 & 209.05  \cr 
 \hline
\end{tabular}

}
\caption{Numerical Report for dprBB84 Instances}
\label{table:table6}
\end{table}

\FloatBarrier

	We remark on the behavior of the Frank-Wolfe method when used 
	without \FRp.  \Cref{table:FINAL} shows 
	that two instances of the TFQKD protocol have vastly different running times.
	Similar situations without \FR also occur in the tables
	presented here.
	The Frank-Wolfe method that is used and described in~\cite{winick2017reliable} adopts a two-step procedure.
	The running time of this method is mainly dependent
	on the number of iterations, since each iteration solves
	a linear \SDP problem using CVX. There are two stopping conditions:
	the first depends on the value of the
	gradient information; the second is the
	maximum number of iterations, currently set at $300$. 
	For the instances with
	short running time, termination typically occurred due to the
	gradient condition;
	the other instances stopped after reaching the maximum number of
	iterations. A likely reason for early termination is
	the perturbation approach used to preserve 
	positive definiteness of
	the matrices involved in the objective function. We emphasize that early 
	termination still leads to reliable (but pessimistically lower) key rates. More explanations can be found in ~\cite{winick2017reliable}. 
	As discussed in this paper, a general  motivation for \FR
	is to avoid numerical instability associated with such perturbation 
	approaches, i.e.,~without \FRp ,
	we typically have the erratic behavior due to ill-posedness of the problems.

\newpage

\cleardoublepage
\phantomsection
\addcontentsline{toc}{section}{Index}
\printindex

\label{ind:index}

\newpage

\cleardoublepage
\phantomsection
\addcontentsline{toc}{section}{Bibliography}
\bibliographystyle{unsrtnat}
\bibliography{QKD_report}

\end{document}